\newtheorem{theorem}{Theorem}
\newtheorem{corollary}{Corollary}
\newtheorem{lemma}{Lemma}
\newtheorem{proposition}{Proposition}
\newenvironment{proof}[1][Proof]{\textbf{#1.} }{\ \rule{0.5em}{0.5em}}
\DeclareMathOperator{\snr}{SNR}
\begin{document}

\title{On Rate-Splitting by a Secondary Link in Multiple Access Primary Network} 

\author{
\IEEEauthorblockN{John Tadrous and Mohammed Nafie}
\thanks{Authors are with the Wireless Intelligent Networks Center (WINC), Nile University, Cairo, Egypt.\newline \indent E-mail: john.tadrous@nileu.edu.eg, mnafie@nileuniversity.edu.eg}}
\maketitle

\begin{abstract}
An achievable rate region is obtained for a primary multiple access network coexisting with a secondary link of one transmitter and a corresponding receiver. The rate region depicts the sum primary rate versus the secondary rate and is established assuming that the secondary link performs rate-splitting. The achievable rate region is the union of two types of achievable rate regions. The first type is a rate region established assuming that the secondary receiver cannot decode any primary signal, whereas the second is established assuming that the secondary receiver can decode the signal of one primary receiver. The achievable rate region is determined first assuming discrete memoryless channel (DMC) then the results are applied to a Gaussian channel. In the Gaussian channel, the performance of rate-splitting is characterized for the two types of rate regions. Moreover, a necessary and sufficient condition to determine which primary signal that the secondary receiver can decode without degrading the range of primary achievable sum rates is provided. When this condition is satisfied by a certain primary user, the secondary receiver can decode its signal and achieve larger rates without reducing the primary achievable sum rates from the case in which it does not decode any primary signal. It is also shown that, the probability of having at least one primary user satisfying this condition grows with the primary signal to noise ratio.

\end{abstract}

\begin{IEEEkeywords}
Rate-splitting, Cognitive radios, Discrete memoryless channels.
\end{IEEEkeywords}

\section{Introduction}
\IEEEPARstart{A} {potential} benefit of allowing secondary users to share primary bands is the enhancement of the spectrum utilization. As introduced in \cite{Mitola} and \cite{Ian}, cognitive radios, or secondary users, are frequency agile devices that can utilize unused spectrum bands through dynamic spectrum access. In dynamic spectrum access secondary users should sense the spectrum and identify unused bands, or spectrum holes. If a band is sensed and found to be in low use by primary users, i.e., underutilized, a secondary user may opportunistically access this band by adjusting its transmit parameters to fully utilize this band without causing excessive interference on the primary users. However, a secondary user has to leave this band and switch to another if the demand by primary users increases.

The notion of dynamic spectrum access has opened research in different problems regarding the new functionalities that a secondary user should perform, e.g., spectrum sensing, spectrum sharing, spectrum mobility and spectrum management \cite{Ian} and \cite{Gridlock}. Moreover, information theoretic bounds on potential achievable rates by cognitive radio networks are being investigated. In most of those works cooperation between primary and secondary transmitters is considered. In \cite{Achievable} an achievable rate region of primary versus secondary users' rates is introduced when a cognitive transmitter has full knowledge of the primary message in a two-transmitter two-receiver interference channel and the primary user cooperates with the secondary link through rate-splitting introduced in \cite{HanKob}. In \cite{Partial} and \cite{Full} the notion of conferencing is introduced for the interference channel where the cognitive link is assumed to know part or all of the message of the primary transmitter.

In this paper we consider a multiple access channel (MAC) of two transmitters and a common receiver shared by a secondary link of single transmitter and a corresponding receiver. The secondary transmitter is assumed to employ rate-splitting by dividing its signal into two parts: one part is decodable by the secondary receiver and treated as noise by the primary receiver, whereas the other part is decodable at both receivers. Based on this scheme we:
\begin{itemize}
	\item Establish an achievable rate region, $\mathcal{R}^o$, for the primary sum rate versus the secondary rate in a discrete memoryless channel (DMC) setup assuming that all of the primary signals are treated as noise at the secondary receiver.
	\item Establish another achievable rate region, $\mathcal{R}^r_i$, for which the signal of primary transmitter $i$ is to be fully decodable at the secondary receiver besides being decodable at the primary receiver. For this scheme we show that there exists a case for which $\mathcal{R}^r_i$ includes $\mathcal{R}^o$.
	\item Provide an overall achievable rate region $$\mathcal{R}=\mathcal{R}^o\bigcup\left(\cup_{i\in\{1,2\}}\mathcal{R}_i^r\right).$$ 
	\item Apply the results obtained in DMC case in a Gaussian setup where the effect of rate-splitting on the achievable rate region is analyzed. A necessary and sufficient condition is established for obtaining the overall rate region without rate-splitting.  
	\item Derive a necessary and sufficient condition so that the secondary receiver can decode the signal of one primary user without affecting the range of achievable primary sum rates, but only enhances the range of achievable secondary rates. We call this condition \emph{primary decodability condition for Gaussian (PDCG)}  channel.
	\item Show, numerically, that the probability of having at least one primary user satisfying PDCG monotonically increases with the signal-noise-ratio of the primary users.  
\end{itemize}
We have provided some of the results in this paper in a conference paper version \cite{ICT}.
The introduced network model of MAC primary network shared by secondary operations has been addressed in some resource allocation frameworks without rate-splitting by secondary users \cite{Chandramouli}-\cite{Globecom}. Rate-splitting by a secondary link, however, has been introduced in \cite{Opportunistic} where the secondary user is assumed to know the codebook of a primary transmitter and opportunistically splits its rate into two parts and decodes it in the following way. It decodes the first part treating both the primary signal and the second part as noise, decodes and cancels the primary signal and then decodes the second part. This scheme is generalized in this paper as we consider the cases when the signal of one primary transmitter is decodable at the secondary receiver and when all the primary signals are treated as noise.


The rest of this paper is organized as follows. In Section \ref{sec:Pre} the discrete memoryless channel (DMC) models are defined. In Section \ref{sec:ARR} the achievable rate regions are established for the defined DMC models. Then, obtained results are applied in a Gaussian channel setup in Section \ref{sec:Gauss} and the paper is conncluded in Section \ref{sec:conc}.  

\section{Channel Model}
\label{sec:Pre}
In our formulation we denote random variables by $X$, $Y$, $\cdots$ with realizations $x$, $y$, $\cdots$ from sets $\mathcal{X}$, $\mathcal{Y}$, $\cdots$ respectively. The communication channel is considered to be discrete and memoryless.
\subsection{Basic Channel Model}
We consider a basic channel $C_B$ defined by a tuple $(\mathcal{X}_1,\mathcal{X}_2,\mathcal{X}_s,\omega,\mathcal{Y}_p,\mathcal{Y}_s)$, where $\mathcal{X}_1$, $\mathcal{X}_2$ are two finite input alphabet sets of the primary transmitters and $\mathcal{X}_s$ is a finite input alphabet set of the secondary transmitter. Sets $\mathcal{Y}_p$ and $\mathcal{Y}_s$ are two finite output alphabet sets at the primary and secondary receivers respectively, and $\omega$ is a collection of conditional channel probabilities $\omega(y_py_s|x_1x_2x_s)$ of $(y_p,y_s)\in{\mathcal{Y}_p\times{\mathcal{Y}_s}}$ given $(x_1,x_2,x_s)\in{\mathcal{X}_1\times\mathcal{X}_2\times\mathcal{X}_s}$, with marginal conditional distributions:
\begin{eqnarray*}
\omega_p(y_p|x_1x_2x_s)=\sum_{y_s\in \mathcal{Y}_s}{\omega(y_py_s|x_1x_2x_s)}, \\
\omega_s(y_s|x_1x_2x_s)=\sum_{y_p\in \mathcal{Y}_p}{\omega(y_py_s|x_1x_2x_s)}.
\end{eqnarray*}
Since the channel is memoryless, the conditional probability $\omega^n(\textbf{y}_p\textbf{y}_s|\textbf{x}_1\textbf{x}_2\textbf{x}_s)$ is given by
\begin{equation*}
\label{eq:cond_prob}
\omega^n(\textbf{y}_p\textbf{y}_s|\textbf{x}_1\textbf{x}_2\textbf{x}_s)=\prod_{t=1}^{n}{\omega(y_p^{(t)}y_s^{(t)}|x_1^{(t)}x_2^{(t)}x_s^{(t)})},
\end{equation*}
where
\begin{eqnarray*}
\label{eq:definitions}
		\textbf{x}_a=&(x_a^{(1)},\cdots,x_a^{(n)})\in{\mathcal{X}_a^n},&a=1,2,s,\\	
		\textbf{y}_a=&(y_a^{(1)},\cdots,y_a^{(n)})\in{\mathcal{Y}_a^n},& a=p,s.
\end{eqnarray*}
The same also holds for the marginal conditional distributions $\omega^n_p(\textbf{y}_p|\textbf{x}_1\textbf{x}_2\textbf{x}_s)$ and $\omega^n_s(\textbf{y}_s|\textbf{x}_1\textbf{x}_2\textbf{x}_s)$.
Let $\mathcal{M}_1=\{1,\cdots,M_1\}$, $\mathcal{M}_2=\{1,\cdots,M_2\}$ be message sets for primary transmitters 1 and 2 respectively, and $\mathcal {M}_s=\{1,\cdots,M_s\}$ be a message set for the secondary transmitter. A code $(n,M_1,M_2,M_s,\epsilon)$ is a collection of $M_1$, $M_2$ and $M_s$ codewords such that:
\begin{enumerate}
	\item Sender $a$, $a=1,2,s$, has an encoding function $\phi_a: i\rightarrow \textbf{x}_{ai}$, $i\in\mathcal{M}_a$ and $\textbf{x}_{ai}\in\mathcal{X}^n$.
	\item The primary receiver has $M_1M_2$ disjoint decoding sets $\mathcal{D}_{pij}\subseteq{\mathcal{Y}_p^n}$, $ij\in\mathcal{M}_1\times\mathcal{M}_2$, and a decoding function $\psi_p: \textbf{y}_p\rightarrow ij$ if $\textbf{y}_p\in\mathcal{D}_{pij}$, where $ij\in\mathcal{M}_1\times\mathcal{M}_2$.
	\item The secondary receiver has $M_s$ disjoint decoding sets $\mathcal{D}_{sk}\subseteq{\mathcal{Y}_s^n}$, $k\in\mathcal{M}_s$, and a decoding function $\psi_s: \textbf{y}_s\rightarrow{k}$ if $\textbf{y}_s\in\mathcal{D}_{sk}$, where $k\in\mathcal{M}_s$ (see Fig.\ref{fig:BModel}).
	\item Probability of error for the primary network and the secondary link are less than $\epsilon$, that is, $Pe_p\leq{\epsilon}$ and $Pe_s\leq{\epsilon}$ respectively, where
	\begin{eqnarray}
\label{eq:EprobCBp}	Pe_p=\frac{1}{M_1M_2M_s}\sum_{i,j,k}\omega_p^n(\textbf{y}_p\notin\mathcal{D}_{pij}|\textbf{x}_{1i}\textbf{x}_{2j}\textbf{x}_{sk}),\\
\label{eq:EprobCBs}
Pe_s=\frac{1}{M_1M_2M_s}\sum_{i,j,k}\omega_s^n(\textbf{y}_s\notin\mathcal{D}_{sk}|\textbf{x}_{1i}\textbf{x}_{2j}\textbf{x}_{sk}).
	\end{eqnarray}
\end{enumerate}
\begin{figure}
	\centering
		\includegraphics[width=0.45\textwidth]{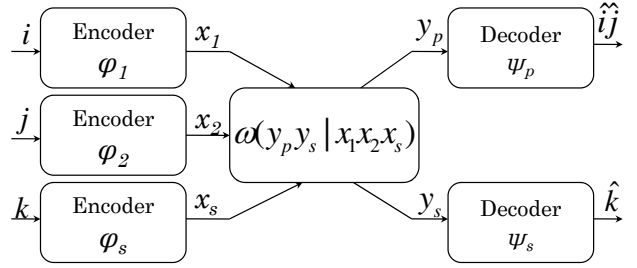}
	\caption{Basic channel model $C_B$}
	\label{fig:BModel}
\end{figure}

A rate tuple $(R_1,R_2,R_s)$ of nonnegative real values is achievable if for any $\eta>0$, $0<\epsilon<1$ there exists a code such that
\begin{equation}
\label{eq:R_Bineq}
\frac{1}{n}\log{M_a}\geq{R_a-\eta}, \   \ a=1,2,s,
\end{equation} 
with sufficiently large $n$.
\subsection{Rate-Splitting Channel}
Rate-splitting channel, $C_{RS}$, is a modified version of the basic channel $C_B$, where $C_{RS}$ is defined by a tuple $(\mathcal{X}_1,\mathcal{X}_2,\mathcal{X}_s,\omega,\mathcal{Y}_p,\mathcal{Y}_s)$ with its elements are as defined in $C_B$. Moreover, the input message sets for the primary transmitters are also $\mathcal{M}_1$ and $\mathcal{M}_2$ exactly as in $C_B$. However, the secondary user is assumed to have two finite message sets $\mathcal{L}_s=\{1,\cdots,L_s\}$, $\mathcal{N}_s=\{1,\cdots,N_s\}$. Hence, a code $(n,M_1,M_2,L_s,N_s,\epsilon)$ over the channel $C_{RS}$ is a collection of $M_1$, $M_2$, $L_sN_s$ codewords such that:
\begin{enumerate}
	\item Primary transmitter $a$, $a=1,2$, has an encoding function $\phi_a: i\rightarrow \textbf{x}_{ai}$, $i\in\mathcal{M}_a$, $\textbf{x}_{ai}\in\mathcal{X}_a^n$.
	\item The secondary transmitter has an encoding function $\phi_s: kl\rightarrow\textbf{x}_{skl}$, $kl\in\mathcal{L}_s\times\mathcal{N}_s$, $\textbf{x}_{skl}\in\mathcal{X}_s^n$.
	\item The primary receiver has $M_1M_2N_s$ disjoint decoding sets $\mathcal{D}_{pijl}\subseteq{\mathcal{Y}_p^n}$, $ijl\in\mathcal{M}_1\times\mathcal{M}_2\times\mathcal{N}_s$ and a decoding function $\psi_p: \textbf{y}_p\rightarrow ijl$ if $\textbf{y}_p\in{\mathcal{D}_{pijl}}$, where $ijl\in\mathcal{M}_1\times\mathcal{M}_2\times\mathcal{N}_s$.
	\item The secondary receiver has $L_sN_s$ disjoint decoding sets $\mathcal{D}_{skl}\subseteq{\mathcal{Y}_s^n}$, $kl\in{\mathcal{L}_s\times\mathcal{N}_s}$, and a decoding function $\psi_s: \textbf{y}_p\rightarrow kl$ if $\textbf{y}_p\in\mathcal{D}_{skl}$, where $kl\in\mathcal{L}_s\times\mathcal{N}_s$ (see Fig.\ref{fig:RSModel}).
	\item Probability of error for primary network and secondary link are less than $\epsilon$, that is $Pe_p^o\leq{\epsilon}$ and $Pe_s^o\leq{\epsilon}$ respectively, where
	\begin{multline}
	\label{eq:EprobCRSp}
Pe_p^o=\\	\frac{1}{M_1M_2L_sN_s}\sum_{i,j,k,l}\omega_p^n(\textbf{y}_p\notin\mathcal{D}_{pijl}|\textbf{x}_{1i}\textbf{x}_{2j}\textbf{x}_{skl}),
	\end{multline}
	\begin{multline}
	\label{eq:EprobCRSs}
Pe_s^o=\\	\frac{1}{M_1M_2L_sN_s}\sum_{i,j,k,l}\omega_s^n(\text{y}_s\notin\mathcal{D}_{skl}|\textbf{x}_{1i}\textbf{x}_{2j}\textbf{x}_{skl}).
	\end{multline} 
\end{enumerate}

\begin{figure}
	\centering
		\includegraphics[width=0.45\textwidth]{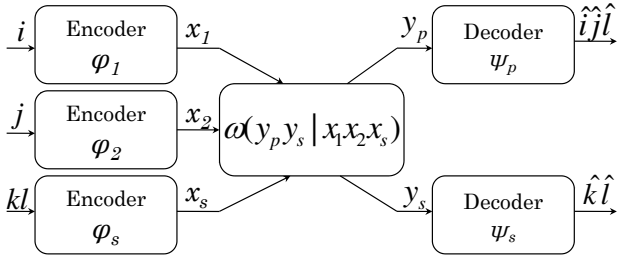}
	\caption{Rate-Splitting channel model $C_{RS}$}
	\label{fig:RSModel}
\end{figure}

A rate tuple $(R_1,R_2,S,T)$ of non-negative real values is achievable over the channel $C_{RS}$ if there exists a code $(n,M_1,M_2,L_s,N_s,\epsilon)$ such that for any arbitrary $0< \epsilon < 1$ and $\eta>0$

\begin{eqnarray}
\label{eq:R^o_F}
\frac{1}{n}\log{M_1}\geq{R_1-\eta},\\
\frac{1}{n}\log{M_2}\geq{R_2-\eta},\\
\frac{1}{n}\log{L_s}\geq{S-\eta},\\
\label{eq:R^o_L}
\frac{1}{n}\log{N_s}\geq{T-\eta},
\end{eqnarray} 
with sufficiently large $n$.

\begin {lemma}
\label{lem:equivalence}
If a rate tuple $(R_1,R_2,S,T)$ is achievable for $C_{RS}$, then a rate tuple $(R_1,R_2,R_s)$ where $R_s=S+T$ is achievable for $C_B$.
\end{lemma}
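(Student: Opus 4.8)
The plan is to show that any good code for $C_{RS}$ can be reinterpreted as a good code for $C_B$ by simply merging the two secondary message sets $\mathcal{L}_s$ and $\mathcal{N}_s$ into a single secondary message set of size $M_s = L_s N_s$. Concretely, suppose $(R_1,R_2,S,T)$ is achievable for $C_{RS}$, and fix $\eta>0$ and $0<\epsilon<1$. By definition there is a code $(n,M_1,M_2,L_s,N_s,\epsilon)$ meeting the rate constraints \eqref{eq:R^o_F}--\eqref{eq:R^o_L} for some sufficiently large $n$. I would first construct the candidate $C_B$ code: keep the primary encoders $\phi_1,\phi_2$ unchanged, identify $\mathcal{M}_s$ with $\mathcal{L}_s\times\mathcal{N}_s$ via any bijection $k\leftrightarrow(k',l')$, and set the secondary encoder to $\phi_s(k)=\textbf{x}_{sk'l'}$. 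This immediately gives $M_s = L_s N_s$, so $\tfrac{1}{n}\log M_s = \tfrac{1}{n}\log L_s + \tfrac{1}{n}\log N_s \geq (S-\eta)+(T-\eta) = (S+T) - 2\eta$; since $\eta$ is arbitrary this yields the required rate bound $\tfrac1n\log M_s \ge R_s - \eta'$ for $R_s = S+T$ (absorbing the factor $2$ into the choice of the slack parameter).

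Next I would handle the decoding sets. For the primary receiver, the $C_{RS}$ code has decoding sets $\mathcal{D}_{pijl}$ indexed by $(i,j,l)\in\mathcal{M}_1\times\mathcal{M}_2\times\mathcal{N}_s$; to build a $C_B$ primary decoder that only needs to output $(i,j)$, I would set $\mathcal{D}_{pij} = \bigcup_{l\in\mathcal{N}_s}\mathcal{D}_{pijl}$. These remain disjoint over distinct $(i,j)$ because the original sets were disjoint over distinct $(i,j,l)$, and a primary-receiver error in the new code (outputting the wrong pair $(i,j)$) implies an error in the old code (outputting a wrong triple), so the averaged error probability does not increase. For the secondary receiver, the $C_{RS}$ decoder already outputs a pair $(k',l')\in\mathcal{L}_s\times\mathcal{N}_s$, which under the bijection is exactly the single message $k\in\mathcal{M}_s$; so I take $\mathcal{D}_{sk} = \mathcal{D}_{sk'l'}$ verbatim, and the secondary error event is literally unchanged. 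Then I would check that $Pe_p$ and $Pe_s$ for the new code, as defined in \eqref{eq:EprobCBp}--\eqref{eq:EprobCBs}, are each bounded by the corresponding $Pe_p^o$, $Pe_s^o$ of \eqref{eq:EprobCRSp}--\eqref{eq:EprobCRSs}, hence by $\epsilon$ — this is essentially a relabeling of the same sum over $(i,j,k,l)$, noting the uniform averaging is consistent because $M_1M_2M_s = M_1M_2L_sN_s$.

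I do not expect a genuine obstacle here; the result is a bookkeeping lemma asserting that rate-splitting is an operational refinement of ordinary coding, and the content is entirely in getting the indexing of decoding sets and error events to line up. The one place that warrants a little care is the primary decoding side: one must make sure that collapsing the $\mathcal{N}_s$ coordinate out of $\mathcal{D}_{pijl}$ does not turn a correct decision into an incorrect one — but since we only ask the $C_B$ primary decoder to identify $(i,j)$, and it succeeds whenever the $C_{RS}$ decoder got $(i,j)$ right regardless of what it decided about $l$, the error region can only shrink. The other minor point is the cosmetic mismatch $2\eta$ versus $\eta$ in the secondary rate bound, which is harmless since $\eta$ is an arbitrary positive constant in the definition of achievability. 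Assembling these observations completes the proof.
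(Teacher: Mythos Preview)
Your proposal is correct and follows essentially the same route as the paper: merge the secondary message sets so that $M_s=L_sN_s$, take $\mathcal{D}_{pij}=\bigcup_{l}\mathcal{D}_{pijl}$ for the primary decoder, keep the secondary decoding sets unchanged, and observe that the error probabilities can only decrease (or stay equal). Your write-up is actually more careful than the paper's brief argument, in particular in noting the disjointness of the new primary decoding sets and in handling the $2\eta$ versus $\eta$ slack.
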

\begin{proof}

It is sufficient to show that, if $(n,M_1,M_2,L_s,N_s,\epsilon)$ is a code for $C_{RS}$ then $(n,M_1,M_2,L_sN_s,\epsilon)$ is a code for $C_B$. To do so, let $\mathcal{D}_{pij}=\cup_{l=1}^{N_s}\mathcal{D}_{pijl}$. Then
\begin{equation}
\label{eq:Omegap}
\omega_p^n(\textbf{y}_p\notin\mathcal{D}_{pij}|\textbf{x}_{1i}\textbf{x}_{2j}\textbf{x}_{skl})\leq\omega_p^n(\textbf{y}_p\notin\mathcal{D}_{pijl}|\textbf{x}_{1i}\textbf{x}_{2j}\textbf{x}_{skl}).
\end{equation}
So, if $(n,M_1,M_2,L_s,N_s,\epsilon)$ is a code for $C_{RS}$ then $Pe_p^o\leq{\epsilon}$ and $Pe_s^o\leq\epsilon$, hence, from \eqref{eq:Omegap} $Pe_p\leq{\epsilon}$ and $Pe_s\leq\epsilon$ when $k$ and $M_s$ of \eqref{eq:EprobCBp} and \eqref{eq:EprobCBs} are replaced with $kl$ and $L_sN_s$ respectively, meaning that $(n,M_1,M_2,L_sN_s,\epsilon)$ is a code for $C_B$. 
\end{proof}
\subsection{Rate-Splitting Channel with Decodable Primary Signal at the Secondary Receiver}
We introduce another channel, $C_{RS}^p$, in which the secondary user splits its set of messages into two sets, exactly as the case of $C_{RS}$. However, we assume that the signal of one primary transmitter is decodable at the secondary receiver. Without loss of generality, assume this this is the first primary transmitter. Thus, $C_{RS}^p$ is defined by a tuple $(\mathcal{X}_1,\mathcal{X}_2,\mathcal{X}_s,\omega,\mathcal{Y}_p,\mathcal{Y}_s)$ with its elements defined as in $C_B$ and $C_{RS}$. A code $(n,M_1,M_2,L_s,N_s)$ over the channel $C_{RS}^p$ is a collection of $M_1$, $M_2$, $L_sN_s$ codewords such that conditions 1), 2) and 3) of the same code but in $C_{RS}$ are satisfied besides the following two conditions:
\begin{enumerate}
	\item Secondary receiver has $M_1L_sN_s$ disjoint decoding sets $\mathcal{D}_{sikl}\subseteq\mathcal{Y}_s^n$, and a decoding function $\psi_{s}:\textbf{y}_s\rightarrow ikl$ if $\textbf{y}_s\in\mathcal{D}_{sikl}$, where $ikl\in\mathcal{M}_1\times\mathcal{L}_s\times\mathcal{N}_s$.
	\item Probability of error for the primary network and the secondary link are less than $\epsilon$, that is, $Pe_p^r\leq{\epsilon}$ and $Pe_s^r\leq{\epsilon}$ respectively, where
	\begin{multline}
	\label{eq:EprobCRS^pp}
	Pe_p^r=\\	\frac{1}{M_1M_2L_sN_s}\sum_{i,j,k,l}\omega_p^n(\textbf{y}_p\notin\mathcal{D}_{pijl}|\textbf{x}_{1i}\textbf{x}_{2j}\textbf{x}_{skl}),
\end{multline}
\begin{multline}
	\label{eq:EprobCRS^ps}
Pe_s^r=\\
\frac{1}{M_1M_2L_sN_s}\sum_{i,j,k,l}\omega_s^n(\textbf{y}_s\notin\mathcal{D}_{sikl}|\textbf{x}_{1i}\textbf{x}_{2j}\textbf{x}_{skl}).
	\end{multline}
\end{enumerate}
	A rate tuple $(R_1,R_2,S,T)$ of non-negative real values is achievable over the channel $C_{RS}^p$ if for any arbitrary $\eta>0$ and $0<\epsilon<1$ the inequalities \eqref{eq:R^o_F}-\eqref{eq:R^o_L} are satisfied for sufficiently large $n$.	

\begin{lemma}
\label{lem:equivalence^p}
If a rate tuple $(R_1, R_2, S, T)$ is achievable for $C_{RS}^p$, then a rate tuple $(R_1,R_2,R_s)$ where $R_s=S+T$ is achievable for $C_B$.
\end{lemma}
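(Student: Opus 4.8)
The plan is to follow the same reduction used in the proof of Lemma~\ref{lem:equivalence}: I will show that a code $(n,M_1,M_2,L_s,N_s,\epsilon)$ for $C_{RS}^p$ can be converted into a code $(n,M_1,M_2,L_sN_s,\epsilon)$ for $C_B$, where the single secondary message of $C_B$ is identified with the pair $(k,l)\in\mathcal{L}_s\times\mathcal{N}_s$ and $M_s=L_sN_s$. The encoders carry over verbatim: $\phi_1,\phi_2$ are unchanged, and $\phi_s:kl\mapsto\textbf{x}_{skl}$ is reinterpreted as the secondary encoder of $C_B$ acting on the message $kl$.

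For the primary receiver the argument is identical to that of Lemma~\ref{lem:equivalence}, since the primary side of $C_{RS}^p$ is defined exactly as in $C_{RS}$. Setting $\mathcal{D}_{pij}=\cup_{l=1}^{N_s}\mathcal{D}_{pijl}$ yields $M_1M_2$ sets that are disjoint in $(i,j)$ because the $\mathcal{D}_{pijl}$ are disjoint in $(i,j,l)$, and by \eqref{eq:Omegap} the merged primary error does not exceed $Pe_p^r\le\epsilon$.

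The only new point concerns the secondary receiver, which in $C_{RS}^p$ outputs an estimate of the triple $ikl\in\mathcal{M}_1\times\mathcal{L}_s\times\mathcal{N}_s$, i.e.\ it additionally recovers the first primary message. Since the $C_B$ decoder is not required to produce that side information, I would discard it by merging along $i$: define $\mathcal{D}_{s,kl}=\cup_{i=1}^{M_1}\mathcal{D}_{sikl}$. These $L_sN_s$ sets are disjoint in $(k,l)$ because the $\mathcal{D}_{sikl}$ are disjoint in $(i,k,l)$, and since $\mathcal{D}_{sikl}\subseteq\mathcal{D}_{s,kl}$ we have $\{\textbf{y}_s\notin\mathcal{D}_{s,kl}\}\subseteq\{\textbf{y}_s\notin\mathcal{D}_{sikl}\}$, hence
\[
\omega_s^n(\textbf{y}_s\notin\mathcal{D}_{s,kl}\mid\textbf{x}_{1i}\textbf{x}_{2j}\textbf{x}_{skl})\le\omega_s^n(\textbf{y}_s\notin\mathcal{D}_{sikl}\mid\textbf{x}_{1i}\textbf{x}_{2j}\textbf{x}_{skl})
\]
for all $i,j,k,l$. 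Averaging over $i,j,k,l$ and comparing with \eqref{eq:EprobCRS^ps} shows the resulting $C_B$ secondary error is at most $Pe_s^r\le\epsilon$.

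For the rate bookkeeping, \eqref{eq:R^o_F}--\eqref{eq:R^o_L} give $\tfrac1n\log M_1\ge R_1-\eta$, $\tfrac1n\log M_2\ge R_2-\eta$, and $\tfrac1n\log(L_sN_s)=\tfrac1n\log L_s+\tfrac1n\log N_s\ge (S+T)-2\eta$; choosing $\eta$ to be half of the prescribed slack makes $(R_1,R_2,R_s)$ with $R_s=S+T$ satisfy \eqref{eq:R_Bineq} for the constructed $C_B$ code, which proves achievability. I do not expect a real obstacle: the only things to check carefully are that enlarging the secondary decoding sets can only decrease error probabilities, and that disjointness is preserved precisely because the merge is performed along the index ($i$) that the $C_B$ secondary decoder does not need to recover.
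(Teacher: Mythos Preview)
Your proposal is correct and follows essentially the same approach as the paper: the paper's proof simply invokes Lemma~\ref{lem:equivalence} and adds the one new observation that setting $\mathcal{D}_{skl}=\cup_{i=1}^{M_1}\mathcal{D}_{sikl}$ yields $\omega_s^n(\textbf{y}_s\notin\mathcal{D}_{skl}\mid\textbf{x}_{1i}\textbf{x}_{2j}\textbf{x}_{skl})\le\omega_s^n(\textbf{y}_s\notin\mathcal{D}_{sikl}\mid\textbf{x}_{1i}\textbf{x}_{2j}\textbf{x}_{skl})$, which is exactly your merging step on the secondary side. Your version is slightly more explicit about disjointness and the rate bookkeeping, but the argument is the same.
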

\begin{proof}
The proof follows exactly as the proof of Lemma \ref{lem:equivalence} noting that, if $\mathcal{D}_{skl}=\cup_{i=1}^{M_1}\mathcal{D}_{sikl}$, then
\begin{equation}
\label{eq:Omegas}
\omega_s^n(\textbf{y}_s\notin\mathcal{D}_{skl}|\textbf{x}_{1i}\textbf{x}_{2j}\textbf{x}_{skl})\leq\omega_s^n(\textbf{y}_s\notin\mathcal{D}_{sikl}|\textbf{x}_{1i}\textbf{x}_{2j}\textbf{x}_{skl}).
\end{equation}
\end{proof}

\section{Achievable Rate Region}
\label{sec:ARR}
In this section we consider the characterization of the achievable rate region for $C_B$. In order to do so, we first establish two achievable rate regions, one for $C_{RS}$ and another for $C_{RS}^p$. Then, we define the achievable rate region for $C_B$.
We consider the random variables $U$, $W$ and $Q$ defined over the finite sets $\mathcal{U}$, $\mathcal{W}$ and $\mathcal{Q}$ respectively, where $Q$ is a time sharing parameter. Let the set $\mathcal{P}^*$ contains all $Z=QUWX_1X_2X_sY_pY_s$ such that:

\begin{itemize}
	\item $X_1$, $X_2$, $U$ and $W$ are conditionally independent given $Q$,
	\item $X_s=f(UW|Q)$,
\end{itemize}

Since $X_s=f(UW|Q)$, then $\mathcal{U}$ and $\mathcal{W}$ can be considered as input sets to the channels $C_{RS}$ and $C_{RS}^p$. We establish achievable rate regions for $C_{RS}$ and $C_{RS}^p$ as follows.

\subsection{Achievable Rate Region for $C_{RS}$}
\label{subsec:ARR_CRS}
\begin{theorem}
\label{th:ARR_CRS}
For any $Z\in\mathcal{P}^*$, $\delta^o(Z)$ is the set of achievable rate tuples $(R_1,R_2,S,T)$ for $C_{RS}$ if the following inequalities are satisfied:
\begin{eqnarray}
\label{eq:g_relationsF}
R_1\leq{I(Y_p;X_1|WX_2Q)},\\
R_2\leq{I(Y_p;X_2|WX_1Q)},\\
\label{eq:T^o}
T\leq{I(Y_p;W|X_1X_2Q)},\\
\label{eq:R_pmax}
R_1+R_2\leq{I(Y_p;X_1X_2|WQ)},\\
T+R_1\leq{I(Y_p;WX_1|X_2Q)},\\
T+R_2\leq{I(Y_p;WX_2|X_1Q)},\\
\label{eq:g_relationsMp}
T+R_1+R_2\leq{I(Y_p;WX_1X_2|Q)};\\
\label{eq:g_relationsMs}
S\leq{I(Y_s;U|WQ)},\\
T\leq{I(Y_s;W|UQ)},\\
\label{eq:g_relationsL}
S+T\leq{I(Y_s;UW|Q)}.
\end{eqnarray}
\end{theorem}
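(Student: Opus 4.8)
The plan is to establish the rate region $\delta^o(Z)$ via a standard random-coding-and-joint-typicality argument, since the channel $C_{RS}$ with the constraint $X_s=f(UW|Q)$ is essentially a three-user multiple-access channel from the primary receiver's viewpoint (with ``virtual'' users $X_1$, $X_2$, $W$) superposed with a two-user MAC from the secondary receiver's viewpoint (with users $U$, $W$). First I would fix $Z\in\mathcal{P}^*$ and a time-sharing sequence $\mathbf{q}$ drawn i.i.d.\ from $p(q)$, revealed to all terminals. Then I would generate the codebooks: $M_1$ codewords $\mathbf{x}_{1i}$ i.i.d.\ $\sim\prod p(x_1^{(t)}|q^{(t)})$, $M_2$ codewords $\mathbf{x}_{2j}$ i.i.d.\ $\sim\prod p(x_2^{(t)}|q^{(t)})$, $N_s$ ``cloud centers'' $\mathbf{w}_l$ i.i.d.\ $\sim\prod p(w^{(t)}|q^{(t)})$, and $L_s$ codewords $\mathbf{u}_k$ i.i.d.\ $\sim\prod p(u^{(t)}|q^{(t)})$; the secondary transmitter sends $\mathbf{x}_{skl}$ with $x_{skl}^{(t)}=f(u_k^{(t)},w_l^{(t)}|q^{(t)})$. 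This respects the conditional-independence structure imposed by $\mathcal{P}^*$.

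Next I would specify the decoders as joint-typicality decoders. The primary receiver looks for the unique triple $(i,j,l)$ with $(\mathbf{q},\mathbf{x}_{1i},\mathbf{x}_{2j},\mathbf{w}_l,\mathbf{y}_p)$ jointly $\epsilon$-typical; the secondary receiver looks for the unique pair $(k,l)$ with $(\mathbf{q},\mathbf{u}_k,\mathbf{w}_l,\mathbf{y}_s)$ jointly $\epsilon$-typical. For the error analysis I would condition on message $(1,1,1,1)$ being sent and bound the probability of each error event. The key quantity is the probability that a wrong codeword combination is jointly typical with the received sequence: by the joint AEP this is exponentially small provided the total rate of the ``wrong'' indices stays below the corresponding conditional mutual information. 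For the primary receiver, the error events are indexed by which subset of $\{i,j,l\}$ is decoded incorrectly, and enumerating the seven nonempty subsets yields exactly the seven inequalities \eqref{eq:g_relationsF}--\eqref{eq:g_relationsMp} (for instance, only $i$ wrong gives $R_1\le I(Y_p;X_1|WX_2Q)$, only $l$ wrong gives $T\le I(Y_p;W|X_1X_2Q)$, all three wrong gives $T+R_1+R_2\le I(Y_p;WX_1X_2|Q)$). For the secondary receiver, the three nonempty subsets of $\{k,l\}$ give \eqref{eq:g_relationsMs}--\eqref{eq:g_relationsL}. I would then invoke the union bound over all error events and the fact that there are only finitely many ($2^3-1$ plus $2^2-1$) of them, so each can be driven below $\epsilon/C$ for a constant $C$ by taking $n$ large; a standard argument shows the bounds \eqref{eq:R_Bineq}-type rate conditions \eqref{eq:R^o_F}--\eqref{eq:R^o_L} can be met with $\eta$ arbitrarily small, and expurgating the worse half of the codebook converts the average-error bound into a maximal-error bound if desired (though the definition here only requires average error).

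The step I expect to require the most care is the computation of the typicality-collision exponents when the cloud center index $l$ is among the incorrectly decoded indices, because then \emph{both} $\mathbf{w}_l$ changes and, at the secondary receiver, $\mathbf{x}_{skl}=f(\mathbf{u}_k,\mathbf{w}_l|\mathbf{q})$ changes even if $k$ is correct; one must be careful that the relevant codewords are still drawn independently of the received sequence given $\mathbf{q}$ so that the AEP bound $p\le 2^{-n(\text{relevant conditional MI}-\delta(\epsilon))}$ applies with the correct conditioning. I would handle this by writing each error event with the conditioning set being exactly the indices decoded correctly, and checking in each of the finitely many cases that the joint distribution of the tuple under the error hypothesis is the product of the appropriate marginals (valid because of the conditional independence in $\mathcal{P}^*$ and the i.i.d.\ codebook generation). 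Once all collision exponents are identified with the right-hand sides of \eqref{eq:g_relationsF}--\eqref{eq:g_relationsL}, the theorem follows; letting $\epsilon\to0$ and $n\to\infty$ shows every rate tuple in the interior of $\delta^o(Z)$ is achievable, and a limiting/closure argument extends this to $\delta^o(Z)$ itself.
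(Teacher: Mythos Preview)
Your proposal is correct and follows essentially the same approach as the paper's proof: i.i.d.\ random codebook generation conditioned on a shared time-sharing sequence $\mathbf{q}$, joint-typicality decoding at each receiver, and a union bound over the $2^3-1$ (primary) and $2^2-1$ (secondary) error events to obtain exactly the ten inequalities. The paper carries out precisely this computation in its appendix, with the same identification of each error exponent via the conditional-independence structure of $\mathcal{P}^*$; your additional remarks about expurgation and about care with the $l$-wrong events are sound but go slightly beyond what the paper writes out.
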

\begin{proof}
Please refer to Appendix \ref{App:ARR_CRRS}
\end{proof}

\begin{corollary}
For $\delta^o=\cup_{Z\in\mathcal{P}^*}\delta^o(Z)$, any rate tuple of $\delta^o$ is achievable.
\end{corollary}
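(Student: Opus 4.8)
The plan is to establish Theorem \ref{th:ARR_CRS} by a standard random coding argument with joint typicality decoding, adapted to the rate-splitting structure where the secondary signal $X_s$ is a deterministic function $f(UW|Q)$ of two auxiliary codewords $U$ and $W$, and where $W$ plays the role of a ``common'' layer that must be decoded at \emph{both} the primary and secondary receivers while $U$ is a ``private'' layer decoded only at the secondary receiver. First I would fix $Z\in\mathcal{P}^*$ and a distribution achieving it, generate a time-sharing sequence $\mathbf{q}$ i.i.d.\ from $p(q)$, and then conditioned on $\mathbf{q}$ generate independently: $M_1$ codewords $\mathbf{x}_{1i}$ from $\prod p(x_1|q)$, $M_2$ codewords $\mathbf{x}_{2j}$ from $\prod p(x_2|q)$, $N_s$ codewords $\mathbf{w}_l$ from $\prod p(w|q)$, and $L_s$ codewords $\mathbf{u}_k$ from $\prod p(u|q)$; the transmitted secondary codeword is then $\mathbf{x}_{skl}$ with $t$-th symbol $f(u_k^{(t)}w_l^{(t)}|q^{(t)})$. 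This respects the conditional independence of $X_1,X_2,U,W$ given $Q$ demanded by $\mathcal{P}^*$.

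Next I would specify the decoders. The primary receiver, observing $\mathbf{y}_p$, looks for the unique triple $(i,j,l)$ such that $(\mathbf{q},\mathbf{x}_{1i},\mathbf{x}_{2j},\mathbf{w}_l,\mathbf{y}_p)$ is jointly $\epsilon$-typical; note the primary receiver effectively sees $(X_1,X_2,W)$ through $Y_p$ with $U$ (hence the $U$-part of $X_s$) acting as noise already folded into the channel $\omega_p(y_p|x_1x_2x_s)$ averaged over $U$. The secondary receiver, observing $\mathbf{y}_s$, looks for the unique pair $(k,l)$ with $(\mathbf{q},\mathbf{u}_k,\mathbf{w}_l,\mathbf{y}_s)$ jointly typical, with $X_1,X_2$ treated as noise. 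I would then bound the probability of error averaged over the random codebook by a union bound over the competing error events, classified by which of the indices $i,j,l$ (resp.\ $k,l$) differ from the true ones. Each such event contributes a term that decays to zero provided the corresponding rate sum is below the mutual information between the ``new'' codewords and the output given the ``old'' ones; carefully enumerating the $2^3-1=7$ nonempty subsets of $\{i,j,l\}$ at the primary receiver yields exactly the seven inequalities \eqref{eq:g_relationsF}--\eqref{eq:g_relationsMp}, and the $2^2-1=3$ nonempty subsets of $\{k,l\}$ at the secondary receiver yield \eqref{eq:g_relationsMs}--\eqref{eq:g_relationsL}. For instance, the event ``$l$ wrong, $i,j$ correct'' at the primary receiver gives the constraint $T\le I(Y_p;W|X_1X_2Q)$, and ``$i,l$ wrong, $j$ correct'' gives $T+R_1\le I(Y_p;WX_1|X_2Q)$, etc.

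Having shown the average (over codebooks) error probabilities $\overline{Pe_p^o}$ and $\overline{Pe_s^o}$ each tend to zero as $n\to\infty$ whenever the strict versions of the ten inequalities hold, I would invoke the usual argument: for $n$ large there exists at least one deterministic codebook whose $Pe_p^o$ and $Pe_s^o$ are both at most $\epsilon$, and the rate conditions \eqref{eq:R^o_F}--\eqref{eq:R^o_L} are met with room $\eta$ to spare; a standard limiting/closure argument then extends achievability from the strict inequalities to the closed region $\delta^o(Z)$ described by \eqref{eq:g_relationsF}--\eqref{eq:g_relationsL}. The main obstacle, and the step deserving the most care in the appendix, is the bookkeeping of the joint typicality error events at the primary receiver: because $X_s=f(UW|Q)$ is \emph{not} independent of $W$, one must be careful that when $l$ is decoded incorrectly the ``fake'' secondary codeword is $f(\mathbf{u}_k,\mathbf{w}_{l'}|\mathbf{q})$ with the \emph{true} $\mathbf{u}_k$ but a wrong $\mathbf{w}_{l'}$, so the relevant independence is that of $W$ from $(X_1,X_2,Q,U)$, which does hold under $\mathcal{P}^*$; getting the conditioning right in each of the seven cases (and confirming the averaged channel to the primary receiver is the correct one) is where the argument could go wrong if done carelessly. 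The secondary-receiver analysis is comparatively routine since it is essentially the Han--Kobayashi private/common decoding at a single receiver with $X_1,X_2$ as noise.
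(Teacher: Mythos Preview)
Your proposal is correct and follows essentially the same route as the paper: the corollary is immediate once Theorem~\ref{th:ARR_CRS} is established, and your random-coding construction, joint-typicality decoders (primary looking for $(i,j,l)$, secondary for $(k,l)$), and union-bound error analysis over the $7+3$ nonempty index subsets match the paper's Appendix~\ref{App:ARR_CRRS} almost line by line. Your extra caution about the dependence of $X_s$ on $W$ at the primary decoder is not strictly needed in the paper's formulation, since the primary typicality test is on $(\mathbf{q},\mathbf{x}_{1i},\mathbf{x}_{2j},\mathbf{w}_l,\mathbf{y}_p)$ directly (with $U$ absorbed into the effective channel), but it does no harm.
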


In the defined network we focus on the achievable rates by the primary network $R_p=R_1+R_2$ and the secondary link $R_s=S+T$. Let $\mathcal{R}^o(Z)$ be the set of all rate tuples $(R_s,R_p)$ having $(R_1,R_2,S,T)$ satisfy \eqref{eq:g_relationsF}-\eqref{eq:g_relationsL} for all $Z\in\mathcal{P}^*$, then we determine $\mathcal{R}^o(Z)$ in the following theorem.

\begin{theorem}
\label{th:R^o(Z)}
For any $Z\in{\mathcal{P}^*}$ the achievable rate region $\mathcal{R}^o(Z)$ of the defined channel $C_{RS}$ consists of all rate pairs $(R_s,R_p)$ that satisfy
\begin{equation}
\label{eq:rate_regionCRS}
R_p\leq{\rho_p^o}, \quad R_s\leq{\rho_s^o}, \quad R_s+R_p\leq{\rho_{sp}^o} 
\end{equation}
where
\begin{eqnarray}
\label{eq:Rhos^o}
\rho_p^o=I(Y_p;X_1X_2|WQ),\\
\rho_s^o=I(Y_s;U|WQ)+\sigma^*,\\
\label{eq:sumrate}
\begin{split}
\rho_{sp}^o =& \rho_p^o+I(Y_s;U|WQ)\\ 
             &+\min\{I(Y_s;W|Q),I(Y_p,W|Q)\}
\end{split}
\end{eqnarray}
and
\begin{equation}
\label{eq:sigmastar}
\sigma^*=\min\{I(Y_p;W|X_1X_2Q),I(Y_s;W|Q)\}.
\end{equation}
\end{theorem}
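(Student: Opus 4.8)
The plan is to project the four-dimensional achievable region $\delta^o(Z)$ of Theorem \ref{th:ARR_CRS} onto the $(R_s, R_p)$-plane, where $R_p = R_1 + R_2$ and $R_s = S + T$. Concretely, a pair $(R_s, R_p)$ lies in $\mathcal{R}^o(Z)$ if and only if there exist nonnegative $R_1, R_2, S, T$ with $R_1 + R_2 = R_p$, $S + T = R_s$, satisfying all of \eqref{eq:g_relationsF}--\eqref{eq:g_relationsL}. So the task is a Fourier--Motzkin-type elimination of $R_1, R_2, S, T$ subject to the two linear equality constraints. I would first argue the easy direction: any $(R_s, R_p)$ arising this way satisfies \eqref{eq:rate_regionCRS}. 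The bound $R_p \le \rho_p^o$ is exactly \eqref{eq:R_pmax}. For $R_s \le \rho_s^o$, add \eqref{eq:g_relationsMs} and the second secondary inequality $T \le I(Y_s;W|UQ)$... but more carefully, $R_s = S+T$ where $S \le I(Y_s;U|WQ)$ and $T$ is bounded both by $I(Y_s;W|UQ)$ (secondary side) and by $I(Y_p;W|X_1X_2Q)$ (primary side, \eqref{eq:T^o}); since also $S+T \le I(Y_s;UW|Q)$, one checks $\rho_s^o = I(Y_s;U|WQ) + \sigma^*$ is the correct maximum of $S+T$, using the chain-rule identity $I(Y_s;U|WQ) + I(Y_s;W|Q) = I(Y_s;UW|Q)$ to see the joint constraint \eqref{eq:g_relationsL} is not more restrictive than $S \le I(Y_s;U|WQ)$ together with $T \le I(Y_s;W|Q)$.

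The heart of the proof is the sum-rate bound $R_s + R_p \le \rho_{sp}^o$. Here I would isolate the variable $T$, which is the only one appearing in both the primary group and the secondary group of inequalities. Writing $R_s + R_p = (R_1 + R_2) + S + T$, I want to bound $R_1 + R_2 + S + T$. From the primary side, \eqref{eq:g_relationsMp} gives $T + R_1 + R_2 \le I(Y_p;WX_1X_2|Q)$, and by the chain rule $I(Y_p;WX_1X_2|Q) = I(Y_p;X_1X_2|WQ) + I(Y_p;W|Q) = \rho_p^o + I(Y_p;W|Q)$, so $R_1 + R_2 + T \le \rho_p^o + I(Y_p;W|Q)$. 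From the secondary side, \eqref{eq:g_relationsL} gives $S + T \le I(Y_s;UW|Q) = I(Y_s;U|WQ) + I(Y_s;W|Q)$. Adding a convex combination — or rather, observing that $R_1+R_2+S+T = (R_1+R_2+T) + S$ and bounding $S$ separately by \eqref{eq:g_relationsMs} while bounding $T$ by the tighter of its two ceilings — leads to $R_s + R_p \le \rho_p^o + I(Y_s;U|WQ) + \min\{I(Y_p;W|Q), I(Y_s;W|Q)\}$, which is $\rho_{sp}^o$. The delicate bookkeeping is deciding which constraint on $T$ is binding and verifying no other combination of the ten inequalities produces a tighter sum bound; I would check the handful of candidate sums (e.g. \eqref{eq:g_relationsMp} plus $S \le I(Y_s;U|WQ)$, versus \eqref{eq:g_relationsL} plus \eqref{eq:R_pmax}, etc.) and confirm they all dominate or coincide with $\rho_{sp}^o$.

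For the converse (achievability) direction, given any $(R_s, R_p)$ satisfying \eqref{eq:rate_regionCRS}, I must exhibit a split $R_p = R_1 + R_2$, $R_s = S + T$ with $R_1, R_2, S, T \ge 0$ meeting all ten inequalities. The natural choice is to set $T = \min\{I(Y_p;W|X_1X_2Q), I(Y_s;W|Q), \text{(whatever is needed)}\}$ and then allocate the remainder. Specifically, take $T^* = \sigma^*$ if $R_s \ge \sigma^*$ (else $T = R_s$, $S = 0$), set $S = R_s - T$, and split $R_p$ across $R_1, R_2$ so that each of the single and pairwise primary constraints \eqref{eq:g_relationsF}, the first two lines, $T+R_1 \le \cdots$, $T+R_2 \le \cdots$ holds — this is possible because $R_p \le \rho_p^o$ and the pairwise constraints leave enough slack, by a standard "fill the MAC region" argument (if a point is in the pentagon $R_1+R_2 \le C_{12}$, $R_1 \le C_1$, $R_2 \le C_2$ one can realize any sum up to $\min\{C_{12}, C_1+C_2\}$). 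One then verifies the joint primary constraint \eqref{eq:g_relationsMp} using $R_1+R_2 + T \le \rho_p^o + \sigma^* \le \rho_p^o + I(Y_p;W|Q)$ and the secondary joint constraint \eqref{eq:g_relationsL} using $R_s \le \rho_s^o$ plus the sum-rate hypothesis.

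I expect the main obstacle to be the converse: cleanly choosing the four-way split as a function of where $(R_s, R_p)$ sits relative to the three corner points of the region, and checking in each regime that the "mixed" constraints $T + R_1$, $T + R_2$, $T + R_1 + R_2$ (primary) and $S$, $T$, $S+T$ (secondary) are simultaneously satisfied. The projection (forward) direction is routine Fourier--Motzkin; the reverse direction requires either an explicit case analysis on $\min\{I(Y_s;W|Q), I(Y_p;W|Q)\}$ and on whether $R_s \lessgtr \sigma^*$, or an appeal to LP duality / the general principle that Fourier--Motzkin elimination yields an exact projection so the listed inequalities are both necessary and sufficient. I would lean on the latter to keep the argument short, then spell out the split only enough to make it constructive.
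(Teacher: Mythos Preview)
Your approach via Fourier--Motzkin projection is sound and genuinely different from the paper's. The paper proceeds by direct corner-point computation: it fixes $R_s=0$ and maximizes $R_p$ (point~A), fixes $R_p=\rho_p^o$ and maximizes $R_s$ (point~B), fixes $R_p=0$ and maximizes $R_s$ (point~D), fixes $R_s=\rho_s^o$ and maximizes $R_p$ (point~C), and then verifies by a two-case check on $\sigma^*$ versus $I(Y_p;W|Q)$ that both B and C lie on the line $R_s+R_p=\rho_{sp}^o$. Your forward direction matches this content; your FM framing is cleaner in that exactness of the projection comes for free, whereas the paper's geometric method avoids the need for an explicit split but pays with some ad~hoc case-checking.

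One concrete slip in your converse sketch: the verification of \eqref{eq:g_relationsMp} via ``$R_1+R_2+T \le \rho_p^o + \sigma^* \le \rho_p^o + I(Y_p;W|Q)$'' can fail, because $\sigma^*=\min\{I(Y_p;W|X_1X_2Q),\,I(Y_s;W|Q)\}$ need not be bounded by $I(Y_p;W|Q)$. Indeed $I(Y_p;W|X_1X_2Q)\ge I(Y_p;W|Q)$ always (independence of $X_1,X_2,W$), so when $I(Y_s;W|Q)>I(Y_p;W|Q)$ as well, one has $\sigma^*>I(Y_p;W|Q)$, and setting $T=\sigma^*$ at a point with $R_p$ near $\rho_p^o$ violates \eqref{eq:g_relationsMp}. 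The fix is precisely the case split you anticipate: near the B-corner the correct choice is $T=\min\{I(Y_p;W|Q),\,I(Y_s;W|Q)\}$ (this is what the paper obtains at point~B), not $\sigma^*$; the choice $T=\sigma^*$ is the right one near the D/C-corner. Your fallback to FM exactness handles this automatically, so the overall plan is fine --- just do not rely on the single explicit split across the whole region.
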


\begin{figure}[ht]
	\centering
		\includegraphics[width=0.3\textwidth]{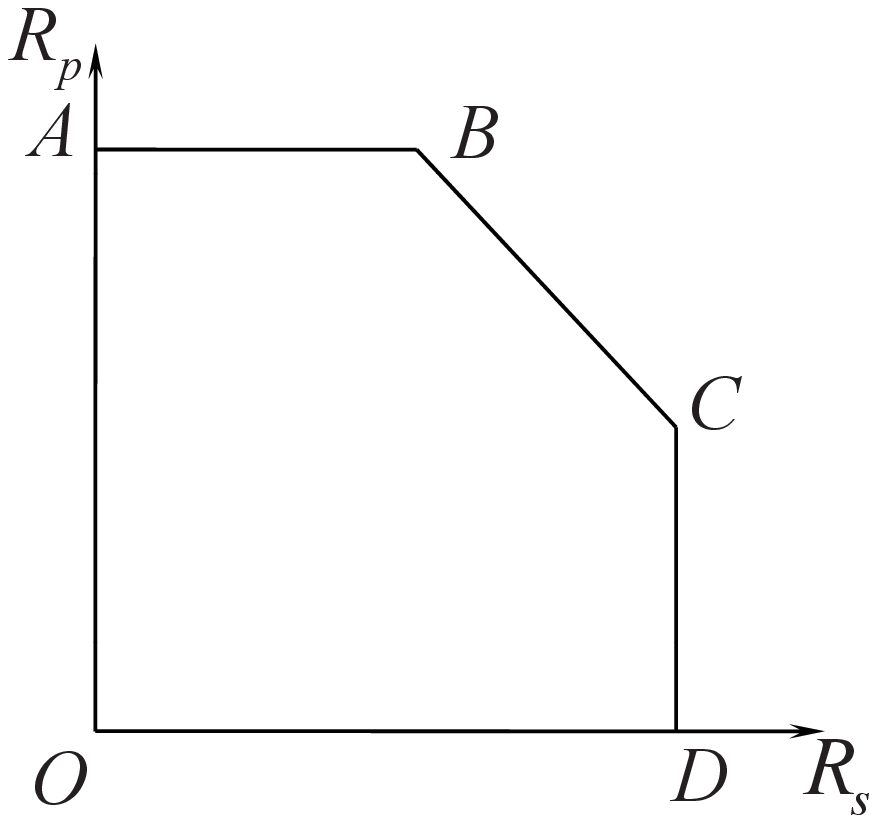}
	\caption{Ahievable rate region $\mathcal{R}^o(Z)$ of the channel $C_{RS}$ for ony $Z\in\mathcal{P}^*$.}
	\label{fig:RR_C_RS}
\end{figure}

\begin{proof}
To proof the theorem it is sufficient to determine the rate tuples $(R_s,R_p)$ of the corner points of $\mathcal{R}^o(Z)$. To do so, we refer to Fig. \ref{fig:RR_C_RS}.
\begin{itemize}
	\item \emph{Point A:}
\end{itemize}
$R_s^{A}=0$, i.e., $S^{A}=T^{A}=0$. Thus the maximum rate at which the primary network can operate is determined from \eqref{eq:R_pmax} as:
\begin{equation}
\label{eq:R_pA^o}
R_p^{A}={I(Y_p;X_1X_2|WQ)}=\rho_p^o
\end{equation}
\begin{itemize}
	\item \emph{Point B:}
\end{itemize}
At this point we find the maximum possible rate at which the secondary user can transmit when the primary rate is $R_p^{B}=\rho_p^o$. In this case the relations of \eqref{eq:g_relationsF}-\eqref{eq:g_relationsL} are reduced to
\begin{eqnarray}
\label{eq:B1}
T\leq{I(Y_p;W|Q)},\\
\label{eq:B2}
\rho_p^o+T\leq{I(Y_p;WX_1X_2|Q)};\\
\label{eq:B3}
T\leq{I(Y_s;W|UQ)},\\
\label{eq:SinB}
S\leq{I(Y_s;U|WQ)},\\
\label{eq:B5}
S+T\leq{I(Y_s;UW|Q)}.
\end{eqnarray}
Since $T$ is irrelevant in \eqref{eq:SinB}, then $S$ can be set to
\begin{equation}
\label{eq:S^Bo}
S^{B}=I(Y_s;U|WQ).
\end{equation}
Hence, using chain rule in \eqref{eq:B2} and \eqref{eq:B5}, the maximum value for $T$ would be
\begin{equation}
\label{eq:T^Bo}
T^{B}=\min\{I(Y_p;W|Q),I(Y_s;W|Q)\}
\end{equation}
and $R_s^{B}=S^{B}+T^{B}$.
\begin{itemize}
	\item \emph{Point D:}
\end{itemize}
$R_1^D=R_2^D=R_p^{D}=0$, then \eqref{eq:g_relationsF}-\eqref{eq:g_relationsL} are reduced to
\begin{eqnarray}
T\leq{I(Y_p;W|X_1X_2Q)};\\
\label{eq:SinD}
S\leq{I(Y_s;U|WQ)},\\
T\leq{I(Y_s;W|UQ)},\\
S+T\leq{I(Y_s;UW|Q)}.
\end{eqnarray}
Since $T$ is irrelevant in \eqref{eq:SinD}, $S$ can be set to
\begin{equation}
\label{eq:S^Do}
S^{D}=I(Y_s;U|WQ).
\end{equation}
Then,
\begin{equation}
\label{eq:T^Do}
T^{D}=\sigma^*=\min\{I(Y_s;W|Q),I(Y_p;W|X_1X_2Q)\}
\end{equation}
and $R_s^{D}=S^{D}+T^{D}=\rho_s^o$.
\begin{itemize}
	\item \emph{Point C:}
\end{itemize}
At $R_s^{C}=\rho_s^o$, the maximum possible primary rate $R_p=R_1+R_2$ has to satisfy
\begin{eqnarray}
\label{eq:C1}
R_p\leq{I(Y_p;X_1X_2|WQ)},\\
\label{eq:C2}
R_p\leq{I(Y_p;WX_1X_2|Q)-\sigma^*}.
\end{eqnarray}
Using chain rule, \eqref{eq:C2} can be rewritten as
\begin{equation}
\label{eq:C2dash}
R_p\leq{I(Y_p;X_1X_2|WQ)+I(Y_p;W|Q)-\sigma^*}.
\end{equation}
Thus, if $I(Y_p;W|Q)-\sigma^*>0$ then \eqref{eq:C2dash} will be dominated by \eqref{eq:C1}. Otherwise, \eqref{eq:C2dash} dominates \eqref{eq:C1}. So, $R_p^{C}$ will be given by,
\begin{equation}
\label{eq:R_pC}
R_p^{C}=I(Y_p;X_1X_2|WQ)-\left[\sigma^*-I(Y_p;W|Q)\right]^+
\end{equation}
where $[x]^+=\max\{0,x\}$.
The following is to show that both points $(R_s^{B},R_p^{B})$ and $(R_s^{C},R_p^{C})$ lie on the line $R_s+R_p={\rho_{sp}^o}$:

	For Point B, using direct substitution with $$R_s^{B}=I(Y_s;U|WQ)+\min\{I(Y_p;W|Q),I(Y_s;W|Q)\}$$ and $$R_p^{B}=\rho_p^o$$ it is clear that $R_s^{B}+R_p^{B}=\rho_{sp}^o$.
	
	For Point C, we consider the following two possibilities:
	
\begin{itemize}
	\item $\sigma^* \geq{I(Y_p;W|Q)}$:
\end{itemize}
	Here $\min\{I(Y_s;W|Q),I(Y_p,W|Q)\}=I(Y_p;W|Q)$. Consequently,
	\begin{equation*}
	\rho_{sp}^o=I(Y_s;U|WQ)+I(Y_p;WX_1X_2|Q)
	\end{equation*}
	and
	\begin{equation*}
	R_s^{C}+R_p^{C}=I(Y_s;U|WQ)+I(Y_p;WX_1X_2|Q).
	\end{equation*}
\begin{itemize}
	\item $\sigma^* <I(Y_p;W|Q)$:
\end{itemize}	
	Since $I(Y_p;W|X_1X_2Q)\geq{I(Y_p;W|Q)}$, therefore $$I(Y_s;W|Q)<I(Y_p;W|Q).$$ Consequently,
	\begin{equation*}
	\rho_{sp}^o=I(Y_s;UW|Q)+I(Y_p;X_1X_2|WQ)
	\end{equation*}
	and
	\begin{equation*}
	R_s^{C}+R_p^{C}=I(Y_s;UW|Q)+I(Y_p;X_1X_2|WQ).
	\end{equation*}
	Therefore, both rate tuples $(R_s^{B},R_p^{B})$ and $(R_s^{C},R_p^{C})$ lie on the line $R_s+R_p=\rho_{sp}^o$.
\end{proof}

Note that, in the Appendix of \cite{HanKob} Han and Kobayashi argued that part of the achievable rate region by their introduced scheme was bounded by lines of slopes $-0.5$ and $-2$. Although from \eqref{eq:g_relationsF}-\eqref{eq:g_relationsL} reducing $T$ by a value of $r$ may result in increase of $R_p$ by $2r$, the proof that point $(R_s^{C},R_p^{C})$ lie on the line $R_s+R_p=\rho_{sp}^o$ means that a bound of slope $-2$ does not exist for $\mathcal{R}^o(Z)$.  

\begin{corollary}
Any rate tuple $(R_s,R_p)$ of the region
\begin{equation}
\label{eq:R^o}
\mathcal{R}^o=\mbox{closure of} \bigcup_{Z\in\mathcal{P}^*}\mathcal{R}^o(Z)
\end{equation}
is achievable.
\end{corollary}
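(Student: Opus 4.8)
The plan is to prove the statement in two short stages: first that the bare union $\bigcup_{Z\in\mathcal{P}^*}\mathcal{R}^o(Z)$ consists of achievable pairs, and then that closing it off changes nothing. Both stages are essentially bookkeeping that exploits the way achievability is defined tuple-by-tuple in \eqref{eq:R_Bineq}.

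For the first stage, I would fix an arbitrary pair $(R_s,R_p)\in\bigcup_{Z\in\mathcal{P}^*}\mathcal{R}^o(Z)$. Then $(R_s,R_p)\in\mathcal{R}^o(Z)$ for some particular $Z\in\mathcal{P}^*$, so by the definition of $\mathcal{R}^o(Z)$ (and Theorem \ref{th:R^o(Z)}, which identifies it with the region \eqref{eq:rate_regionCRS}) there is a tuple $(R_1,R_2,S,T)$ with $R_p=R_1+R_2$ and $R_s=S+T$ satisfying \eqref{eq:g_relationsF}--\eqref{eq:g_relationsL}. By Theorem \ref{th:ARR_CRS} this tuple is achievable for $C_{RS}$, and by Lemma \ref{lem:equivalence} the tuple $(R_1,R_2,R_s)$ is achievable for $C_B$; in particular the secondary rate $R_s$ and the primary sum rate $R_p=R_1+R_2$ are simultaneously achievable. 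Because \eqref{eq:R_Bineq} asks only for the existence of a suitable code for the given tuple, no choice of $Z$ uniform over the whole family is required, so every pair in the union is achievable.

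For the second stage I would show that the set of achievable pairs is closed, which immediately gives the claim. Let $(R_s,R_p)$ lie in the closure and fix $\eta>0$ and $0<\epsilon<1$. Since points of the union come arbitrarily close to $(R_s,R_p)$, and since each $\mathcal{R}^o(Z)$ — hence the union — is downward closed in both coordinates (immediate from \eqref{eq:rate_regionCRS} because the right-hand sides $\rho_p^o,\rho_s^o,\rho_{sp}^o$ are nonnegative), I can pick an achievable pair $(R_s',R_p')$ in the union with $R_s'\ge R_s-\eta/2$ and $R_p'\ge R_p-\eta/2$. Applying achievability of $(R_s',R_p')$ with tolerance $\eta/2$ yields, for $n$ large, a code whose rates exceed $R_s'-\eta/2\ge R_s-\eta$ and $R_p'-\eta/2\ge R_p-\eta$, which is exactly what achievability of $(R_s,R_p)$ demands. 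Since $\eta$ and $\epsilon$ were arbitrary, $(R_s,R_p)$ is achievable, completing the proof.

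The argument has no real technical depth; the only place that needs a moment's care is the closure step, where one must guarantee that the approximating achievable pair can be taken to dominate the target up to $\eta/2$ in each coordinate — and this is precisely where the downward-closedness of the $\mathcal{R}^o(Z)$'s and the built-in $\eta$-slack of the achievability definition are used together. Everything else is an appeal to Theorems \ref{th:ARR_CRS} and \ref{th:R^o(Z)} and Lemma \ref{lem:equivalence}.
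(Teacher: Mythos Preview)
Your argument is correct; the paper states this corollary without proof, treating it as an immediate consequence of Theorem~\ref{th:ARR_CRS}, Theorem~\ref{th:R^o(Z)} and Lemma~\ref{lem:equivalence}, and your two stages are exactly the details one would supply if asked to unpack that. The only remark is that the downward-closedness you invoke in the closure step is not actually needed: once you have a point of the union within $\eta/2$ of $(R_s,R_p)$ in each coordinate, the inequalities $R_s'\ge R_s-\eta/2$ and $R_p'\ge R_p-\eta/2$ already follow, and the $\eta$-slack in \eqref{eq:R_Bineq} does the rest.
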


\subsection{Achievable Rate Region for $C_{RS}^p$}
\label{subsec:ARR_CRSp}
Since in $C_{RS}^p$ the signal of one primary user has to be decodable at the secondary receiver, the model of $C_{RS}^p$ can be considered as the modified interference channel model, $C_m$, introduced in \cite{HanKob}. The signals of the two primary users can be treated as if they are produced from single source splitting its signal into two parts and encoding each part separately such that, one part is decodable at both receivers while the other is decodable only at the primary receiver. For this channel, we define the set $\delta_i^r(Z)$ as the set of all achievable rate tuples $(R_1,R_2,S,T)$ when the signal of primary transmitter $i$, $i\in\{1,2\}$, is decodable by the secondary receiver. Without loss of generality, we assume that $i=1$. Then, we define an achievable rate region for $C_{RS}^p$ in the following theorem.
\begin{theorem}
\label{th:ARR_CRSp}
For any $Z\in\mathcal{P}^*$, $\delta_1^r(Z)$ is the set of achievable rate tuples $(R_1,R_2,S,T)$ over the channel $C_{RS}^p$ if the following inequalities are satisfied:
\begin{eqnarray}
\label{eq:gr_relationsF}
R_1\leq I(Y_p;X_1|WX_2Q),\\
R_2\leq I(Y_p;X_2|WX_1Q),\\
\label{eq:T^r}
T\leq I(Y_p;W|X_1X_2Q),\\
R_1+R_2\leq I(Y_p;X_1X_2|WQ),\\
R_1+T\leq I(Y_p;WX_1|X_2Q),\\
R_2+T\leq I(Y_p;WX_2|X_1Q),\\
\label{eq:gr_relationsMp}
R_1+R_2+T\leq I(Y_p;WX_1X_2Q);\\
\label{eq:gr_relationsMs}
S\leq I(Y_s;U|WX_1Q),\\
T\leq I(Y_s;W|UX_1Q),\\
\label{eq:R_1^r}
R_1\leq I(Y_s;X_1|UWQ),\\
S+T\leq I(Y_s;UW|X_1Q),\\
R_1+S\leq I(Y_s;UX_1|WQ),\\
R_1+T\leq I(Y_s;WX_1|UQ),\\
\label{eq:gr_relationsL}
R_1+S+T\leq I(Y_s;UWX_1|Q).
\end{eqnarray}
\end{theorem}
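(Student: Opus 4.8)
The plan is to mimic the standard Han--Kobayashi coding argument for the interference channel, adapted to the channel $C_{RS}^p$ in which the signal $X_1$ of primary user~1 must be decoded at the secondary receiver. First I would fix $Z\in\mathcal{P}^*$ and a joint distribution $Q\,p(x_1|q)\,p(x_2|q)\,p(u|q)\,p(w|q)$ with $x_s=f(uw|q)$. I generate a time-sharing sequence $\mathbf{q}$ i.i.d.\ from $p(q)$; then, conditioned on $\mathbf{q}$, I generate $M_1$ codewords $\mathbf{x}_{1i}$, $M_2$ codewords $\mathbf{x}_{2j}$, $N_s$ ``cloud-center'' codewords $\mathbf{w}_l$, and $L_s$ codewords $\mathbf{u}_k$, all independently according to their respective conditional marginals; the secondary transmitter sends $\mathbf{x}_{skl}$ obtained symbol-by-symbol via $f(\cdot,\cdot|\cdot)$ from $(\mathbf{u}_k,\mathbf{w}_l,\mathbf{q})$. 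Both receivers use joint typicality decoding: the primary receiver looks for the unique $(i,j,l)$ with $(\mathbf{q},\mathbf{x}_{1i},\mathbf{x}_{2j},\mathbf{w}_l,\mathbf{y}_p)$ jointly typical, and the secondary receiver looks for the unique $(i,k,l)$ with $(\mathbf{q},\mathbf{x}_{1i},\mathbf{u}_k,\mathbf{w}_l,\mathbf{y}_s)$ jointly typical.

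Next I would carry out the error-probability analysis separately at the two receivers. At the primary receiver, the signal effectively seen is that of three independent ``users'' carrying rates $R_1$, $R_2$ and $T$ (the public secondary message), so a routine packing-lemma / Fano-type argument over the $2^3-1=7$ nonempty error events produces exactly the seven inequalities \eqref{eq:gr_relationsF}--\eqref{eq:gr_relationsMp}; this is literally the MAC region for three users $(X_1,X_2,W)$ into $Y_p$ conditioned on $Q$. At the secondary receiver, the decoded messages are $(X_1,U,W)$ carrying rates $R_1$, $S$ and $T$, again three independent users, so the same argument yields the seven inequalities \eqref{eq:gr_relationsMs}--\eqref{eq:gr_relationsL} as the MAC region for $(X_1,U,W)$ into $Y_s$ conditioned on $Q$. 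Averaging over the random codebook and invoking the usual expurgation to control the maximal (rather than average) error probability shows that for every $\eta>0$ and $0<\epsilon<1$ there is a code $(n,M_1,M_2,L_s,N_s,\epsilon)$ meeting \eqref{eq:R^o_F}--\eqref{eq:R^o_L}, so the rate tuple is achievable over $C_{RS}^p$.

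The one point that needs genuine care, rather than a rote appeal to known results, is the bookkeeping of the joint error events when a codeword index is \emph{correct} in one position but wrong in another --- for instance the event that the true $\mathbf{w}_l$ is paired with a wrong $\mathbf{x}_{1i'}$ at $Y_p$, or a wrong $\mathbf{u}_{k'}$ at $Y_s$. Each such partial-overlap event contributes one of the ``mixed'' bounds such as $R_1+T\le I(Y_p;WX_1|X_2Q)$ or $R_1+S\le I(Y_s;UX_1|WQ)$, and one must check that the conditioning sets appearing there are exactly the ones dictated by the generation procedure (in particular that $U$ is generated conditioned only on $Q$, not on $W$, which is why the secondary bounds condition on $X_1$ but the term $I(Y_s;U|WX_1Q)$ still appears with $W$ in the conditioning through the joint-typicality set). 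I would therefore enumerate all $2^3-1$ events at each receiver explicitly, identify for each the correct conditional mutual information using the chain rule, and verify term by term that the resulting fourteen inequalities coincide with \eqref{eq:gr_relationsF}--\eqref{eq:gr_relationsL}. Since the statement only claims achievability (an inner bound) and not a converse, no further argument is required; I would relegate the detailed event enumeration to the appendix, exactly as the paper does for Theorem \ref{th:ARR_CRS}.
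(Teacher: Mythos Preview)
Your proposal is correct and follows essentially the same approach as the paper, which simply defers to Theorem~3.1 of Han and Kobayashi~\cite{HanKob}: random coding with an i.i.d.\ time-sharing sequence, independent superposition codebooks for $X_1,X_2,U,W$ given $Q$, and joint typicality decoding of the three-user MAC $(X_1,X_2,W)\to Y_p$ at the primary receiver and $(X_1,U,W)\to Y_s$ at the secondary receiver. Your write-up is in fact more explicit than the paper's one-line proof; the only superfluous step is the expurgation remark, since the code definition in $C_{RS}^p$ uses average (not maximal) error probability.
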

\begin{proof}
The proof follows exactly as the proof of Theorem 3.1 in \cite{HanKob}.
\end{proof}

\begin{corollary}
For $\delta_1^r=\cup_{Z\in\mathcal{P}^*}\delta_1^r(Z)$, any rate tuple of $\delta_1^r$ is achievable.
\end{corollary}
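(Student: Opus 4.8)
The statement to prove is the Corollary asserting that for $\delta_1^r = \cup_{Z\in\mathcal{P}^*}\delta_1^r(Z)$, any rate tuple of $\delta_1^r$ is achievable over $C_{RS}^p$.

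The plan is to argue this as an immediate consequence of Theorem \ref{th:ARR_CRSp} together with a standard time-sharing/union argument. First I would recall that Theorem \ref{th:ARR_CRSp} already establishes that for each \emph{fixed} $Z\in\mathcal{P}^*$, every rate tuple $(R_1,R_2,S,T)\in\delta_1^r(Z)$ is achievable over $C_{RS}^p$, meaning that for every $\eta>0$ and $0<\epsilon<1$ there is a code $(n,M_1,M_2,L_s,N_s,\epsilon)$ satisfying the rate inequalities \eqref{eq:R^o_F}--\eqref{eq:R^o_L} for sufficiently large $n$. Since the defining property ``achievable'' is stated per tuple and per $(\eta,\epsilon)$, a tuple lying in the union $\delta_1^r$ lies in $\delta_1^r(Z)$ for at least one $Z\in\mathcal{P}^*$, and the code promised by Theorem \ref{th:ARR_CRSp} for that particular $Z$ already witnesses achievability of the tuple. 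Thus the corollary follows without any new coding construction; the union over $Z$ simply collects the regions obtainable by different choices of the auxiliary/time-sharing distribution.

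The key steps, in order, are: (i) fix an arbitrary $(R_1,R_2,S,T)\in\delta_1^r$; (ii) by definition of the union, pick $Z\in\mathcal{P}^*$ with $(R_1,R_2,S,T)\in\delta_1^r(Z)$; (iii) invoke Theorem \ref{th:ARR_CRSp} to obtain, for any prescribed $\eta>0$ and $0<\epsilon<1$, a code over $C_{RS}^p$ whose rates meet \eqref{eq:R^o_F}--\eqref{eq:R^o_L}; (iv) conclude that $(R_1,R_2,S,T)$ is achievable over $C_{RS}^p$. If one also wishes the closure of the union to be achievable (as was done for $\mathcal{R}^o$ via a separate corollary), one appends a diagonal argument: a point in the closure is a limit of achievable points, and by slightly relaxing $\eta$ one absorbs the approximation error, so the limit point is achievable as well. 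Since the role of $Q$ as an explicit time-sharing random variable is already built into $\mathcal{P}^*$, no additional convexification over codes is needed here.

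I do not expect a genuine obstacle, as this is a bookkeeping corollary. The only point requiring a little care is making sure the quantifier order in the definition of achievability is respected: ``achievable'' is defined as ``for all $\eta,\epsilon$ there exists a code,'' so it is legitimate to choose the witnessing $Z$ first (it does not depend on $\eta$ or $\epsilon$) and then apply Theorem \ref{th:ARR_CRSp} for each $(\eta,\epsilon)$. One should also note that $\mathcal{P}^*$ is nonempty (e.g., take $U,W$ degenerate or independent uniform, $X_1,X_2$ arbitrary, and $X_s=f(UW|Q)$), so the union is well-defined and the claim is not vacuous.
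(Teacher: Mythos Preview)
Your proposal is correct and matches the paper's treatment: the corollary is stated without proof, as an immediate consequence of Theorem~\ref{th:ARR_CRSp} via the trivial observation that any tuple in the union lies in some $\delta_1^r(Z)$ and is therefore achievable. Your write-up is more detailed than necessary, but the reasoning is exactly the intended one.
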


For $C_{RS}^p$ we define the region $\mathcal{R}_i^r(Z)$ as the set of rate tuples $(R_s,R_p)$ where $R_s=S+T$, $R_p=R_1+R_2$ and $(R_1,R_2,S,T)$ is an element of $\delta_i^r(Z)$ for any $Z\in\mathcal{P}^*$, $i\in\{1,2\}$.
\begin{theorem}
\label{th:R^r(Z)}
For any $Z\in\mathcal{P}^*$ the achievable rate region $\mathcal{R}_1^r(Z)$ for the channel $C_{RS}^p$ consists of all rate pairs $(R_s,R_p)$ that satisfy
\begin{equation}
\begin{aligned}
R_s\leq\rho_s^r, \ \ \ \ &  R_p\leq\rho_p^r, & R_s+R_p\leq\rho_{sp}^r,\\
2R_s+R_p\leq\rho_{2p}^r, &  & R_s+2R_p\leq\rho_{s2}^r
\end{aligned}
\end{equation}
where
\begin{eqnarray}
\label{eq:rhos^r}
\rho_s^r=I(Y_s;U|WX_1Q)+\sigma_s^*,\\
\rho_p^r=I(Y_p;X_2|WX_1Q)+\sigma_p^*,
\end{eqnarray}
\begin{equation}
\begin{split}
\rho_{sp}^r=& I(Y_s;U|WX_1Q)+I(Y_p;X_2|WX_1Q)+\\
						&+\min\{I(Y_p;WX_1|Q),I(Y_s;WX_1|Q),\\
						& I(Y_p;W|X_1Q)+I(Y_s;X_1|WQ),\\
						& I(Y_p;X_1|WQ)+I(Y_s;W|X_1Q)\},
\end{split}
\end{equation}
\begin{equation}
\begin{split}
\rho_{2p}^r=& 2 I(Y_s;U|WX_1Q)+2\sigma_s^*+I(Y_p;X_2|WX_1Q)\\
            &-\left[\sigma_s^*-I(Y_p;W|X_1Q)\right]^++\min\{I(Y_s;X_1|WQ),\\
            &I(Y_s;WX_1|Q)-\sigma_s^*, I(Y_p;X_1|Q)\\
            &+\left[I(Y_p;W|X_1Q)-\sigma_s^*\right]^+,I(Y_p;X_1|WQ)\},
\end{split}
\end{equation}
\begin{equation}
\begin{split}
\rho_{s2}^r=& 2 I(Y_p;X_2|WX_1Q)+2\sigma_p^*+I(Y_s;U|WX_1Q)\\
            &-\left[\sigma_p^*-I(Y_s;X_1|WQ)\right]^++\min\{I(Y_p;W|X_1Q),\\
            &I(Y_p;WX_1|Q)-\sigma_p^*, I(Y_s;W|Q)\\
            &+\left[I(Y_s;X_1|WQ)-\sigma_p^*\right]^+,I(Y_s;W|X_1Q)\},
\end{split}
\end{equation}
and
\begin{eqnarray}
\sigma_s^*=\min\{I(Y_s;W|X_1Q),I(Y_p;W|X_1X_2Q)\},\\
\sigma_p^*=\min\{I(Y_p;X_1|WQ),I(Y_s;X_1|UWQ)\}
\end{eqnarray}
as shown in Fig. \ref{fig:RR_CRSp}
\end{theorem}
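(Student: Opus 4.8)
The plan is to characterize the projection of the polytope $\delta_1^r(Z)$ onto the plane $(R_s,R_p)=(S+T,R_1+R_2)$. By the remark preceding the theorem, $\delta_1^r(Z)$ is a Han--Kobayashi region for a two-user interference channel in which the ``first user'' is the secondary link, with private part $U$ (rate $S$, decoded only at $Y_s$) and common part $W$ (rate $T$, decoded at $Y_p$ and $Y_s$), and the ``second user'' is the compound primary, with private part $X_2$ (rate $R_2$, decoded only at $Y_p$) and common part $X_1$ (rate $R_1$, decoded at $Y_p$ and $Y_s$). Rather than eliminate $R_1,R_2,S,T$ blindly, I would mimic the proof of Theorem \ref{th:R^o(Z)}: the two-dimensional region is a convex polygon, and it suffices to locate its corner points and verify that each lies on one of the claimed facet lines $R_p=\rho_p^r$, $R_s=\rho_s^r$, $R_s+R_p=\rho_{sp}^r$, $2R_s+R_p=\rho_{2p}^r$, $R_s+2R_p=\rho_{s2}^r$, whose slopes $0,\infty,-1,-2,-1/2$ already match the structure of \eqref{eq:gr_relationsF}--\eqref{eq:gr_relationsL} (where trading $r$ units of $T$, or of $R_1$, against the opposite sum rate can yield up to $2r$).

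First I would compute the ``axis'' points. Setting $S=T=0$, constraints \eqref{eq:gr_relationsMs}--\eqref{eq:gr_relationsL} reduce to $R_1\le I(Y_s;X_1|UWQ)$ plus weaker bounds, while \eqref{eq:gr_relationsF}--\eqref{eq:gr_relationsMp} reduce to $R_1\le I(Y_p;X_1|WX_2Q)$, $R_2\le I(Y_p;X_2|WX_1Q)$, $R_1+R_2\le I(Y_p;X_1X_2|WQ)$; maximizing $R_p=R_1+R_2$ over the split, using the chain rule and the conditional-independence inequalities valid on $\mathcal{P}^*$, gives $R_p^{\max}=I(Y_p;X_2|WX_1Q)+\sigma_p^*=\rho_p^r$. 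Symmetrically, with $R_1=R_2=0$, one sets $S=I(Y_s;U|WX_1Q)$ and then maximizes $T$ subject to $T\le I(Y_p;W|X_1X_2Q)$ and to the chain-rule combination of $S+T\le I(Y_s;UW|X_1Q)$, giving $R_s^{\max}=I(Y_s;U|WX_1Q)+\sigma_s^*=\rho_s^r$.

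Next I would treat the ``intermediate'' corner points: the point of largest $R_p$ at which $R_s$ still equals $\rho_s^r$, the point of largest $R_s$ at which $R_p$ still equals $\rho_p^r$, and the one or two further breakpoints between them. For each I would reduce \eqref{eq:gr_relationsF}--\eqref{eq:gr_relationsL} by substituting the forced values of some variables, then solve the small linear program for the remaining ones, repeatedly using the chain rule to merge the sum constraints at a given receiver; the $[\,\cdot\,]^+$ terms in $\rho_{2p}^r$ and $\rho_{s2}^r$ appear exactly when a residual quantity such as $\sigma_s^*-I(Y_p;W|X_1Q)$ may be of either sign, precisely as the split $\sigma^*\gtrless I(Y_p;W|Q)$ occurred in Theorem \ref{th:R^o(Z)}. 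Finally I would substitute each corner point into the four sum-type expressions and confirm, by a case analysis on which of the terms attains each inner minimum in $\rho_{sp}^r$, $\rho_{2p}^r$, $\rho_{s2}^r$, that consecutive corner points lie on a common facet line, so that no inequality beyond the five listed is active.

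The main obstacle is the bookkeeping of redundant inequalities in this projection. Because here both ``sources'' are rate-split, every sum-rate constraint at either receiver can become binding, and a rigorous proof that the five listed facets are exhaustive really amounts to a Fourier--Motzkin elimination: eliminate $R_1,R_2$ subject to $R_1+R_2=R_p$ (introducing $\sigma_p^*$ and the $I(Y_p;\cdot)$ sums), then eliminate $S,T$ subject to $S+T=R_s$ (introducing $\sigma_s^*$), discarding at each stage the inequalities implied by others via the standard MAC inequalities $I(Y;A|BC)\ge I(Y;A|C)$ that hold under the conditional independence defining $\mathcal{P}^*$. The factor-$2$ facets $2R_s+R_p\le\rho_{2p}^r$ and $R_s+2R_p\le\rho_{s2}^r$ are the delicate part: they arise from combining one sum constraint that has already ``spent'' $T$ (respectively $R_1$) with a second one that spends it again, and verifying that the resulting bound is both tight and not dominated by the others is where the computation is least routine; I expect everything else to reduce to the chain rule together with the independence structure of $\mathcal{P}^*$.
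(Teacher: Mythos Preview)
Your proposal is correct and follows essentially the same approach as the paper: both recognize that $\delta_1^r(Z)$ is exactly a Han--Kobayashi region (with the secondary link playing one user, private part $U$ and common part $W$, and the compound primary playing the other, private part $X_2$ and common part $X_1$), and both proceed by locating the corner points of the projected polygon in the style of the proof of Theorem~\ref{th:R^o(Z)}. The paper is in fact even more terse than your plan---it simply cites the Appendix of \cite{HanKob} for the derivation and then lists the six corner points $A$--$F$ of Fig.~\ref{fig:RR_CRSp} without working through the elimination---so your explicit Fourier--Motzkin outline and case analysis on the inner minima would, if carried out, supply strictly more detail than the paper itself provides.
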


\begin{figure}[ht]
	\centering
		\includegraphics[width=0.35\textwidth]{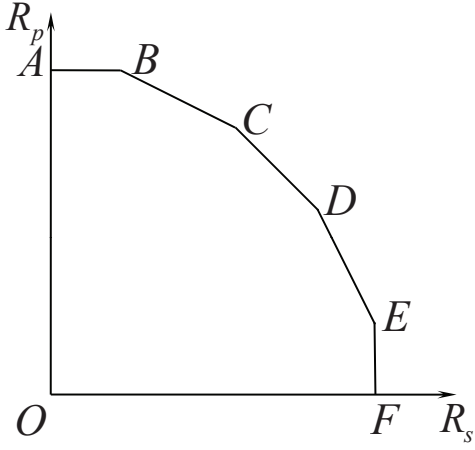}
	\caption{Ahievable rate region $\mathcal{R}_1^r(Z)$ of the channel $C_{RS}^p$ for $Z\in\mathcal{P}^*$.}
	\label{fig:RR_CRSp}
\end{figure}

\begin{proof}
From the similarity between $C_{RS}^p$ and the modified interference channel of Han and Kobayashi \cite{HanKob}, the derivation of the achievable rate region can be found in the Appendix of \cite{HanKob}. The analysis basically goes as that done for $\mathcal{R}^o(Z)$ in $C_{RS}$. In this proof we directly mention the corner points of the $\mathcal{R}_1^r(Z)$ shown in Fig. \ref{fig:RR_CRSp} as follows.
\begin{itemize}
	\item Point A: 
\end{itemize}
\begin{equation}
\label{eq:Rs^rA}
R_s^{A}=0,
\end{equation}
\begin{equation}
\label{eq:Rp^rA}
R_p^{A}=\rho_p^r=I(Y_p;X_2|X_1WQ)+\sigma_p^*.
\end{equation}
\begin{itemize}
	\item Point B:
\end{itemize}
\begin{equation}
\label{eq:Rs^rB}
\begin{split}
R_s^{B}=& I(Y_s;U|WX_1Q)-[\sigma_p^*-I(Y_s;X_1|WQ)]^+\\
				 &+\min\{I(Y_p;W|X_1Q),I(Y_p;WX_1|Q)-\sigma_p^*,\\
				 &I(Y_s;W|Q)+\left[I(Y_s;X_1|WQ)-\sigma_p^*\right]^+,\\
				 &I(Y_s;W|X_1Q)\},
\end{split}
\end{equation}
\begin{equation}
\label{eq:Rp^rB}
R_p^{B}=\rho_p^r=I(Y_p;X_2|X_1WQ)+\sigma_p^*.
\end{equation}
\begin{itemize}
	\item Point C:
\end{itemize}
\begin{equation}
\label{eq:Rs^rC}
R_s^{C}=2\rho_{sp}^r-\rho_{s2}^r,\\
\end{equation}
\begin{equation}
\label{eq:Rp^{rC}}
R_p^{C}=\rho_{s2}^r-\rho_{sp}^r.
\end{equation}
\begin{itemize}
	\item Point D:
\end{itemize}
\begin{equation}
\label{eq:Rs^rD}
R_s^{D}=\rho_{2p}^r-\rho_{sp}^r,\\
\end{equation}
\begin{equation}
\label{eq:Rp^rD}
R_p^{D}=2\rho_{sp}^r-\rho_{sp}^r.
\end{equation}
\begin{itemize}
	\item Point E:
\end{itemize}
\begin{equation}
\label{eq:Rs^rE}
R_s^{E}=I(Y_s;U|WX_1Q)+\sigma_s^*,
\end{equation}
\begin{equation}
\label{eq:Rp^rE}
\begin{split}
R_p^{E}=&I(Y_p;X_2|WX_1Q)-[\sigma_s^*-I(Y_p;W|X_1Q)]^+\\
         &+\min\{I(Y_s;X_1|WQ),I(Y_s;WX_1|Q)-\sigma_s^*,\\
          &I(Y_p;X_1|Q)+\left[I(Y_p;W|X_1Q)-\sigma_s^*\right]^+,\\
          &I(Y_p;X_1|WQ)\}.
\end{split}
\end{equation}
\begin{itemize}
	\item Point F:
\end{itemize}
\begin{equation}
\label{eq:Rs^F}
R_s^{rF}=\rho_s^r=I(Y_s;U|WX_1Q)+\sigma_s^*,
\end{equation}
\begin{equation}
\label{eq:Rp^rF}
R_p^{F}=0.
\end{equation}
\end{proof}
\begin{corollary}
Any rate tuple $(R_s,R_p)$ of the region
\begin{equation}
\label{eq:R^r}
\mathcal{R}_1^r=\mbox{closure} \bigcup_{Z\in\mathcal{P}^*} \mathcal{R}_1^r(Z)
\end{equation}
is achievable.
\end{corollary}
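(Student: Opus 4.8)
\emph{Proof proposal.}
The plan is to chain together facts already in hand and then invoke the standard principle that achievability regions are topologically closed. The three ingredients are: the definition of $\mathcal{R}_1^r(Z)$ as the set of pairs $(S{+}T,\,R_1{+}R_2)$ obtained from $(R_1,R_2,S,T)\in\delta_1^r(Z)$ (equivalently, Theorem~\ref{th:R^r(Z)}); Theorem~\ref{th:ARR_CRSp}, which states that every tuple of $\delta_1^r(Z)$ is achievable over $C_{RS}^p$; and Lemma~\ref{lem:equivalence^p}, which converts achievability of $(R_1,R_2,S,T)$ over $C_{RS}^p$ into achievability of $(R_1,R_2,S+T)$ over $C_B$. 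First I would fix an arbitrary point $(R_s,R_p)$ of the un-closed union $\bigcup_{Z\in\mathcal{P}^*}\mathcal{R}_1^r(Z)$: it lies in some $\mathcal{R}_1^r(Z)$, so there is $(R_1,R_2,S,T)\in\delta_1^r(Z)$ with $R_p=R_1+R_2$ and $R_s=S+T$; by Theorem~\ref{th:ARR_CRSp} and Lemma~\ref{lem:equivalence^p} the triple $(R_1,R_2,R_s)$ is achievable over $C_B$, so $(R_s,R_p)$ is an achievable secondary-rate/primary-sum-rate pair for $C_B$. Hence every point of the union is achievable.

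It remains to handle the closure. Let $\mathcal{A}$ denote the set of pairs $(R_s,R_p)$ that are achievable over $C_B$ in the above sense. I would show $\mathcal{A}$ is closed: if $(R_s,R_p)$ is a limit of points of $\mathcal{A}$, then for every $\eta>0$ there is a point $(R_s',R_p')\in\mathcal{A}$ with $|R_s'-R_s|$ and $|R_p'-R_p|$ both less than $\eta/2$ (each coordinate gap being at most the Euclidean distance), and invoking achievability of $(R_s',R_p')$ with a small enough per-user tolerance yields, for all large $n$, a code $(n,M_1,M_2,L_s,N_s,\epsilon)$ whose secondary and primary-sum log-rates $\frac1n\log(L_sN_s)$ and $\frac1n\log(M_1M_2)$ exceed $R_s-\eta$ and $R_p-\eta$ respectively (the constant-factor bookkeeping in $\eta$ being harmless since $\eta$ is arbitrary); hence $(R_s,R_p)\in\mathcal{A}$. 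Since the union lies in $\mathcal{A}$ by the first paragraph and $\mathcal{A}$ is closed, $\mathcal{R}_1^r=\mathrm{closure}\bigl(\bigcup_{Z}\mathcal{R}_1^r(Z)\bigr)\subseteq\mathcal{A}$, which is precisely the claim that every rate tuple of $\mathcal{R}_1^r$ is achievable over $C_B$.

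No new random-coding or typicality argument is needed here: all of the coding has been done in Theorem~\ref{th:ARR_CRSp} and Lemmas~\ref{lem:equivalence} and~\ref{lem:equivalence^p}, and the proof is the verbatim analogue of the passage from the regions $\mathcal{R}^o(Z)$ to $\mathcal{R}^o$. The only delicate point, and the one I would spell out, is the closedness of $\mathcal{A}$; an equivalent and perhaps slicker route is to note that each $\mathcal{R}_1^r(Z)$ is a down-set (it is defined by the ``$\leq$'' inequalities of Theorem~\ref{th:R^r(Z)}, all with nonnegative right-hand sides), so one may replace a nearby point of the union by the coordinatewise-smaller truncation $\bigl([R_s-\eta/2]^+,[R_p-\eta/2]^+\bigr)$, which still lies in the union and is hence achievable. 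Either phrasing completes the argument.
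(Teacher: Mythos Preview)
Your proposal is correct. The paper states this corollary without proof, treating it as an immediate consequence of Theorem~\ref{th:ARR_CRSp}, the definition of $\mathcal{R}_1^r(Z)$, and Lemma~\ref{lem:equivalence^p}; your argument simply spells out those implicit steps (including the routine closure point), so the approach is the same.
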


Constraining the signal of one primary user to be decodable at the secondary receiver might result in a degradation in the achievable primary rate especially when the secondary rate is very small. In general $\mathcal{R}^o$ and $\mathcal{R}_i^r$ do not necessarily include one another. However, there exists a case for which $\mathcal{R}^o\subseteq\mathcal{R}_i^r$. To characterize this case we introduce the following theorem. 
\begin{theorem}
\label{th:inclusion}
 For a given $Z\in\mathcal{P}^*$, $\mathcal{R}^o(Z)\subseteq\mathcal{R}_i^r(Z)$ if and only if
 \begin{equation}
 \label{eq:cond}
 I(Y_p;X_i|WQ)\leq I(Y_s;X_i|UWQ).
 \end{equation} 
\end{theorem}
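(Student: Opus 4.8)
The plan is to use the fact that, fixing $i$ (take $i=1$ without loss of generality; $i=2$ is symmetric), both $\mathcal{R}^o(Z)$ and $\mathcal{R}_1^r(Z)$ are closed, convex, bounded polygons in the first quadrant that are \emph{downward comprehensive}: each is cut out by inequalities with nonnegative coefficients, so membership of a point implies membership of every coordinatewise-smaller nonnegative point. For two such regions $\mathcal{R}^o(Z)\subseteq\mathcal{R}_1^r(Z)$ holds if and only if every extreme point of $\mathcal{R}^o(Z)$ lies in $\mathcal{R}_1^r(Z)$: convexity propagates containment to the convex hull of the vertices, and downward comprehensiveness then propagates it to all of $\mathcal{R}^o(Z)$. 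By the proof of Theorem \ref{th:R^o(Z)} the extreme points of $\mathcal{R}^o(Z)$ are the origin and the points $A=(0,\rho_p^o)$, $B$, $C$, $D=(\rho_s^o,0)$, with $B,C$ also lying on the line $R_s+R_p=\rho_{sp}^o$. So the theorem reduces to the statement that $A,B,C,D$ each satisfy the five bounds $R_s\le\rho_s^r$, $R_p\le\rho_p^r$, $R_s+R_p\le\rho_{sp}^r$, $2R_s+R_p\le\rho_{2p}^r$, $R_s+2R_p\le\rho_{s2}^r$ of Theorem \ref{th:R^r(Z)} if and only if \eqref{eq:cond} holds.

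For the ``only if'' direction I would use only the point $A$. If $\mathcal{R}^o(Z)\subseteq\mathcal{R}_1^r(Z)$ then $A=(0,\rho_p^o)$, whose second coordinate is the largest $R_p$ attained in $\mathcal{R}^o(Z)$, must satisfy $R_p\le\rho_p^r$, i.e.\ $\rho_p^o\le\rho_p^r$. Writing $\rho_p^o=I(Y_p;X_1X_2|WQ)=I(Y_p;X_1|WQ)+I(Y_p;X_2|WX_1Q)$ and $\rho_p^r=I(Y_p;X_2|WX_1Q)+\sigma_p^*$, this reads $I(Y_p;X_1|WQ)\le\sigma_p^*=\min\{I(Y_p;X_1|WQ),I(Y_s;X_1|UWQ)\}$, which, the minimum being at most its first argument, is equivalent to $I(Y_p;X_1|WQ)\le I(Y_s;X_1|UWQ)$, i.e.\ to \eqref{eq:cond}.

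For the ``if'' direction, assume \eqref{eq:cond}, so $\sigma_p^*=I(Y_p;X_1|WQ)$ and hence $\rho_p^r=\rho_p^o$. The two axis vertices are then essentially free: since $\rho_p^r=\rho_p^o$, $A$ is exactly the corner point $(0,\rho_p^r)$ of $\mathcal{R}_1^r(Z)$, which lies in $\mathcal{R}_1^r(Z)$ by Theorem \ref{th:R^r(Z)}; and $\rho_s^o\le\rho_s^r$ holds unconditionally (because $I(Y_s;U|WX_1Q)\ge I(Y_s;U|WQ)$ and $\sigma_s^*\ge\sigma^*$, both from the conditional independence of $X_1,X_2,U,W$ given $Q$), so $D=(\rho_s^o,0)$ is dominated by the corner point $(\rho_s^r,0)$ of $\mathcal{R}_1^r(Z)$ and lies in it. It remains to place $B$ and $C$; substituting their coordinates into the five bounds leaves a short list of scalar inequalities, essentially $\rho_{sp}^o\le\rho_{sp}^r$, $\rho_{sp}^o+\rho_p^o\le\rho_{s2}^r$, $\rho_{sp}^o+\rho_s^o\le\rho_{2p}^r$ and $2\rho_{sp}^o-\rho_s^o\le\rho_{s2}^r$ (the remaining substitutions reduce to these or hold unconditionally). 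Each I would prove by expanding both sides with the chain rule, cancelling the common $I(Y_p;X_2|WX_1Q)$ and $I(Y_s;U|\cdot)$ terms, and then repeatedly invoking two facts: the conditional-independence monotonicity inequalities such as $I(Y_p;W|X_1X_2Q)\ge I(Y_p;W|Q)$ and $I(Y_s;X_1|UWQ)\ge I(Y_s;X_1|WQ)$; and the regrouping identity $[\,I(Y_s;U|WX_1Q)-I(Y_s;U|WQ)\,]+I(Y_s;X_1|WQ)=I(Y_s;X_1|UWQ)$, which converts the ``slack'' created by conditioning on $X_1$ at the secondary receiver into exactly the quantity governed by \eqref{eq:cond} (used together with $\sigma_p^*=I(Y_p;X_1|WQ)$). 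A finite case split on which branch attains each minimum in $\rho_{sp}^o,\rho_{sp}^r,\sigma^*,\sigma_s^*,\rho_{2p}^r,\rho_{s2}^r$ then reduces every case to one of these elementary inequalities.

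The conceptual content is light — ``only if'' is a one-line consequence of inspecting $A$, and ``if'' is forced by \eqref{eq:cond} once $A$ and $D$ are disposed of — so I expect the only real obstacle to be the bookkeeping in the last step: matching the $\min$ and $[\cdot]^+$ structure of the boundary of $\mathcal{R}^o(Z)$ term by term against that of $\mathcal{R}_1^r(Z)$, and checking that the right regrouping works in every branch. Keeping this from ballooning is the delicate part; a clean way is to compare the two boundaries edge by edge, e.g.\ testing the slope-$(-1)$ edge $BC$ of $\mathcal{R}^o(Z)$ only against the slope-$(-1)$, slope-$(-1/2)$ and slope-$(-2)$ edges of $\mathcal{R}_1^r(Z)$ near which its endpoints fall, which bounds the number of branches one must inspect.
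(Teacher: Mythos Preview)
Your proposal is correct and follows essentially the same route as the paper: the necessity direction is identical (compare the maximal $R_p$ at point $A$), and for sufficiency the paper likewise reduces to checking that the corners $A,B,C,D$ of $\mathcal{R}^o(Z)$ lie in $\mathcal{R}_1^r(Z)$, then discharges the resulting scalar inequalities by the same chain-rule regroupings and conditional-independence monotonicities you list, with a case split on which branch attains each $\min$. The only cosmetic difference is that the paper packages two of your inequalities as the single comparison $R_s^{B^o}\le R_s^{B^r}$ (placing $B^o$ on the top edge of $\mathcal{R}_1^r(Z)$, which simultaneously handles the $\rho_{sp}^r$ and $\rho_{s2}^r$ constraints at $B$), and phrases the check $\rho_{sp}^o+\rho_s^o\le\rho_{2p}^r$ geometrically as ``the line $2R_s+R_p=\rho_{2p}^r$ meets $R_s+R_p=\rho_{sp}^o$ outside $\mathcal{R}^o(Z)$''; the underlying computations are the same.
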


\begin{proof}
Please refer to Appendix \ref{app:RRproof}.
\end{proof}

\begin{corollary}
\label{cor:Inclusion}
If for all $Z\in\mathcal{P}^*$ condition \eqref{eq:cond} is satisfied, then $\mathcal{R}^o\subseteq\mathcal{R}_i^r$, where $\mathcal{R}_i^r=\cup_{Z\in\mathcal{P}^*}\mathcal{R}_i^r(Z)$.
\end{corollary}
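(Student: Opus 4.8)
The plan is to obtain the corollary directly from Theorem~\ref{th:inclusion} by a pointwise-in-$Z$ argument, followed by a union step and a closure step. Fix $i\in\{1,2\}$ and assume that condition~\eqref{eq:cond}, i.e. $I(Y_p;X_i|WQ)\le I(Y_s;X_i|UWQ)$, holds for every $Z\in\mathcal{P}^*$.

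First I would apply Theorem~\ref{th:inclusion} to each $Z$ separately: since the hypothesis supplies condition~\eqref{eq:cond} for that individual $Z$, the theorem gives $\mathcal{R}^o(Z)\subseteq\mathcal{R}_i^r(Z)$. As this holds for every $Z\in\mathcal{P}^*$, taking the union over the parameter set on both sides and using monotonicity of the union yields
\begin{equation*}
\bigcup_{Z\in\mathcal{P}^*}\mathcal{R}^o(Z)\ \subseteq\ \bigcup_{Z\in\mathcal{P}^*}\mathcal{R}_i^r(Z).
\end{equation*}
Then I would pass to closures. By definition~\eqref{eq:R^o}, $\mathcal{R}^o$ is the closure of the left-hand side, and by definition~\eqref{eq:R^r} (read with index $i$) $\mathcal{R}_i^r$ is the closure of the right-hand side; since closure is monotone, the displayed inclusion upgrades to $\mathcal{R}^o\subseteq\mathcal{R}_i^r$, which is the assertion. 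If one instead reads $\mathcal{R}_i^r$ in the statement literally as the bare union $\cup_{Z\in\mathcal{P}^*}\mathcal{R}_i^r(Z)$, the same conclusion follows after observing that each $\mathcal{R}_i^r(Z)$ is closed, being cut out by finitely many non-strict linear inequalities in the mutual-information quantities, and that under the usual cardinality bounds on $\mathcal{U},\mathcal{W},\mathcal{Q}$ the set $\mathcal{P}^*$ is compact and the bounding functionals are continuous in the joint law, so the union is already closed.

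I do not expect a substantial obstacle: all the real content sits in Theorem~\ref{th:inclusion}, and the corollary is the routine ``union over the auxiliary/time-sharing parameter'' consequence. The only point needing a line of care is the interaction with the closure operations in the definitions of $\mathcal{R}^o$ and $\mathcal{R}_i^r$, handled by monotonicity of closure as above.
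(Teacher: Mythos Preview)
Your argument is correct and is exactly the routine union-over-$Z$ step the paper intends; the paper states the corollary without proof, treating it as immediate from Theorem~\ref{th:inclusion}. The closure/compactness digression you add goes beyond what the paper attempts, but the simple monotonicity-of-closure observation already suffices and matches the paper's level of rigor.
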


Theorem \ref{th:inclusion} shows that when a primary user encodes its messages at a rate decodable at both receivers, the primary network may achieve the same rate range when none of the signal of its users is decodable at the secondary receiver. Moreover, at every primary rate the secondary rate is enhanced (see Fig.\ref{fig:RRproof}). Hence, we conclude the following Proposition.
\begin{proposition}
If for any $Z\in\mathcal{P}^*$ condition \eqref{eq:cond} is satisfied, then allowing the secondary receiver to decode the signal of primary user $i$ at this $Z$ enhances the range of the secondary achievable rates without affecting the range of the achievable primary sum rates.
\end{proposition}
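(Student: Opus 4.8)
The plan is to observe that this Proposition is essentially a direct corollary of Theorem~\ref{th:inclusion}, so the work is to translate the set inclusion $\mathcal{R}^o(Z)\subseteq\mathcal{R}_i^r(Z)$ into the two operational statements asserted: (i) the range of achievable primary sum rates is unchanged, and (ii) at every achievable primary sum rate the attainable secondary rate does not decrease and strictly increases somewhere. Throughout, fix the $Z\in\mathcal{P}^*$ for which \eqref{eq:cond} holds and work entirely at that $Z$; the achievable rate regions in question are $\mathcal{R}^o(Z)$ (secondary treats all primary signals as noise) and $\mathcal{R}_i^r(Z)$ (secondary additionally decodes primary user $i$), both characterized explicitly in Theorems~\ref{th:R^o(Z)} and~\ref{th:R^r(Z)}.

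First I would invoke Theorem~\ref{th:inclusion}: since \eqref{eq:cond} holds at this $Z$, we have $\mathcal{R}^o(Z)\subseteq\mathcal{R}_i^r(Z)$. For the primary-rate-range claim, let $\rho_p^{o}$ and $\rho_p^{r}$ denote the maximal primary sum rates in the two regions (the $R_p$-coordinate of the respective corner point with $R_s=0$, i.e.\ point $A$ in Figs.~\ref{fig:RR_C_RS} and~\ref{fig:RR_CRSp}). The inclusion gives $\rho_p^{o}\le\rho_p^{r}$ immediately, since $(0,\rho_p^{o})\in\mathcal{R}^o(Z)\subseteq\mathcal{R}_i^r(Z)$. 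For the reverse inequality I would compare $\rho_p^{o}=I(Y_p;X_1X_2|WQ)$ from \eqref{eq:Rhos^o} with $\rho_p^{r}=I(Y_p;X_i|WX_{\bar i}Q)+\sigma_p^*$ from \eqref{eq:rhos^r}, where $\sigma_p^*=\min\{I(Y_p;X_i|WQ),I(Y_s;X_i|UWQ)\}$; under \eqref{eq:cond} the minimum equals $I(Y_p;X_i|WQ)$, so by the chain rule $\rho_p^{r}=I(Y_p;X_i|WX_{\bar i}Q)+I(Y_p;X_i|WQ)$. This is not obviously equal to $I(Y_p;X_1X_2|WQ)$, so the honest statement is that the \emph{range} $[0,\rho_p^{r}]$ of achievable primary sum rates contains $[0,\rho_p^{o}]$ and, because the primary network's own constraints \eqref{eq:g_relationsF}--\eqref{eq:g_relationsMp} are identical in the two schemes, the primary sum rate can never exceed $I(Y_p;X_1X_2|WQ)$ in either case when the secondary is forced to zero rate; hence the achievable primary sum-rate range is the same interval $[0,\rho_p^{o}]$ in both regions, and \eqref{eq:cond} guarantees $\rho_p^{r}\ge\rho_p^{o}$ with the binding primary constraint unchanged.

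Next, for the secondary-enhancement claim, I would argue pointwise: fix any primary sum rate $r\in[0,\rho_p^{o}]$ and let $R_s^{o}(r)=\sup\{R_s:(R_s,r)\in\mathcal{R}^o(Z)\}$ and $R_s^{r}(r)=\sup\{R_s:(R_s,r)\in\mathcal{R}_i^r(Z)\}$. The inclusion $\mathcal{R}^o(Z)\subseteq\mathcal{R}_i^r(Z)$ gives $(R_s^{o}(r),r)\in\mathcal{R}_i^r(Z)$, hence $R_s^{r}(r)\ge R_s^{o}(r)$ for every such $r$ — this is exactly ``the range of secondary achievable rates is enhanced (not reduced) at every primary rate.'' For strictness at some $r$, I would compare the secondary-only intercept $\rho_s^o=I(Y_s;U|WQ)+\sigma^*$ from \eqref{eq:Rhos^o} with $\rho_s^r=I(Y_s;U|WX_1Q)+\sigma_s^*$ from \eqref{eq:rhos^r}: decoding $X_i$ at the secondary receiver removes it from the noise, which typically makes $I(Y_s;U|WX_iQ)\ge I(Y_s;U|WQ)$ and similarly relaxes the $\sigma$-term, so $\rho_s^r\ge\rho_s^o$, and the containment being proper in Theorem~\ref{th:inclusion} (as illustrated in Fig.~\ref{fig:RRproof}) yields a strict gain on a nonempty set of primary rates.

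The main obstacle I anticipate is the careful handling of the word ``range'' for the primary sum rate: one must be precise that Theorem~\ref{th:inclusion} only gives $\mathcal{R}^o(Z)\subseteq\mathcal{R}_i^r(Z)$, which by itself shows the primary range can only grow, and then separately note that the primary receiver's decoding constraints are literally the same inequalities \eqref{eq:g_relationsF}--\eqref{eq:g_relationsMp} and \eqref{eq:gr_relationsF}--\eqref{eq:gr_relationsMp} in both channels, so no loss is possible either — combining the two directions to conclude the primary sum-rate range is genuinely unchanged. With that subtlety dispatched, the proposition follows by assembling Theorem~\ref{th:inclusion}, the explicit corner-point formulas of Theorems~\ref{th:R^o(Z)} and~\ref{th:R^r(Z)}, and the chain-rule identity forced by \eqref{eq:cond}. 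A short remark pointing to Fig.~\ref{fig:RRproof} for the geometric picture would complete the argument.
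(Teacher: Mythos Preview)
Your approach matches the paper's: the Proposition is stated there as an immediate consequence of Theorem~\ref{th:inclusion} (the text reads ``Hence, we conclude the following Proposition''), and you correctly identify that the work is to read off the two claims from the inclusion $\mathcal{R}^o(Z)\subseteq\mathcal{R}_i^r(Z)$ together with the corner-point formulas.

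There is, however, a slip in your computation of $\rho_p^r$ that leads you into an unnecessary detour. You write $\rho_p^{r}=I(Y_p;X_i|WX_{\bar i}Q)+I(Y_p;X_i|WQ)$, but the first term should be $I(Y_p;X_{\bar i}|WX_iQ)$ (the role of $i$ and $\bar i$ is swapped). With the correct indices, the chain rule gives
\[
\rho_p^{r}=I(Y_p;X_{\bar i}|WX_iQ)+I(Y_p;X_i|WQ)=I(Y_p;X_1X_2|WQ)=\rho_p^{o}
\]
directly, so ``the range of achievable primary sum rates is unchanged'' falls out in one line. This is exactly the first step of the paper's proof of Theorem~\ref{th:inclusion} in Appendix~\ref{app:RRproof} (``Proof of $R_p^{A^o}= R_p^{A^r}$''), so you can simply cite it. Your workaround argument (``the primary receiver's constraints are literally the same'') is not wrong, but it is more delicate than you acknowledge: $\mathcal{R}_i^r(Z)$ carries the \emph{extra} constraint $R_i\le I(Y_s;X_i|UWQ)$ from \eqref{eq:R_1^r}, and it is precisely \eqref{eq:cond} that makes this non-binding. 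Once you fix the index typo, the convoluted two-sided containment argument becomes unnecessary.
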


We call Corollary \ref{cor:Inclusion} \emph{Primary Decodability Condition (PDC)}.

\subsection{Achievable Rate Region for the Channel $C_B$}

From $C_{RS}$ and $C_{RS}^p$ we define
\begin{equation}
\label{eq:R_i(Z)}
\mathcal{R}_i(Z)=\mathcal{R}^o(Z)\cup \mathcal{R}_i^r(Z), \quad Z\in\mathcal{P}^*, i\in\{1,2\},
\end{equation}
and
\begin{equation}
\label{eq:R_i}
\mathcal{R}_i=\text{closure} \bigcup_{Z\in\mathcal{P}^*}\mathcal{R}_i(Z),\quad i\in\{1,2\}.
\end{equation}
Hence, an achievable rate region for the channel $C_B$
\begin{equation}
\label{eq:R}
\mathcal{R}=\mathcal{R}_1\cup\mathcal{R}_2,
\end{equation}
or equivalently,
\begin{equation}
\label{eq:Req}
\mathcal{R}=\mathcal{R}^o\cup\mathcal{R}_1^r\cup\mathcal{R}_2^r.
\end{equation}

Note that, inequalities \eqref{eq:T^o} and \eqref{eq:T^r} used in $\delta^o(Z)$ and $\delta_1^r(Z)$, assuming $i=1$, respectively, to limit the error in decoding the public part of the secondary signal at the primary receiver while the primary signals are decoded successfully. In fact, the primary receiver may not be interested in limiting the probability of such error event. Similarly, inequality \eqref{eq:R_1^r} in $\delta_1^r(Z)$ may not be relevant as the secondary receiver is not interested in limiting the probability of error in decoding the primary signal when the two parts of its signal are decoded successfully. However, removing \eqref{eq:T^o} from the definition of $\delta^o(Z)$ and \eqref{eq:T^r} and \eqref{eq:R_1^r} from the definition of $\delta_1^r(Z)$ does not enhance the achievable rate region $\mathcal{R}$.

To demonstrate this fact, we define $\delta'(Z)$ exactly as $\delta(Z)$ but without the constraint of \eqref{eq:T^o}, and $\delta_1'^r(Z)$ exactly as $\delta_1^r(Z)$ but without the constraints \eqref{eq:T^r} and \eqref{eq:R_1^r}. Let $\mathcal{R}'^o(Z)$ and $\mathcal{R}_1'^r(Z)$ be two sets of rate tuples $(R_s,R_p)$ such that $R_s=S+T$ and $R_p=R_1+R_2$ and the rate tuple $(R_1,R_2,S,T)$ is an element of $\delta'^o(Z)$ and $\delta_1'^r(Z)$, respectively. Also we define $$\mathcal{R}_1'(Z)=\mathcal{R}'^o(Z)\cup\mathcal{R}_1'^r(Z).$$

\begin{theorem}
\label{th:noextention}
If $\mathcal{R}_1'=\bigcup_{Z\in\mathcal{P}^*}\mathcal{R}_1'(Z)$, then $\mathcal{R}_1'=\mathcal{R}_1$.
\end{theorem}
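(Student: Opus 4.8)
The plan is to show the two inclusions $\mathcal{R}_1 \subseteq \mathcal{R}_1'$ and $\mathcal{R}_1' \subseteq \mathcal{R}_1$. The first is immediate: since $\delta^o(Z) \subseteq \delta'^o(Z)$ and $\delta_1^r(Z) \subseteq \delta_1'^r(Z)$ (dropping constraints only enlarges the region), we get $\mathcal{R}^o(Z) \subseteq \mathcal{R}'^o(Z)$ and $\mathcal{R}_1^r(Z) \subseteq \mathcal{R}_1'^r(Z)$, hence $\mathcal{R}_1(Z) \subseteq \mathcal{R}_1'(Z)$ for every $Z$, and taking unions over $\mathcal{P}^*$ gives $\mathcal{R}_1 \subseteq \mathcal{R}_1'$. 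So the whole content is the reverse inclusion $\mathcal{R}_1' \subseteq \mathcal{R}_1$.

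For the reverse inclusion, I would first recompute, for a fixed $Z \in \mathcal{P}^*$, the $(R_s,R_p)$-projection of $\delta'^o(Z)$ exactly as was done for $\mathcal{R}^o(Z)$ in Theorem \ref{th:R^o(Z)}. The only change from the analysis there is that the single-variable bound \eqref{eq:T^o}, $T \le I(Y_p;W|X_1X_2Q)$, is gone. Tracing through the corner-point computations: Point A (with $R_s=0$) is unaffected since it uses \eqref{eq:R_pmax}; Point B uses \eqref{eq:B2} and \eqref{eq:B5}, neither of which is \eqref{eq:T^o}, so it is unaffected; Point D, however, used \eqref{eq:T^o} to get $T^D = \sigma^* = \min\{I(Y_s;W|Q), I(Y_p;W|X_1X_2Q)\}$, and without it $T^D$ becomes $\min\{I(Y_s;W|Q), \text{bound from the sum constraint}\}$. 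The key observation is that the binding sum constraint at Point D is \eqref{eq:g_relationsMp} with $R_1=R_2=0$, i.e. $T \le I(Y_p;WX_1X_2|Q)$, together with $S+T \le I(Y_s;UW|Q)$ and $S^D = I(Y_s;U|WQ)$, which still forces $T^D = \min\{I(Y_s;W|Q), I(Y_p;WX_1X_2|Q)\}$; whenever $I(Y_p;W|X_1X_2Q) < I(Y_s;W|Q)$, i.e. exactly when \eqref{eq:T^o} was binding, one checks this new Point D is no higher than Point B in $R_s$, so the extended region is still the polygon with vertices $A$, $B$, $C$, $D$ of Fig. \ref{fig:RR_C_RS}, possibly with $D$ shifted. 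The upshot I expect is that $\mathcal{R}'^o(Z)$ is either equal to $\mathcal{R}^o(Z)$ or contained in $\mathcal{R}_1^r(Z)$ — this is where the hypothesis structure of Theorem \ref{th:inclusion} re-enters: a point of $\mathcal{R}'^o(Z) \setminus \mathcal{R}^o(Z)$ can only appear when \eqref{eq:T^o} was active, which (one argues) forces condition \eqref{eq:cond} to fail in a direction that makes the analogous relaxed constraints in $\delta_1'^r(Z)$ slack, so the gained points already lie in $\mathcal{R}_1'^r(Z)$, and after taking closures and unions they lie in $\mathcal{R}_1$.

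Symmetrically, I would recompute the projection of $\delta_1'^r(Z)$, now with \eqref{eq:T^r} and \eqref{eq:R_1^r} removed, comparing against the corner points $A$ through $F$ of Theorem \ref{th:R^r(Z)}. The removal of \eqref{eq:R_1^r} $R_1 \le I(Y_s;X_1|UWQ)$ affects $\sigma_p^*$ (which is $\min\{I(Y_p;X_1|WQ), I(Y_s;X_1|UWQ)\}$) and the removal of \eqref{eq:T^r} affects $\sigma_s^*$; I would re-derive the corner points and show that any newly-included $(R_s,R_p)$ pair is dominated by a point of $\mathcal{R}^o(Z')$ or $\mathcal{R}_1^r(Z')$ for the same or a suitably chosen $Z' \in \mathcal{P}^*$. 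Putting the two pieces together, $\mathcal{R}_1'(Z) = \mathcal{R}'^o(Z) \cup \mathcal{R}_1'^r(Z) \subseteq \mathcal{R}_1$ for each $Z$, and taking the closure of the union over $\mathcal{P}^*$ yields $\mathcal{R}_1' \subseteq \mathcal{R}_1$; combined with the trivial inclusion this gives $\mathcal{R}_1' = \mathcal{R}_1$.

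The main obstacle is the bookkeeping in the second paragraph: one must carefully verify that dropping \eqref{eq:T^o} never produces a genuinely new $(R_s,R_p)$ point outside $\mathcal{R}_1 = \text{closure}\bigcup_Z (\mathcal{R}^o(Z) \cup \mathcal{R}_1^r(Z))$, which requires checking all corner points and all the $[\,\cdot\,]^+$ cases rather than a single generic computation, and similarly for the $C_{RS}^p$ side. I expect the clean statement to be that the relaxation changes only Point D of $\mathcal{R}^o(Z)$ and only in the regime where \eqref{eq:cond} fails, in which regime $\mathcal{R}_1^r(Z)$ (or $\mathcal{R}_1^r(Z')$ for a perturbed $Z'$) already covers the difference — so the union is unchanged even though individual $\mathcal{R}^o(Z)$ regions may grow.
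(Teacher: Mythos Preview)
Your overall structure is right --- the inclusion $\mathcal{R}_1 \subseteq \mathcal{R}_1'$ is trivial, and the content is the reverse --- but the mechanism you propose for the reverse inclusion is not the one that works, and the one you sketch has a genuine gap.

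You try to absorb the extra points of $\mathcal{R}'^o(Z)\setminus\mathcal{R}^o(Z)$ into $\mathcal{R}_1^r(Z)$ (or $\mathcal{R}_1'^r(Z)$) for the \emph{same} $Z$, invoking Theorem~\ref{th:inclusion} and condition~\eqref{eq:cond}. This does not go through. The constraint you drop, \eqref{eq:T^o}, bounds $T$ by $I(Y_p;W|X_1X_2Q)$; whether it is active has nothing to do with the comparison $I(Y_p;X_i|WQ)$ versus $I(Y_s;X_i|UWQ)$ that governs \eqref{eq:cond}. Worse, when \eqref{eq:cond} fails, Theorem~\ref{th:inclusion} tells you $\mathcal{R}^o(Z)\not\subseteq\mathcal{R}_1^r(Z)$, so $\mathcal{R}_1^r(Z)$ certainly cannot swallow the still larger $\mathcal{R}'^o(Z)$. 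And routing the extra points through $\mathcal{R}_1'^r(Z)$ is circular, since that region is itself primed.

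The missing idea is that you must change the auxiliary $Z$. If $(R_s',R_p')\in\mathcal{R}'^o(Z)\setminus\mathcal{R}^o(Z)$, then necessarily $R_s'\le I(Y_s;UW|Q)=I(Y_s;X_s|Q)$ and $R_p'< I(Y_p;X_1X_2|Q)$. Now take $Z'\in\mathcal{P}^*$ obtained from $Z$ by setting $W=\phi$ (no rate-splitting): for this $Z'$ the region $\mathcal{R}^o(Z')$ is exactly the rectangle $R_s\le I(Y_s;X_s|Q)$, $R_p\le I(Y_p;X_1X_2|Q)$, which contains $(R_s',R_p')$. Thus $\mathcal{R}'^o(Z)\subseteq\mathcal{R}^o(Z)\cup\mathcal{R}^o(Z')$. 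The $C_{RS}^p$ side is handled the same way: dropping \eqref{eq:T^r} is absorbed by a $Z''$ with $W=\phi$ (this is the Chong--Motani--Garg--El~Gamal observation), and when $R_1$ exceeds $I(Y_s;X_1|UWQ)$ one checks that the remaining constraints of $\delta_1'^r(Z)$ already place the tuple inside $\delta'^o(Z)$, which was just handled. Your phrase ``a suitably chosen $Z'$'' gestures in the right direction, but the specific choice $W=\phi$ is the whole point, and the detour through Theorem~\ref{th:inclusion} should be dropped.
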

\begin{proof}
Please refer to Appendix \ref{App:nxt}.
\end{proof}

\begin{corollary}
\label{cor:noex5tention}
For $$\mathcal{R}'=\text{closure of }\mathcal{R}'_1\cup\mathcal{R}'_2,$$ then $$\mathcal{R}'=\mathcal{R}.$$
\end{corollary}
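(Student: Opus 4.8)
The plan is to reduce Corollary \ref{cor:noex5tention} to Theorem \ref{th:noextention} by a purely set-theoretic manipulation, without re-opening any of the channel-coding arguments. Recall that $\mathcal{R}=\mathcal{R}_1\cup\mathcal{R}_2$ with $\mathcal{R}_i=\text{closure}\bigcup_{Z\in\mathcal{P}^*}\mathcal{R}_i(Z)$, and that $\mathcal{R}'$ is defined analogously from the primed regions $\mathcal{R}'_i(Z)=\mathcal{R}'^o(Z)\cup\mathcal{R}_i'^r(Z)$. Theorem \ref{th:noextention} asserts $\mathcal{R}_1'=\mathcal{R}_1$ (stated for $i=1$, but the argument is symmetric in the index, so it also gives $\mathcal{R}_2'=\mathcal{R}_2$). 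So the only thing to check is that the ``closure of union'' operation that builds $\mathcal{R}$ from the $\mathcal{R}_i$ and $\mathcal{R}'$ from the $\mathcal{R}_i'$ respects this equality.

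First I would invoke Theorem \ref{th:noextention} once for each primary index to get $\mathcal{R}_1'=\mathcal{R}_1$ and $\mathcal{R}_2'=\mathcal{R}_2$; the second equality follows by relabelling the two primary transmitters, which is legitimate since the definitions of $\delta'^o(Z)$, $\delta_i'^r(Z)$ and hence of $\mathcal{R}_i'(Z)$ are invariant (up to swapping $X_1\leftrightarrow X_2$) under that relabelling, and $\mathcal{P}^*$ is closed under it as well. Next I would take the union of these two equalities: $\mathcal{R}_1'\cup\mathcal{R}_2'=\mathcal{R}_1\cup\mathcal{R}_2$. Finally, taking closures of both sides and using $\text{closure}(A\cup B)=\text{closure}(\text{closure }A\cup\text{closure }B)$ together with the definition $\mathcal{R}'=\text{closure of }\mathcal{R}_1'\cup\mathcal{R}_2'$ and $\mathcal{R}=\mathcal{R}_1\cup\mathcal{R}_2$ (already a union of closed sets, hence equal to its own closure, matching \eqref{eq:R}), I would conclude $\mathcal{R}'=\mathcal{R}$.

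There is essentially no obstacle here beyond bookkeeping: the substantive content lives entirely in Theorem \ref{th:noextention}, whose proof (deferred to Appendix \ref{App:nxt}) must show that dropping the constraints \eqref{eq:T^o}, \eqref{eq:T^r} and \eqref{eq:R_1^r} does not enlarge the $(R_s,R_p)$ projection once one is allowed to take the union $\mathcal{R}^o(Z)\cup\mathcal{R}_i^r(Z)$ and then union over $Z$. The one mild point worth spelling out in the corollary's proof is why the symmetry argument is valid --- i.e., that nothing in the statement or proof of Theorem \ref{th:noextention} secretly used $i=1$ --- and why interchanging ``closure'' with the finite union of the two index-$i$ regions is harmless. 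Both are immediate, so I expect the proof of Corollary \ref{cor:noex5tention} to be two or three lines: apply Theorem \ref{th:noextention} for $i=1,2$, union, and take closures.

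\begin{proof}
By Theorem \ref{th:noextention}, $\mathcal{R}_1'=\mathcal{R}_1$. The statement and proof of Theorem \ref{th:noextention} are symmetric under interchanging the roles of the two primary transmitters, since the sets $\mathcal{P}^*$, $\delta'^o(Z)$ and $\delta_i'^r(Z)$ are invariant under the relabelling $X_1\leftrightarrow X_2$; hence also $\mathcal{R}_2'=\mathcal{R}_2$. Taking the union of these two equalities gives $\mathcal{R}_1'\cup\mathcal{R}_2'=\mathcal{R}_1\cup\mathcal{R}_2$. Since $\mathcal{R}_1$ and $\mathcal{R}_2$ are closed (being, by \eqref{eq:R_i}, closures of unions over $\mathcal{P}^*$), the right-hand side equals $\mathcal{R}$ as defined in \eqref{eq:R} and is already closed, so it coincides with its closure. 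Therefore
\begin{equation*}
\mathcal{R}'=\text{closure of }\left(\mathcal{R}_1'\cup\mathcal{R}_2'\right)=\text{closure of }\left(\mathcal{R}_1\cup\mathcal{R}_2\right)=\mathcal{R}.
\end{equation*}
\end{proof}
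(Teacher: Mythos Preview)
Your proposal is correct and matches the paper's treatment: the paper states Corollary \ref{cor:noex5tention} without a separate proof, treating it as an immediate consequence of Theorem \ref{th:noextention} by symmetry in the primary index together with the elementary closure/union manipulation you spell out. Your explicit justification that $\mathcal{R}_1\cup\mathcal{R}_2$ is already closed (finite union of closed sets) is a welcome clarification that the paper leaves implicit.
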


\section{Gaussian Channel}
\label{sec:Gauss}

In this section we quantify the obtained achievable rate regions in a Gaussian channel model. A memoryless Gaussian channel of the introduced system is defined by a tuple $(\mathcal{X}_1,\mathcal{X}_2,\mathcal{X}_s,\omega,\mathcal{Y}_p,\mathcal{Y}_s)$ with $\mathcal{X}_1=\mathcal{X}_2=\mathcal{X}_s=\mathcal{Y}_p=\mathcal{Y}_s=\Re$ (the field of real numbers), and a channel probability $\omega$ specified by,
\begin{eqnarray}
y_p=\sqrt{g_1^p}x_1+\sqrt{g_2^p}x_2+\sqrt{g_s^p}x_s+n_p,\\
y_s=\sqrt{g_1^s}x_1+\sqrt{g_2^s}x_2+\sqrt{g_s^s}x_s+n_s
\end{eqnarray}
for $x_1\in\mathcal{X}_1$, $x_2\in\mathcal{X}_2$, $x_s\in\mathcal{X}_s$, $y_p\in{\mathcal{Y}_p}$ and $y_s\in\mathcal{Y}_s$, where $n_p$ and $n_s$ are independent Gaussian additive noise samples with zero mean and variance $N_0$, and $g_1^p$, $g_2^p$, $g_s^p$, $g_1^s$, $g_2^s$ and $g_s^s$ are the channel power gains. Power constraints are imposed on codewords $\textbf{x}_1(i)$, $\textbf{x}_2(j)$, $\textbf{x}_s(k)$ ($i\in{\mathcal{M}_1}$, $j\in{\mathcal{M}_2}$, $k\in\mathcal{M}_s$):
\begin{eqnarray}
\frac{1}{n}\sum_{t=1}^{n}{(x_1(i)^{(t)})^2}={P_1},\\
\frac{1}{n}\sum_{t=1}^{n}{(x_2(j)^{(t)})^2}={P_2},\\
\frac{1}{n}\sum_{t=1}^{n}{(x_s(k)^{(t)})^2}={P_s}.
\end{eqnarray}

For computation, we define a subclass $\mathcal{G}(P_1,P_2,P_s)$ of $\mathcal{P}^*$ as follows: $Z=\phi UWX_1X_2X_sY_pY_s\in{\mathcal{G}(P_1,P_2,P_s)}$ if and only if $Z\in{\mathcal{P}^*}$, $\sigma^2(X_1)={P_1}$, $\sigma^2(X_2)={P_2}$ and $\sigma^2(X_s)={P_s}$ with $X_1$, $X_2$, $U$ and $W$ are zero mean Gaussian and $X_s=U+W$.
Hence, we have the following rate regions achievable:
\begin{eqnarray}
\mathcal{R}^o_g&=&\text{closure of } \bigcup_{Z\in\mathcal{G}(P_1,P_2,P_s)} \mathcal{R}^o(Z),\\
\mathcal{R}^r_{ig}&=&\text{closure of } \bigcup_{Z\in\mathcal{G}(P_1,P_2,P_s)} \mathcal{R}^r_i(Z), i\in\{1,2\},\\
\mathcal{R}_{ig}&=&\text{closure of }\bigcup_{Z\in\mathcal{G}(P_1,P_2,P_s)} \mathcal{R}_{i}(Z), i\in\{1,2\},\\
\mathcal{R}_g&=&\mathcal{R}^o_g\bigcup \left(\cup_{i\in\{1,2\}} \mathcal{R}^r_{ig}\right)= \mathcal{R}_{1g}\bigcup\mathcal{R}_{2g}.
\end{eqnarray}

Assume the secondary user splits its power into $\lambda P_s$ and $\bar{\lambda} P_s$ such that $0\leq\lambda\leq 1$ and $\lambda+\bar{\lambda}=1$. The part of secondary signal decodable at the primary and secondary receivers is encoded with power $\bar{\lambda} P_s$ where the other part is encoded with power $\lambda P_s$. Let $\tau(x)=0.5\log_2(1+x)$, the relevant quantities in Theorems \ref{th:R^o(Z)} and \ref{th:R^r(Z)} will be given by:

\begin{align*}
&I(Y_p;X_1X_2|W)=\tau \left(\frac{g_1^pP_1+g_2^pP_2}{g_s^p\lambda P_s+N_0}\right),\\
&I(Y_p;X_1X_2)=\tau \left(\frac{g_1^pP_1+g_2^pP_2}{g_s^p P_s+N_0}\right),\\
&I(Y_p;X_2|WX_1)=\tau \left(\frac{g_2^pP_2}{g_s^p\lambda P_s+N_0}\right),\\
&I(Y_p;X_1|W)=\tau \left(\frac{g_1^pP_1}{g_s^p\lambda P_s+g_2^pP_2+N_0}\right),\\
&I(Y_p;W|X_1X_2)=\tau \left(\frac{g_s^p\bar{\lambda}P_s}{g_s^p\lambda P_s+N_0}\right),\\
&I(Y_p;W|X_1)=\tau \left(\frac{g_s^p\bar{\lambda}P_s}{g_s^p\lambda P_s+g_2^p P_2+N_0}\right),\\
&I(Y_p;WX_1)=\tau \left(\frac{g_1^p P_1+g_s^p\bar{\lambda}P_s}{g_s^p\lambda P_s+g_2^p P_2+N_0}\right),\\
&I(Y_p;W)=\tau \left(\frac{g_s^p\bar{\lambda}P_s}{g_s^p\lambda P_s+g_1^pP_1+g_2^pP_2+N_0}\right)\\
&I(Y_p;X_1)=\tau \left(\frac{g_1^p P_1}{g_s^p P_s+g_2^p P_2+N_0}\right);\\
&I(Y_s;U|WX_1)=\tau\left(\frac{g_s^s\lambda P_s}{g_2^sP_2+N_0}\right),\\
&I(Y_s;U|W)=\tau\left(\frac{g_s^s\lambda P_s}{g_1^sP_1+g_2^sP_2+N_0}\right),\\
&I(Y_s;W|X_1)=\tau\left(\frac{g_s^s\bar{\lambda} P_s}{g_s^s\lambda P_s+g_2^sP_2+N_0}\right),\\
&I(Y_s;WX_1)=\tau\left(\frac{g_s^s\bar{\lambda}P_s+g_1^sP_1}{g_s^s\lambda P_s+g_2^sP_2+N_0}\right),\\
&I(Y_s;W)=\tau\left(\frac{g_s^s\bar{\lambda} P_s}{g_s^s\lambda P_s+g_1^sP_1+g_2^sP_2+N_0}\right),\\
&I(Y_s;X_1|W)=\tau\left(\frac{g_1^sP_1}{g_s^s\lambda P_s+g_2^sP_2+N_0}\right),\\
&I(Y_s;X_1|UW)=\tau\left(\frac{g_1^sP_1}{g_2^sP_2+N_0}\right).
\end{align*}

\subsection{Performance of Rate-Splitting}

In this subsection we study the effect of rate-splitting by the secondary link on the achievable rate regions $\mathcal{R}^o_g$ and $\mathcal{R}^r_{ig}$, $i\in\{1,2\}$ and hence $\mathcal{R}_{ig}$. For each region there exists a case for which no rate-splitting determines the overall region, i.e., each achievable rate region is obtained at $\lambda=0$ or $\lambda=1$. We say that rate-splitting does not affect an achievable rate region $\mathcal{A}$ if $\mathcal{A}(Z)$ coincides on $\mathcal{A}$ at $\lambda=0$ or $\lambda=1$, $Z\in\mathcal{G}(P_1,P_2,P_s)$, where $\mathcal{A}=\bigcup_{Z\in\mathcal{G}(P_1,P_2,P_s)}\mathcal{A}(Z)$, meaning that either decoding the whole secondary signal at the primary receiver or not decoding it at all determines $\mathcal{A}$.

\subsubsection{For $\mathcal{R}_g^o$}

The region $\mathcal{R}^o_g$ is obtained when the secondary receiver is assumed to treat the primary interference as noise. The following theorem determines the effect of rate-splitting on $\mathcal{R}^o_g$. 
\begin{theorem}
\label{th:R^o_g}
For $Z\in\mathcal{G}(P_1,P_2,P_s)$, an achievable rate region $\mathcal{R}^o(Z)$ coincides on $\mathcal{R}^o_g$ if and only if $\lambda=0$ and
\begin{equation}
I(Y_s;W)\leq{I(Y_p;W|X_1X_2)}
\end{equation}
or equivalently,
\begin{equation}
\label{eq:cond2}
g_s^s N_0\leq g_s^p(g_1^sP_1+g_2^sP_2+N_0).
\end{equation}
\end{theorem}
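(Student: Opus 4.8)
The plan is to work entirely inside the Gaussian subclass, where choosing $Z\in\mathcal{G}(P_1,P_2,P_s)$ amounts to choosing the power-split parameter $\lambda\in[0,1]$ (here $Q$ is deterministic and $X_s=U+W$). First I would use the Gaussian mutual-information formulas listed above to write the three bounding quantities of Theorem~\ref{th:R^o(Z)} as explicit functions of $\lambda$: $\rho_p^o(\lambda)=\tau(\tfrac{g_1^pP_1+g_2^pP_2}{g_s^p\lambda P_s+N_0})$, $\sigma^*(\lambda)=\min\{I(Y_p;W|X_1X_2)(\lambda),I(Y_s;W)(\lambda)\}$, $\rho_s^o(\lambda)=I(Y_s;U|W)(\lambda)+\sigma^*(\lambda)$, and $\rho_{sp}^o(\lambda)=\rho_p^o(\lambda)+I(Y_s;U|W)(\lambda)+\min\{I(Y_s;W)(\lambda),I(Y_p;W)(\lambda)\}$. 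Three elementary facts drive everything: (a) $\rho_p^o(\lambda)$ is strictly decreasing in $\lambda$ (assuming the nondegenerate case $g_s^p>0$), so over $\mathcal{R}_g^o$ the largest primary rate is attained only at $\lambda=0$; (b) by the chain rule $I(Y_s;U|W)(\lambda)+I(Y_s;W)(\lambda)=I(Y_s;UW)(\lambda)=\tau(\tfrac{g_s^sP_s}{g_1^sP_1+g_2^sP_2+N_0})=:\gamma$ is a constant, while $I(Y_s;U|W)(0)=0$ forces $I(Y_s;W)(0)=\gamma$ and $\rho_s^o(1)=\gamma$; and (c) $\rho_p^o(\lambda)+I(Y_p;W)(\lambda)=I(Y_p;X_1X_2W)(\lambda)=\tfrac12\log_2\tfrac{g_1^pP_1+g_2^pP_2+g_s^pP_s+N_0}{g_s^p\lambda P_s+N_0}$, which is also decreasing in $\lambda$.

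The second ingredient is the elementary monotone-region fact read off directly from the defining inequalities \eqref{eq:rate_regionCRS}: if $\rho_p^o(\lambda_1)\ge\rho_p^o(\lambda_2)$, $\rho_s^o(\lambda_1)\ge\rho_s^o(\lambda_2)$ and $\rho_{sp}^o(\lambda_1)\ge\rho_{sp}^o(\lambda_2)$, then $\mathcal{R}^o(Z(\lambda_2))\subseteq\mathcal{R}^o(Z(\lambda_1))$. Since every $\mathcal{R}^o(Z)$ is contained in $\mathcal{R}_g^o$, proving $\mathcal{R}^o(Z(0))=\mathcal{R}_g^o$ reduces to showing that $\lambda=0$ simultaneously maximizes all three of $\rho_p^o,\rho_s^o,\rho_{sp}^o$ over $[0,1]$ precisely when the stated hypothesis holds.

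For ($\Leftarrow$), assume \eqref{eq:cond2}; by monotonicity of $\tau$ this is exactly $I(Y_s;W)(0)\le I(Y_p;W|X_1X_2)(0)$ at $\lambda=0$. The $\rho_p^o$ claim is fact (a). For $\rho_s^o$: always $\rho_s^o(\lambda)\le I(Y_s;U|W)(\lambda)+I(Y_s;W)(\lambda)=\gamma$, and the hypothesis makes $\sigma^*(0)=I(Y_s;W)(0)=\gamma$, so $\rho_s^o(0)=\gamma$ is maximal. For $\rho_{sp}^o$ I would split on the sign of $\gamma-I(Y_p;W)(0)$. If $\gamma\le I(Y_p;W)(0)$ then $\rho_{sp}^o(0)=\rho_p^o(0)+\gamma$ while $\rho_{sp}^o(\lambda)\le\rho_p^o(\lambda)+I(Y_s;U|W)(\lambda)+I(Y_s;W)(\lambda)=\rho_p^o(\lambda)+\gamma\le\rho_p^o(0)+\gamma$. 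If instead $\gamma>I(Y_p;W)(0)$ then $\rho_{sp}^o(0)=\rho_p^o(0)+I(Y_p;W)(0)=I(Y_p;X_1X_2W)(0)$ by (c), and bounding $\rho_{sp}^o(\lambda)\le\rho_p^o(\lambda)+I(Y_s;U|W)(\lambda)+I(Y_p;W)(\lambda)=I(Y_p;X_1X_2W)(\lambda)+I(Y_s;U|W)(\lambda)$, the desired inequality $\rho_{sp}^o(\lambda)\le\rho_{sp}^o(0)$ reduces, after clearing logarithms and cancelling the common factor $g_1^pP_1+g_2^pP_2+g_s^pP_s+N_0$, to $N_0\,g_s^s\lambda P_s\le g_s^p\lambda P_s\,(g_1^sP_1+g_2^sP_2+N_0)$, i.e.\ precisely \eqref{eq:cond2}. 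Since $\mathcal{R}^o(Z(0))$ is closed and contains the whole union, it then equals $\mathcal{R}_g^o$. I expect this second sub-case of the sum-rate bound to be the main obstacle: it is the only place where the hypothesis is genuinely consumed, and it needs the algebra grouped so that everything collapses to \eqref{eq:cond2}.

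For ($\Rightarrow$), suppose $\mathcal{R}^o(Z)=\mathcal{R}_g^o$. Then $\mathcal{R}^o(Z)\supseteq\mathcal{R}^o(Z(0))$, so its primary bound is at least $\rho_p^o(0)$; by (a) this forces $\lambda=0$, i.e.\ $Z=Z(0)$. Moreover $\mathcal{R}^o(Z(0))$ must contain the point $(\rho_s^o(1),0)=(\gamma,0)\in\mathcal{R}^o(Z(1))$, hence $\rho_s^o(0)\ge\gamma$; since always $\rho_s^o(0)=\sigma^*(0)=\min\{I(Y_p;W|X_1X_2)(0),\gamma\}$, this forces $\gamma\le I(Y_p;W|X_1X_2)(0)$, i.e.\ $I(Y_s;W)(0)\le I(Y_p;W|X_1X_2)(0)$, which via monotonicity of $\tau$ is \eqref{eq:cond2}. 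This establishes the equivalence.
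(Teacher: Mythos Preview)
Your argument is correct and follows the same high-level strategy as the paper's Appendix~\ref{App:R^o_g}: express the three boundary quantities $\rho_p^o,\rho_s^o,\rho_{sp}^o$ of Theorem~\ref{th:R^o(Z)} as functions of $\lambda$ and show they are simultaneously maximized at $\lambda=0$ exactly when \eqref{eq:cond2} holds. Two differences are worth noting. For sufficiency, the paper simply writes $\rho_{sp}^o=\rho_p^o+\tau\!\bigl(\tfrac{g_s^sP_s}{g_1^sP_1+g_2^sP_2+N_0}\bigr)$ under \eqref{eq:cond2} and declares monotonicity obvious; this formula tacitly assumes $I(Y_s;W)\le I(Y_p;W)$ for every $\lambda$, which \eqref{eq:cond2} does \emph{not} imply (it only gives $I(Y_s;W)\le I(Y_p;W|X_1X_2)$). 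Your case split on the sign of $\gamma-I(Y_p;W)(0)$, together with the algebraic reduction of the second sub-case back to \eqref{eq:cond2}, covers precisely the situation the paper glosses over and makes the sufficiency airtight. For necessity, the paper differentiates $R_s^D$ in $\lambda$ to show that when \eqref{eq:cond2} fails the regions $\mathcal{R}^o(Z(\lambda))$ at different $\lambda$ are mutually incomparable; your argument via the two test points $(0,\rho_p^o(0))\in\mathcal{R}^o(Z(0))$ and $(\gamma,0)\in\mathcal{R}^o(Z(1))$ reaches the same conclusion more directly and without calculus.
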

\begin{proof}
Please refer to Appendix \ref{App:R^o_g}.
\end{proof}

Theorem \ref{th:R^o_g} shows that rate-splitting does not affect the achievable rate region $\mathcal{R}^o_g$ when inequality \eqref{eq:cond2} is satisfied. Hence, a primary receiver decoding all the secondary signal is preferable at this case. Fig. \ref{fig:R^o_g_C1} depicts this case for different values of $\lambda$. It is clear that $\mathcal{R}^o(Z)$ at smaller $\lambda$ contains $\mathcal{R}^o(Z)$ at larger $\lambda$. This figure was obtained at $g_1^p=2.5664$, $g_2^p=3.7653$, $g_1^s=0.1812$, $g_2^s=0.1784$, $g_s^p=2.3620$ and $g_s^s=8.6065$, and at the following power setup. The noise variance $N_0=1$ unit power and $\frac{P_1}{N_0}=\frac{P_2}{N_0}=\snr_p=10$ dB and $\frac{P_s}{N_0}=\snr_s=10$ dB. Note that, in this case the maximum secondary throughput does not depend on $\lambda$, so the best performance from the primary rate point of view is to decode all the secondary signal by setting $\lambda=0$.

\begin{figure}[ht]
  \centering
  \subfloat[The overall achievable rate region $\mathcal{R}^o_g$ is obtained when the whole secondary signal is decodable by the primary receiver. $\mathcal{R}^o_g$ is shown in blue.]{\label{fig:R^o_g_C1}\includegraphics[width=0.3\textwidth]{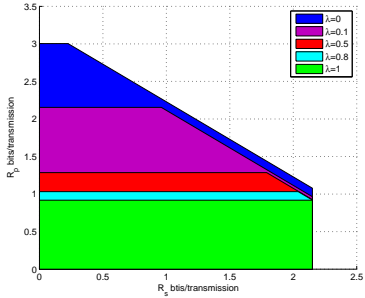}}\\                
  \subfloat[Rate-splitting affects the achievable rate region. $\mathcal{R}^o_g$ is shown in blue and $\mathcal{R}^o(Z)$ is shown in green for $\lambda=0$, yellow for $\lambda=0.1$ and red  for $\lambda=1$.]{\label{fig:R^o_g_C2}\includegraphics[width=0.3\textwidth]{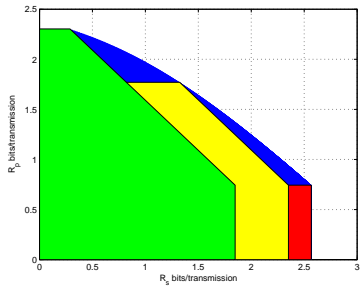}}
  
  \caption{Performance of rate-splitting by the secondary link when it treats the primary interference as noise.}
  \label{fig:RS_performance1}
\end{figure}

Moreover, when inequality \eqref{eq:cond2} is not satisfied, rate-splitting affects $\mathcal{R}^o_g$ as for any two different values of $\lambda$ the corresponding $\mathcal{R}^o(Z)$s do not contain one another. Hence, $\mathcal{R}^o_g$ is obtained by varying $\lambda$ from $0$ to $1$. Fig. \ref{fig:R^o_g_C2} represents the case when \eqref{eq:cond2} is not satisfied for the following parameters. $g_1^p=1.5066$, $g_2^p=0.8290$, $g_1^s=0.1902$, $g_2^s=0.0122$, $g_s^p=1.1953$ and $g_s^s=10.3229$ with the same power setup of Fig. \ref{fig:R^o_g_C1}.

Also, it is shown in \cite{ICT} that when \eqref{eq:cond2} is not satisfied, then the sum throughput of the whole network, i.e., $R_s+R_p$ increases with $\lambda$. That is, as $\lambda$ increases the primary sum rate decreases but the secondary rate gains an increase larger than the decrease in rate encountered by the primary network. Fig. \ref{fig:JournalRateIncrease} depicts $R_s+R_p$ for the same simulation parameters of Fig. \ref{fig:R^o_g_C2}. It is clear that the increase in the total sum rate, $R_s+R_p$, is accompanied by a decrease in the sum primary rate $R_p$. Hence, the sum primary rate has to be protected above a minimum limit.
\begin{figure}
	\centering
		\includegraphics[width=0.40\textwidth]{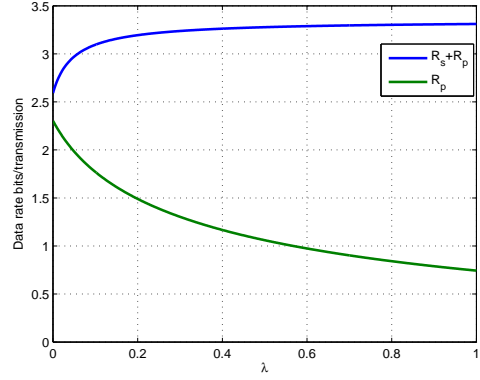}
	\caption{Increase in the sum rate of the whole network when  inequality \eqref{eq:cond2} is not satisfied.}
	\label{fig:JournalRateIncrease}
\end{figure}

\subsubsection{For $\mathcal{R}^r_{ig}$, $i\in\{1,2\}$}
The region $\mathcal{R}^r_{ig}$ is obtained when the secondary receiver can decode the signal of primary user $i$. Rate-splitting effect on this region is determined in the following theorem.

\begin{theorem}
\label{th:R^r_g}
For $Z\in\mathcal{G}(P_1,P_2,P_s)$ and $i\in\{1,2\}$, an achievable rate region $\mathcal{R}^r_{i}(Z)$ coincides on $\mathcal{R}^r_{ig}$ if and only if $\lambda=0$ and
\begin{equation}
I(Y_s;W|X_i)\leq{I(Y_p;W|X_1X_2)}
\end{equation}
or equivalently,
\begin{equation}
\label{eq:cond3}
g_s^s N_0\leq g_s^p(g_j^sP_j+N_0),\quad j\in\{1,2\},j\neq i.
\end{equation}
\end{theorem}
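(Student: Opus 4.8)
The plan is to follow the template used for Theorem~\ref{th:R^o_g} (Appendix~\ref{App:R^o_g}), taking $i=1$, hence $j=2$, without loss of generality and recovering $i=2$ at the end by interchanging the indices $1$ and $2$. For every $Z\in\mathcal{G}(P_1,P_2,P_s)$ the time-sharing variable $Q$ is degenerate and the powers $P_1,P_2,P_s$ are fixed, so the only free parameter is the power split $\lambda$; thus $\mathcal{R}^r_{1g}=\bigcup_{\lambda\in[0,1]}\mathcal{R}^r_1(Z)$, and ``$\mathcal{R}^r_1(Z)$ coincides on $\mathcal{R}^r_{1g}$'' at a given $\lambda$ means precisely that the region at that $\lambda$ contains the region at every other value of $\lambda$. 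First I would substitute the tabulated expressions $I(Y_s;W|X_1)=\tau\!\left(\frac{g_s^s\bar\lambda P_s}{g_s^s\lambda P_s+g_2^sP_2+N_0}\right)$ and $I(Y_p;W|X_1X_2)=\tau\!\left(\frac{g_s^p\bar\lambda P_s}{g_s^p\lambda P_s+N_0}\right)$ and clear denominators to obtain the equivalence between $I(Y_s;W|X_1)\le I(Y_p;W|X_1X_2)$ and \eqref{eq:cond3}; the structural observation, exactly as for \eqref{eq:cond2}, is that \eqref{eq:cond3} contains no $\lambda$, so it holds either for all $\lambda\in[0,1]$ or for none, and in the first case $\sigma_s^*=\min\{I(Y_s;W|X_1Q),I(Y_p;W|X_1X_2Q)\}=I(Y_s;W|X_1Q)$ throughout.

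For the necessity of $\lambda=0$: from the formulas, $I(Y_p;X_2|WX_1Q)=\tau\!\left(\frac{g_2^pP_2}{g_s^p\lambda P_s+N_0}\right)$ is strictly decreasing in $\lambda$ and $\sigma_p^*=\min\{I(Y_p;X_1|WQ),I(Y_s;X_1|UWQ)\}$ is nonincreasing (the minimum of a strictly decreasing function of $\lambda$ and a $\lambda$-independent quantity), so $\rho_p^r=I(Y_p;X_2|WX_1Q)+\sigma_p^*$ is strictly decreasing in $\lambda$. Hence, for $\lambda>0$, the corner point $A=(0,\rho_p^r|_{\lambda=0})$ of the $\lambda=0$ region lies in $\mathcal{R}^r_{1g}$ but violates $R_p\le\rho_p^r$ at that $\lambda$, so $\mathcal{R}^r_1(Z)$ does not coincide on $\mathcal{R}^r_{1g}$. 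For the necessity of \eqref{eq:cond3}: suppose $\lambda=0$ but \eqref{eq:cond3} fails, i.e.\ $I(Y_s;W|X_1)>I(Y_p;W|X_1X_2)$ for all $\lambda$; then near $\lambda=0$ the active branch of $\sigma_s^*$ is $I(Y_p;W|X_1X_2)$, so $\rho_s^r=I(Y_s;U|WX_1Q)+I(Y_p;W|X_1X_2Q)$ there, and a direct differentiation gives $\left.\frac{d}{d\lambda}\rho_s^r\right|_{\lambda=0}=\frac{1}{2\ln 2}\left(\frac{g_s^sP_s}{g_2^sP_2+N_0}-\frac{g_s^pP_s}{N_0}\right)$, which is strictly positive precisely when \eqref{eq:cond3} fails. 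So $\rho_s^r$ strictly increases at $\lambda=0^+$, and for small $\lambda>0$ the corner point $F=(\rho_s^r|_\lambda,0)$ belongs to $\mathcal{R}^r_{1g}$ but not to $\mathcal{R}^r_1(Z)|_{\lambda=0}$; again coincidence fails.

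For sufficiency, assume $\lambda=0$ and \eqref{eq:cond3}. For every $\lambda$, $\mathcal{R}^r_1(Z)$ is the polygon in the nonnegative quadrant cut out by the same five inequalities $R_s\le\rho_s^r$, $R_p\le\rho_p^r$, $R_s+R_p\le\rho_{sp}^r$, $2R_s+R_p\le\rho_{2p}^r$, $R_s+2R_p\le\rho_{s2}^r$ of Theorem~\ref{th:R^r(Z)}, so it suffices to show that each of the five right-hand sides, evaluated at $\lambda=0$, is at least its value at an arbitrary $\lambda'\in[0,1]$. Under \eqref{eq:cond3} the chain rule collapses $\rho_s^r=I(Y_s;U|WX_1Q)+I(Y_s;W|X_1Q)=I(Y_s;UW|X_1Q)=\tau\!\left(\frac{g_s^sP_s}{g_2^sP_2+N_0}\right)$, which is $\lambda$-independent, and $\rho_p^r$ is strictly decreasing in $\lambda$ by the preceding paragraph. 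For $\rho_{sp}^r$, $\rho_{2p}^r$ and $\rho_{s2}^r$ I would substitute the tabulated mutual informations, fold the term $I(Y_s;U|WX_1Q)$ inside the outer minima so that chain-rule identities such as $I(Y_s;U|WX_1Q)+I(Y_s;W|X_1Q)=I(Y_s;UW|X_1Q)$, $I(Y_s;U|WX_1Q)+I(Y_s;WX_1|Q)=I(Y_s;UWX_1|Q)$ and $I(Y_s;U|WX_1Q)+I(Y_s;X_1|WQ)=I(Y_s;UX_1|WQ)$ simplify the individual entries, use \eqref{eq:cond3} together with the elementary inequalities $I(Y_p;W|X_1Q)\le I(Y_p;W|X_1X_2Q)$ and $I(Y_s;X_1|WQ)\le I(Y_s;X_1|UWQ)$ (both immediate from the formulas) to decide which branch of every inner minimum and of every $[\cdot]^+$ is in force, and then verify that the resulting minimum is maximized at $\lambda=0$. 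Combining the five comparisons gives $\mathcal{R}^r_1(Z)|_{\lambda'}\subseteq\mathcal{R}^r_1(Z)|_{\lambda=0}$ for every $\lambda'$, hence $\mathcal{R}^r_1(Z)|_{\lambda=0}=\mathcal{R}^r_{1g}$; the case $i=2$ follows by interchanging $1$ and $2$.

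The main obstacle is the last step, the monotonicity of the sum-type bounds $\rho_{sp}^r$, $\rho_{2p}^r$ and $\rho_{s2}^r$. Unlike $\rho_p^r$ and $\rho_s^r$, these are minima of four entries, and after the chain-rule regrouping some entries increase in $\lambda$ while others decrease, so no single entry is uniformly extremal and one must argue about the minimum of the whole list. The delicate point is to show, using \eqref{eq:cond3}, that the active branch of each inner minimum --- and in particular whether the terms $[\sigma_s^*-I(Y_p;W|X_1Q)]^+$ and $[\sigma_p^*-I(Y_s;X_1|WQ)]^+$ vanish --- is the same for all $\lambda\in[0,1]$, so that the comparison at $\lambda=0$ versus $\lambda'$ reduces to finitely many branch-wise checks, each of which is a routine monotonicity statement about a $\tau(\cdot)$ of a ratio.
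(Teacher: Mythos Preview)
Your overall strategy matches the paper's: fix $i=1$, identify $\lambda$ as the only free parameter, prove the equivalence between the mutual-information inequality and \eqref{eq:cond3}, and then argue coincidence by showing that each of the five half-plane bounds $\rho_s^r,\rho_p^r,\rho_{sp}^r,\rho_{2p}^r,\rho_{s2}^r$ is maximized at $\lambda=0$ under \eqref{eq:cond3}. Your necessity argument (first $\lambda=0$ via strict monotonicity of $\rho_p^r$, then \eqref{eq:cond3} via the sign of $d\rho_s^r/d\lambda$ at $0^+$) is a clean repackaging of the paper's, which instead shows directly that when \eqref{eq:cond3} fails the corner values $R_p^{rA}$ and $R_s^{rF}$ move in opposite directions as $\lambda$ varies.

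The one place where your plan diverges from the paper and would not go through as stated is the treatment of the sum-type bounds. You propose to use \eqref{eq:cond3} to pin down, once and for all $\lambda\in[0,1]$, which branch of every inner minimum and every $[\cdot]^+$ is active, and then check monotonicity only on that branch. Condition \eqref{eq:cond3} does force $\sigma_s^*=I(Y_s;W|X_1Q)$ uniformly in $\lambda$, but it does \emph{not} decide, for example, the sign of $\sigma_s^*-I(Y_p;W|X_1Q)$ or of $\sigma_p^*-I(Y_s;X_1|WQ)$: those comparisons involve $g_2^p$ and $g_1^p,g_1^s$ respectively, which do not enter \eqref{eq:cond3}, and the active branch can switch as $\lambda$ varies. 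The paper does not try to identify a single active branch. Instead it uses the identity
\[
\text{(common terms)}+\min_k \mu_k \;=\; \min_k\bigl(\text{(common terms)}+\mu_k\bigr),
\]
writes out \emph{each} candidate $\text{(common)}+\mu_k$ explicitly, and checks that every one of them is nonincreasing in $\lambda$ (either by inspection or by a one-line derivative whose sign is fixed by \eqref{eq:cond3}). Since the minimum of nonincreasing functions is nonincreasing, this gives the desired monotonicity of $\rho_{sp}^r,\rho_{2p}^r,\rho_{s2}^r$ without ever knowing which branch is active. Replacing your ``fix the branch'' step by this branch-by-branch verification removes the obstacle you flagged and makes your sufficiency argument complete.
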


\begin{proof}
Please refer to Appendix \ref{App:R^r_g}
\end{proof}

Hence, if inequality \eqref{eq:cond3} is satisfied, $\mathcal{R}^r_{ig}$ is obtained without rate-splitting, specifically, when $\lambda=0$.

\begin{figure}[ht]
  \centering
  \subfloat[The overall achievable rate region $\mathcal{R}^r_{1g}$ is obtained when the whole secondary signal is decodable by the primary receiver. $\mathcal{R}^r_{1g}$ is shown in blue.]{\label{fig:R^r_g_C1}\includegraphics[width=0.3\textwidth]{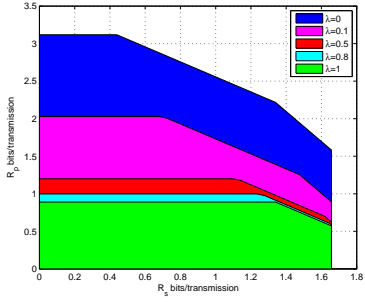}}\\                
  \subfloat[Rate-splitting affects the achievable rate region. $\mathcal{R}^r_{1g}$ is shown in blue and $\mathcal{R}^r_{1}(Z)$ is shown in green for $\lambda=0$, yellow for $\lambda=0.1$ and red for $\lambda=1$.]{\label{fig:R^r_g_C2}\includegraphics[width=0.3\textwidth]{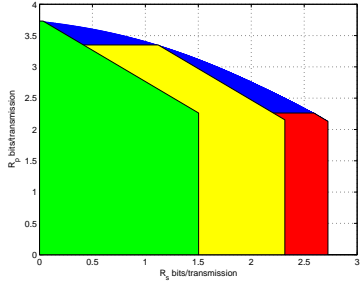}}
  
  \caption{Performance of rate-splitting by the secondary link when it can decode the signal of primary user $1$.}
  \label{fig:RS_performance2}
\end{figure}

Fig. \ref{fig:RS_performance2} shows the performance of rate-splitting under same power setup used with Fig. \ref{fig:RS_performance1}, where it is assumed that the secondary receiver can decode the signal of primary user $1$. In Fig. \ref{fig:R^r_g_C1} the achievable rate region $\mathcal{R}^r_{1g}$ coincides on $\mathcal{R}^r_{1}(Z)$ when inequality \eqref{eq:cond3} is satisfied. The parameters for this scenario are $g_1^p=5.5303$, $g_2^p=4.2865$, $g_1^s=0.6542$, $g_2^s=0.8121$, $g_s^p=3.9334$ and $g_s^s=8.1575$.

In Fig. \ref{fig:R^r_g_C2} the opposite scenario is considered where inequality \eqref{eq:cond3} is not satisfied. It is obvious that the overall rate region $\mathcal{R}^r_{1g}$ is obtained by varying $\lambda$ from $0$ to $1$ as a consequence of the fact that rate regions corresponding to different values of $\lambda$ do not include one another if inequality \eqref{eq:cond3} is not satisfied. The channel gains for Fig. \ref{fig:R^r_g_C2} are $g_1^p=9.566$, $g_2^p=14.5045$, $g_1^s=0.0808$, $g_2^s=0.2894$, $g_s^p=0.7032$ and $g_s^s=16.6226$.

Consequently, the achievable rate region $\mathcal{R}_{ig}$ coincides on $\mathcal{R}_{ig}(Z)$ at $\lambda=0$ if and only if \eqref{eq:cond3} is satisfied.

\subsection{On Decoding One Primary Signal}

In Subsection \ref{subsec:ARR_CRSp} we introduce an achievable rate-region for the DMC case assuming that the signal of one primary transmitters has to be reliably decoded by the secondary receiver. Although this may impose a constraint on the range of achievable sum rates by the primary network, we showed in Theorem \ref{th:inclusion} and Corollary \ref{cor:Inclusion} that there exists a condition for which this constraint only enhances the achievable rates for the secondary link without degrading the range of achievable rates by the primary network. This condition is called PDC. When applying this condition to the given Gaussian channel the PDC would be: If for all $Z\in\mathcal{G}(P_1,P_2,P_s)$ $I(Y_p;X_i|W)\leq I(Y_s;X_i|UW)$ then $\mathcal{R}^o_g\subseteq\mathcal{R}^r_{ig}$. Equivalently, the following inequality must hold, 
\begin{multline}
\label{eq:PDCG_1}
\tau \left(\frac{g_i^pP_i}{g_s^p\lambda P_s+g_j^pP_j+N_0}\right)\leq\tau\left(\frac{g_i^sP_i}{g_j^sP_j+N_0}\right),\\ \forall \lambda: 0\leq\lambda\leq 1,\quad j\neq i,\quad i,j\in\{1,2\}.
\end{multline}

But since $I(Ys;X_i|UW)$ does not depend on $\lambda$, then a necessary and sufficient condition to have \eqref{eq:PDCG_1} satisfied is

\begin{equation}
\label{eq:PDCG_2}
\frac{g_i^p}{g_j^pP_j+N_0}\leq\frac{g_i^s}{g_j^sP_j+N_0},\quad j\neq i, \quad i,j\in\{1,2\}.
\end{equation}

We call inequality \eqref{eq:PDCG_2} primary decodability condition for Gaussian channel (PDCG).

Fig. \ref{fig:JournalInclusion} shows a scenario for which three rate regions are obtained: $\mathcal{R}^o_g$, $\mathcal{R}^r_{1g}$ and $\mathcal{R}^r_{2g}$. It is clear that $\mathcal{R}^o_g\subseteq\mathcal{R}^r_{1g}$ meaning that primary user $1$ satisfies the PDCG described in \eqref{eq:PDCG_2}, whereas primary user $2$ does not. By decoding the signal of primary user $1$ at the secondary receiver, the range of achievable primary rates in $\mathcal{R}^o_g$ remains the same for $\mathcal{R}^r_{1g}$ while the secondary link can achieve higher rate at a given primary rate in $\mathcal{R}^r_{1g}$ than in $\mathcal{R}^o_g$. The power setup used to produce this figure is the same as that of Fig. \ref{fig:RS_performance1} and the channel gains are $g_1^p=0.3413$, $g_2^p=10.2047$, $g_1^s=0.2821$, $g_2^s=0.3782$, $g_s^p=0.2495$ and $g_s^s=6.3337$.

\begin{figure}
	\centering
		\includegraphics[width=0.40\textwidth]{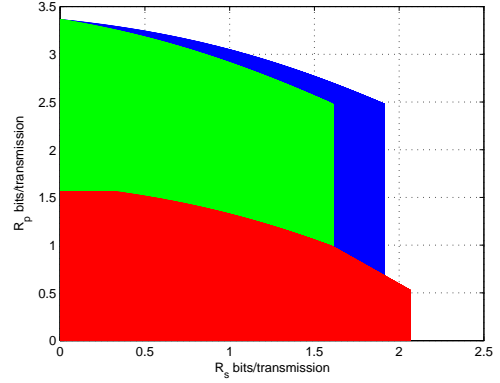}
	\caption{Achievable rate regions for the Gaussian channel. $\mathcal{R}^o_g$ is shown in green, $\mathcal{R}^r_{1g}$ in blue and $\mathcal{R}^r_{2g}$ in red.}
	\label{fig:JournalInclusion}
\end{figure}

\begin{figure}[ht]
	\centering
		\includegraphics[width=0.40\textwidth]{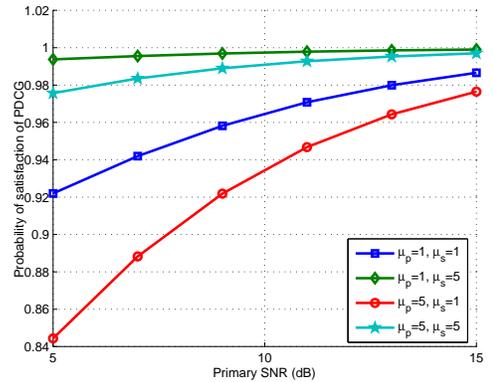}
	\caption{Probability of finding at least one primary user that satisfies the PDCG}
	\label{fig:ProbPDC}
\end{figure}

Note that, a primary user that satisfies PDCG does not always exist, so we evaluate the probability of PDCG as the probability of finding at least one primary user satisfying \eqref{eq:PDCG_2}. We assume $N_0=1$ unit power and  $g_1^s$ and $g_2^s$ are i.i.d. exponentially distributed with mean $\mu_s$, whereas $g_1^p$ and $g_2^p$ are i.i.d. exponentially distributed with mean $\mu_p$, where $g_1^s$, $g_2^s$, $g_1^p$ and $g_2^p$ are mutually independent. A closed form formula for the probability of PDCG is difficult to obtain, so we evaluate it numerically by generating $10^7$ different values for each channel gain element and calculating the average number of times at which neither primary user satisfies \eqref{eq:PDCG_2} at a given $P_1$ and $P_2$, then by subtracting it from $1$ we get a numerical estimate for the probability of PDCG. A simulation has been done in which we assume that $\frac{P_1}{N_0}=\frac{P_2}{N_0}=\snr_p$. We vary $\snr_p$ and evaluate the corresponding probability of PDCG. This simulation is done for the following pairs of $(\mu_p,\mu_s)$: $(1,1)$, $(1,5)$, $(5,1)$ and $(5,5)$. The result is shown in Fig. \ref{fig:ProbPDC}, where it is obvious that the probability of PDCG increases with $\snr_p$, and that the increase in $\mu_s$ yields more increase in probability of PDCG.

\section{Conclusion}
\label{sec:conc}
In this work we established an achievable rate region for a primary multiple access network coexisting with a secondary link that comprises one transmitter and a corresponding receiver. The achievable rate regions are obtained for the sum primary rate versus the secondary rate. We first considered DMC where the secondary link employs rate-splitting, and established two types of achievable rate regions: one type is when the secondary receiver cannot decode any of the primary signals, whereas the second is when the secondary is able to decode the signal of only one primary transmitter. The overall achievable rate region is the union of those two types of regions. Moreover, we showed that there exists a case for which allowing the secondary receiver to decode a primary signal results in an achievable rate region that includes the achievable rate region obtained when the secondary receiver does not decode the primary signal. Then, we investigated the performance of rate-splitting in the Gaussian channel where it was found that rate-splitting by the secondary user is useless when the channel between the secondary transmitter and the primary receiver supports larger rate than the channel between the two secondary nodes. Furthermore, on decoding the signal of a primary transmitter at the secondary receiver, a necessary and sufficient condition has been provided to allow the secondary user decode the primary signal without reducing the range of achievable primary sum rates but only increases the range of achievable secondary rates. Finally, we showed numerically that the probability of finding at least one primary user that satisfies this condition increases with the signal to noise ratio of the primary users.

\appendices
\section{Proof of Theorem \ref{th:ARR_CRS}}
\label{App:ARR_CRRS}

It is sufficient to show that there exists at least one code for which if the rate tuple $(R_1, R_2, S, T)$ satisfies \eqref{eq:g_relationsF}-\eqref{eq:g_relationsL} then the rate tuple is achievable. We use the following random code.
 
\subsection{Random Code Generation}
A random code $\mathcal{C}$ is generated as follows. Let $\textbf{q}=(q^{(1)},\cdots,q^{(n)})$ be a random i.i.d sequence of $\mathcal{Q}^n$, $\textbf{u}_k=(u_k^{(1)},\cdots,u_k^{(n)})$, $k\in\mathcal{L}_s$ a sequence of random variables of $\mathcal{U}^n$ that are i.i.d given $\textbf{q}$. Moreover, $\textbf{u}_k$ and $\textbf{u}_{k'}$ are independent $\forall k\neq k'$, $k,k'\in\mathcal{L}_s$. Similarly, generate $\textbf{w}_l$, $l\in\mathcal{N}_s$, $\textbf{x}_{1i}$, $i\in\mathcal{M}_1$ and $\textbf{x}_{2j}$, $j\in\mathcal{M}_2$.
\subsection{Encoding}
For primary user $1$ to send a message $i\in\mathcal{M}_1$, it sends $\textbf{x}_{1i}$. Similarly, for primary user $2$ to send a message $j\in\mathcal{M}_2$, it sends $\textbf{x}_{2j}$. For the secondary user to send a message $kl\in\mathcal{L}_s\times\mathcal{N}_s$, it sends $f^n(\textbf{u}_k\textbf{w}_l|\textbf{q})=\left(f^{(1)}(u^{(1)}_kw^{(1)}_l|q^{(1)}), \cdots, f^{(n)}(u^{(n)}_kw^{(n)}_l|q^{(n)}) \right)$, where $\textbf{q}$ is known at the transmitters.

\subsection{Decoding: Jointly-Typical Decoding}
We use the concept of jointly typical sequences and the properties of typical sets introduced in Chapter 15 of \cite{IT} to implement the decoding functions. Let $A_{\epsilon}^{(n)}$ denote the set of typical $(\textbf{q},\textbf{x}_1,\textbf{x}_2,\textbf{w}_l,\textbf{y}_p)$ sequences, then the primary receiver decides $ijl$ if $(\textbf{q},\textbf{x}_{1i},\textbf{x}_{2j},\textbf{w}_l,\textbf{y}_p)\in A_{\epsilon}^{(n)}$. Also, for $B_{\epsilon}^{(n)}$ is the set of typical $(\textbf{q},\textbf{u},\textbf{w},\textbf{y}_s)$ sequences, the secondary receiver decides $kl$ if $(\textbf{q},\textbf{u}_k,\textbf{w}_l,\textbf{y}_s)\in B_{\epsilon}^{(n)}$.

\subsection{Probability of Error Analysis}
By the symmetry of the random code generation, the conditional probability of error does not depend on the transmitted messages. Hence, the conditional probability of error is the same as the average probability of error. So, let $ijkl=1111$ are sent. An error occurs if the transmitted codewords are not typical with the received sequences.
\subsubsection{For the Primary Receiver}
Let the event
\begin{equation*}
\label{eq:E_p}
E_p(ijl)=\left\{(\textbf{q},\textbf{x}_{1i},\textbf{x}_{2j},\textbf{w}_l,\textbf{y}_p)\in A_{\epsilon}^{(n)}\right\}, 
\end{equation*}
hence the probability of error averaged over the random code $\mathcal{C}$ is
\begin{equation*}
\bar{P}e_p^o=P\left(E_p^c(111)\bigcup \cup_{ijl\neq 111} E_p(ijl)\right),
\end{equation*}
where $E_p^c(111)$ denotes the complement of $E_p(111)$. Using union bound we have
\begin{equation*}
\begin{split}
\bar{P}e_p^o & \leq  P\left(E_p^c(111)\right)+P\left(\cup_{ijl\neq 111}E_p(ijl)\right) \\
						 & \leq  P\left(E_p^c(111)\right)+(M_1-1)P(E_p(211)) \\
						 & \quad +(M_2-1) P(E_p(121))+(N_s-1) P(E_p(112))\\
						 & \quad +(M_1-1)(M_2-1) P(E_p(221))\\
						 & \quad +(M_1-1)(N_s-1) P(E_p(212))\\
						 & \quad +(M_2-1)(N_s-1) P(E_p(122))\\
						 & \quad +(M_1-1)(M_2-1)(N_s-1) P(E_p(222)).
\end{split}
\end{equation*}

From the properties of jointly typical sequences \cite{IT}, $P(E_p^c(111))\rightarrow\epsilon$ as $n\rightarrow\infty$, and 
\begin{equation*}
\begin{split}
P(E_p(211))&=2^{-n(H(X_1|Q)-H(X_1|X_2WY_pQ))+6\epsilon}\\
           &=2^{-n(I(X_1;X_2WY_p|Q))+6\epsilon}\\
           &=2^{-n(I(Y_p;X_1|WX_2Q))+6\epsilon},
\end{split}
\end{equation*}
where the last equality holds from the assumption that $X_1$, $X_2$, $U$ and $W$ are independent and conditionally independent given $Q$. Similarly for other $E_p(ijl\neq{111})$ and applying Equations \eqref{eq:R^o_F}-\eqref{eq:R^o_L} we get
\begin{equation*}
\begin{split}
\bar{P}e_p^o &\leq 2^{-n(I(Y_p;X_1|WX_2Q)-R_1+\eta-6\epsilon)}\\
						 & \quad +2^{-n(I(Y_p;X_2|WX_1Q)-R_2+\eta-6\epsilon)}\\
						 & \quad +2^{-n(I(Y_p;W|X_1X_2Q)-T+\eta-6\epsilon)}\\
						 & \quad +2^{-n(I(Y_p;X_1X_2|WQ)-(R_1+R_2)+\eta-6\epsilon)}\\
						 & \quad +2^{-n(I(Y_p;WX_1|X_2Q)-(T+R_1)+\eta-6\epsilon)}\\
						 & \quad +2^{-n(I(Y_p;WX_2|X_1Q)-(T+R_2)+\eta-6\epsilon)}\\
						 & \quad +2^{-n(I(Y_p;X_1X_2W|Q)-(T+R_1+R_2)+\eta-6\epsilon)}.
\end{split}
\end{equation*} 
Thus if \eqref{eq:g_relationsF}-\eqref{eq:g_relationsMp} are satisfied, $\bar{P}e_p^o\rightarrow \epsilon$ as $n\rightarrow\infty$. 

\subsubsection{For the Secondary Receiver}
Let the event
\begin{equation*}
E_s(kl)=\left\{(\textbf{q},\textbf{u}_k,\textbf{w}_l,\textbf{y}_s)\in B_{\epsilon}^{(n)}\right\}
\end{equation*}
hence the probability of decoding error averaged over the random code $\mathcal{C}$ is
\begin{equation*}
\bar{P}e_s^o=P\left(E_s^c(11)\bigcup \cup_{kl\neq 11} E_p(kl)\right),
\end{equation*}
where $E_s^c(11)$ denotes the complement of $E_s(11)$. Using union bound we have
\begin{equation*}
\begin{split}
\bar{P}e_s^o \leq & P(E_s^c(11))+(L_s-1)P(E_s(21))\\
								  &+(N_s-1)P(E_s(12))\\
								  &+(L_s-1)(N_s-1)P(E_s(22)).
\end{split}
\end{equation*}
Since $P(E_s^c(11))\rightarrow\epsilon$ as $n\rightarrow\infty$, then
\begin{equation*}
\begin{split}
\bar{P}e_s^o\leq & 2^{-n(I(Y_s;U|WQ)-S+\eta-6\epsilon)}\\
                 & +2^{-n(I(Y_s;W|UQ)-T+\eta-6\epsilon)}\\
                 & +2^{-n(I(Y_s;UW|Q)-(S+T)+\eta-6\epsilon)}
\end{split}
\end{equation*}
So, if \eqref{eq:g_relationsMs}-\eqref{eq:g_relationsL} are satisfied, $\bar{P}e_s^o\rightarrow\epsilon$ as $n\rightarrow\infty$. 

This concludes the proof.

\section{Proof of Theorem \ref{th:inclusion}}
\label{app:RRproof}
\subsection{Sufficiency Part}
Suppose \eqref{eq:cond} is satisfied, we use Fig. \ref{fig:RRproof} to prove that $\mathcal{R}^o(Z)\subseteq\mathcal{R}_i^r(Z)$. It is sufficient to show that $R_p^{A^o}= R_p^{A^r}$, $R_s^{B^o}\leq R_s^{B^r}$, $R_s^{D^o}\leq R_s^{F^r}$ and that lines $2R_s+R_p=\rho_{2p}^r$ and $R_s+R_p=\rho_{sp}^o$ intersect at a point $(R_s^*,R_p^*)$ for which $R_s^*\geq R_s^{D^o}$, i.e., the intersection between the two lines is outside $\mathcal{R}^o(Z)$. Consider the primary user whose signal is not decodable at the secondary receiver is indexed by $j$, $j\in\{1,2\}$ and $i\neq j$.

\begin{figure}
	\centering
		\includegraphics[width=0.28\textwidth]{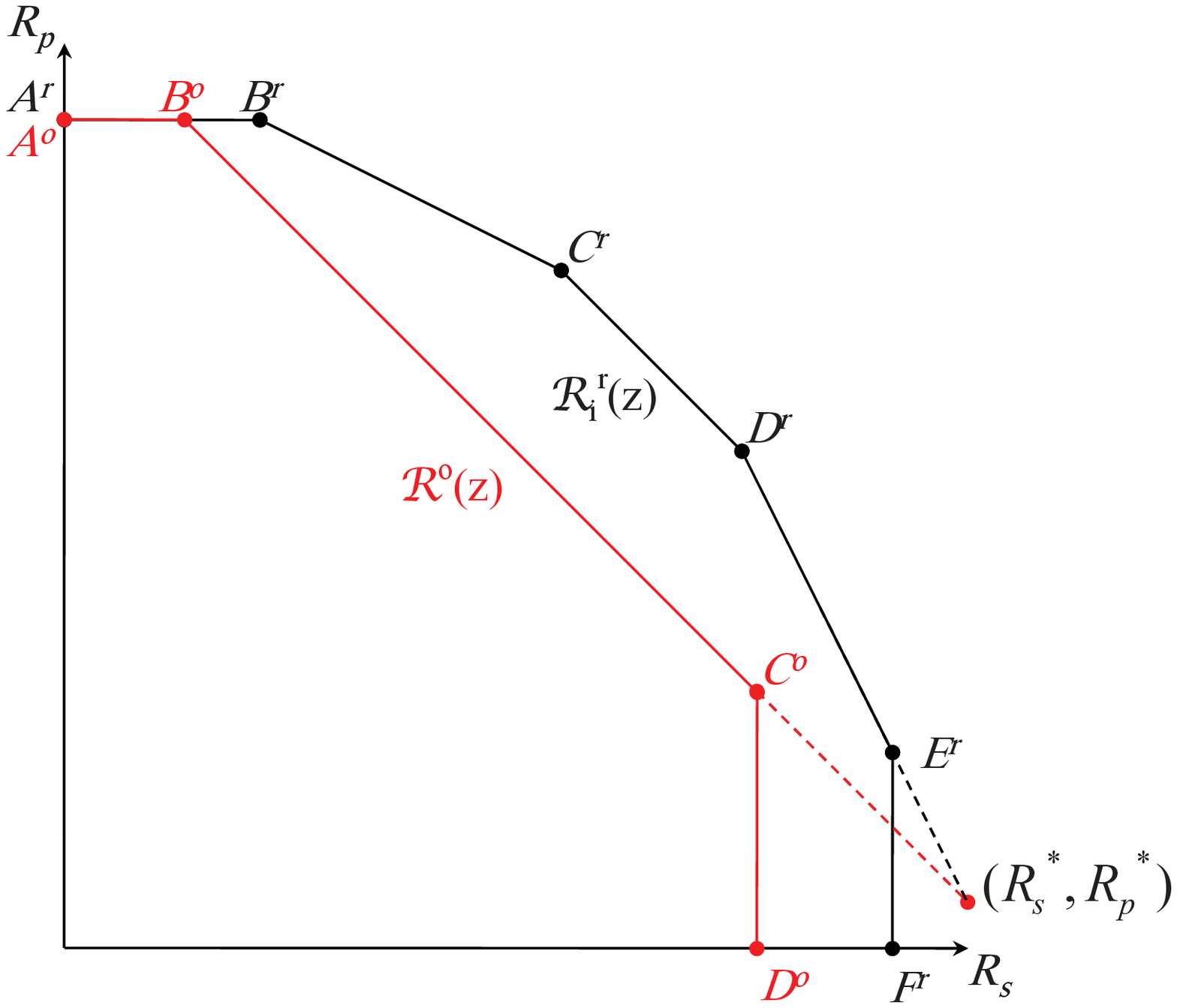}
	\caption{Regions $\mathcal{R}_i^r(Z)$ and $\mathcal{R}^o(Z)$ when $I(Y_p;X_i|WQ)\leq I(Y_s;X_i|UWQ)$.}
	\label{fig:RRproof}
\end{figure}

\subsubsection{Proof of $R_p^{A^o}= R_p^{A^r}$}
From the analysis of the channels $C_{RS}$ and $C_{RS}^p$ in Section \ref{sec:ARR} we have
\begin{eqnarray*}
R_p^{A^o}=&I(Y_p;X_1X_2|WQ),\\
R_p^{A^r}=&I(Y_p;X_j|WX_iQ)+\sigma_p^*.
\end{eqnarray*}
From \eqref{eq:cond}, $\sigma_p^*=I(Y_p,X_i|WQ)$. Therefore,
\begin{equation*}
R_p^{A^r}=I(Y_p;X_1X_2|WQ)=R_p^{A^o}.
\end{equation*}

\subsubsection{Proof of $R_s^{B^o}\leq R_s^{B^r}$}
From the proof of Theorem \ref{th:R^o(Z)}
\begin{equation}
\label{eq:R_s^Bo}
R_s^{B^o}=I(Y_s;U|WQ)+\min\{\overbrace{I(Y_p;W|Q)}^{o_1},\overbrace{I(Y_s;W|Q)}^{o_2}\},
\end{equation}
and from the proof of Theorem \ref{th:R^r(Z)}
\begin{equation*}
\begin{split}
R_s^{B^r}=& I(Y_s;U|WX_iQ)-[I(Y_p;X_i|WQ)\\
          &-I(Y_s;X_i|WQ)]^++\min\{I(Y_p;W|Q),\\
          & I(Y_s;W|Q)+[I(Y_s;X_i|WQ)-I(Y_p;X_i|WQ)]^+,\\
          &I(Y_s;W|X_iQ)\}.
\end{split}
\end{equation*}
\paragraph{If $I(Y_p;X_i|WQ)\leq I(Y_s;X_i|WQ)$}
\begin{equation}
\label{eq:R_s^Br1}
\begin{split}
R_s^{B^r}=& I(Y_s;U|WX_iQ)+\min\{\overbrace{I(Y_p;W|Q)}^{\nu_1},\\
          & \overbrace{I(Y_s;W|Q)+I(Y_s;X_i|WQ)-I(Y_p;X_i|WQ)}^{\nu_2},\\
          & \overbrace{I(Y_s;W|X_iQ)}^{\nu_3}\}.
\end{split}
\end{equation}
Note that, $\nu_1=o_1$.
\begin{itemize}
\item If $o_1\leq o_2$ in \eqref{eq:R_s^Bo}
\end{itemize}
\begin{equation*}
R_s^{B^o}=I(Y_s;U|WQ)+I(Y_p;W|Q),
\end{equation*}
\begin{equation*}
\begin{split}
R_s^{B^r}&=I(Y_s;U|WX_iQ)+I(Yp;W|Q)\\
         &\geq R_s^{B^o}. 
\end{split}
\end{equation*}

\begin{itemize}
	\item If $o_2\leq o_1$ in \eqref{eq:R_s^Bo}
\end{itemize}
\begin{equation*}
\begin{split}
R_s^{B^o}=& I(Y_s;U|WQ)+I(Y_s;W|Q)\\
         =& I(Y_s;UW|Q).
\end{split}
\end{equation*}
When $\nu_1= \min\{\nu_1,\nu_2,\nu_3\}$ in \eqref{eq:R_s^Br1}, then
\begin{equation*}
\begin{split}
R_s^{B^r}&= I(Y_s;U|WX_iQ)+\overbrace{I(Y_p;W|Q)}^{\geq o_2}\\
         & \geq R_s^{B^o}.
\end{split}
\end{equation*} 
When $\nu_2= \min\{\nu_1,\nu_2,\nu_3\}$ in \eqref{eq:R_s^Br1}, then
\begin{equation*}
\begin{split}
R_s^{B^r} & =I(Y_s;U|WX_iQ)+I(Y_s;W|Q)+I(Y_s;X_i|WQ)\\
          & \quad -I(Y_p;X_i|WQ)\\
          & \geq  I(Y_s;U|WX_iQ)+I(Y_s;W|Q)\\
          & \geq  R_s^{B^o}.
\end{split}
\end{equation*}
When $\nu_3= \min\{\nu_1,\nu_2,\nu_3\}$ in \eqref{eq:R_s^Br1}, then
\begin{equation*}
\begin{split}
R_s^{B^r}&=I(Y_s;U|WX_iQ)+I(Y_s;W|X_iQ)\\
         &=I(Y_s;UW|X_iQ)\\
         & \geq R_s^{B^o}.
\end{split}
\end{equation*}

\paragraph{If $I(Y_s;X_i|WQ)\leq I(Y_p;X_i|WQ)$}

\begin{equation}
\label{eq:R_s^Br2}
\begin{split}
R_s^{B^r}=& I(Y_s;U|WX_iQ)+I(Y_s;X_i|WQ)-I(Y_p;X_i|WQ)\\
          & +\min\{\overbrace{I(Y_p;W|Q)}^{\nu_4},\overbrace{I(Y_s;W|Q)}^{\nu_5}\}.
\end{split}
\end{equation}
Note that, $o_1=\nu_4$ and $o_2=\nu_5$.
\begin{itemize}
	\item If $o_1\leq o_2$ in \eqref{eq:R_s^Bo}
\end{itemize}
\begin{equation*}
\begin{split}
R_s^{B^o}=I(Y_s;U|WQ)+I(Y_p;W|Q),
\end{split}
\end{equation*}

\begin{equation*}
\begin{split}
R_s^{B^r} & = I(Y_s;UX_i|WQ)-I(Y_p;X_i|WQ)+I(Y_p;W|Q)\\
          & = I(Y_s;U|WQ)+I(Y_p;W|Q)\\
          & \quad +\overbrace{I(Y_s;X_i|UWQ)-I(Y_p;X_i|WQ)}^{\geq 0 \mbox{ from \eqref{eq:cond}}}\\
          & \geq R_s^{B^o}.    
\end{split}
\end{equation*}

\begin{itemize}
	\item If $o_2 \leq o_1$ in \eqref{eq:R_s^Bo}
\end{itemize}
Proof follows exactly as the case of $o_1 \leq o_2$.

\subsubsection {Proof of $R_s^{F^r}\geq R_s^{D^o}$}
\begin{equation*}
\begin{split}
R_s^{F^r}=& I(Y_s;U|WX_iQ)+\min\{I(Y_s;W|X_iQ),\\
          & I(Y_p;W|X_1X_2Q)\}.
\end{split}
\end{equation*}
\begin{equation*}
\begin{split}
R_s^{D^o}=& I(Y_s;U|WQ)+\min\{I(Y_s;W|Q),\\
          & I(Y_p;W|X_1X_2Q)\}.
\end{split}
\end{equation*}
It is obvious that each term in $R_s^{F^r}$ is greater than or equal to its corresponding term in $R_s^{D^o}$. Hence, $R_s^{F^r}\geq R_s^{D^o}$.

\subsubsection {Proof of the intersection point between the two lines $2R_s+R_p=\rho_{2p}^r$ and $R_s+R_p=\rho_{sp}^o$ occurs at a point $(R_s^*,R_p^*)$ where $R_s^*\geq R_s^{D^o}$}

The secondary rate of the intersection point is $R_s^*=\rho_{2p}^r-\rho_{sp}^o$. From Theorems \ref{th:R^o(Z)} and \ref{th:R^r(Z)}
\begin{equation}
\label{eq:R_s^DoG}
R_s^{D^o}=I(Y_s;U|WQ)+\sigma^*,
\end{equation}
\begin{equation}
\label{eq:R_s^*G}
\begin{split}
R_s^*=& 2I(Y_s;U|WX_iQ)+2\sigma_s^*+I(Y_p;X_j|WX_iQ)\\
      & -[\sigma_s^*-I(Y_p;W|X_iQ)]^++\min\{I(Y_s;X_i|WQ),\\
      & I(Y_s;WX_i|Q)-\sigma_s^*,I(Y_p;X_i|Q)+[I(Y_p;W|X_iQ)\\
      & -\sigma_s^*]^+,I(Y_p;X_i|WQ)\}-I(Y_p;X_1X_2|WQ)\\
      & -I(Y_s;U|WQ)-\min\{I(Y_s;W|Q),I(Y_p;W|Q)\}.
\end{split}
\end{equation}
Hence, it is required to show that $R_s^*\geq R_s^{D^o}$
\paragraph{If $\sigma_s^*=I(Y_s;W|X_iQ)\leq I(Y_p;W|X_1X_2Q)$}

\begin{itemize}
	\item If $I(Y_s;W|X_iQ)\leq I(Y_p;W|X_iQ)$,
\end{itemize}
from \eqref{eq:R_s^DoG} and \eqref{eq:R_s^*G} we have
\begin{equation}
\label{eq:R_s^Do1}
\begin{split}
R_s^{D^o}=I(Y_s;U|WQ)+I(Y_s;W|Q)=I(Y_s;UW|Q),
\end{split}
\end{equation}

\begin{equation}
\label{eq:R_s^*1}
\begin{split}
R_s^*=& 2 I(Y_s;U|WX_iQ)+2 I(Y_s;W|X_iQ)-I(Y_s;U|WQ)\\
      & +I(Y_p;X_j|WX_iQ)+\min\{\overbrace{I(Y_s;X_i|Q)}^{\nu_6},\\
      & \overbrace{I(Y_p;X_i|Q)+I(Y_p;W|X_iQ)-I(Y_s;W|X_iQ)}^{\nu_7},\\
      & \overbrace{I(Y_p;X_i|WQ)}^{\nu_8}\}-I(Y_p;X_1X_2|WQ)-\min\{\nu_4,\nu_5\}.
\end{split}
\end{equation}

When $\nu_6=\min\{\nu_6,\nu_7,\nu_8\}$ in \eqref{eq:R_s^*1}, then 

\begin{equation*}
\begin{split}
R_s^*&= I(Y_s;U|WX_iQ)-I(Y_s;U|WQ)+\overbrace{I(Y_s;UW|Q)}^{=R_s^{D^o}}\\
     & \quad +I(Y_s;W|X_iQ)-\min\{\nu_4,\nu_5\}+I(Y_p;X_j|WQ)\\
     & \quad +\overbrace{I(Y_s;X_i|UWQ)}^{\geq I(Y_p;X_i|WQ) \mbox{ from \eqref{eq:cond}}}-I(Y_p;X_1X_2|WQ)\\
     & \geq R_s^{D^o}.
\end{split}
\end{equation*}

When $\nu_7=\min\{\nu_6,\nu_7,\nu_8\}$ in \eqref{eq:R_s^*1}, then

\begin{equation*}
\begin{split}
R_s^*&= I(Y_s;U|WX_iQ)-I(Y_s;U|WQ)+\overbrace{I(Y_s;UW|X_iQ)}^{\geq R_s^{D^o}}\\
     & \quad +\nu_4-\min\{\nu_4,\nu_5\}\\
     & \geq R_s^{D^o}.
\end{split}
\end{equation*}

When $\nu_8=\min\{\nu_6,\nu_7,\nu_8\}$ in \eqref{eq:R_s^*1}, then

\begin{equation*}
\begin{split}
R_s^*&= I(Y_s;U|WX_iQ)-I(Y_s;U|WQ)+\overbrace{I(Y_s;UW|X_iQ)}^{\geq R_s^{D^o}}\\
     & \quad +I(Y_s;W|X_iQ)-\min\{\nu_4,\nu_5\}\\
     & \geq R_s^{D^o}.
\end{split}
\end{equation*}

\begin{itemize}
	\item If $I(Y_s;W|X_iQ)\geq I(Y_p;W|X_iQ)$,
\end{itemize}
$R_s^{D^o}$ will remain the same as \eqref{eq:R_s^Do1} and $R_s^*$ will be given by
\begin{equation}
\label{eq:R_s^*2}
\begin{split}
R_s^*=&2I(Y_s;U|WX_iQ)+2 I(Y_s;W|X_iQ)-I(Y_s;U|WQ)\\
      &+I(Y_p;X_j|WX_iQ)+I(Y_p;W|X_iQ)\\
      &-I(Y_s;W|X_iQ)+\min\{\overbrace{I(Y_s;X_i|Q)}^{\nu_9},\overbrace{I(Y_p;X_i|Q)}^{\nu_{10}}\}\\
      &-I(Y_p;X_1X_2|WQ)-\min\{\nu_4,\nu_5\}.
\end{split}
\end{equation}

When $\nu_9=\min\{\nu_9,\nu_{10}\}$ in \eqref{eq:R_s^*2}, then

\begin{equation*}
\begin{split}
R_s^* &=      I(Y_s;U|WX_iQ)-I(Y_s;U|WQ)+\overbrace{I(Y_s;UW|Q)}^{=R_s^{D^o}}\\
      & \quad +I(Y_p;W|X_iQ)-\min\{\nu_4,\nu_5 \}+ I(Y_s;X_i|UWQ)\\
      & \quad +I(Y_p;X_j|WX_iQ)-I(Y_p;X_1X_2|WQ)\\
      & \geq R_s^{D^o}.
\end{split}
\end{equation*}

When $\nu_{10}=\min\{\nu_9,\nu_{10}\}$ in \eqref{eq:R_s^*2}, then

\begin{equation*}
\begin{split}
R_s^* &=       I(Y_s;U|WX_iQ)-I(Y_s;U|WQ)+\overbrace{I(Y_s;UW|X_iQ)}^{\geq R_s^{D^o}}\\
      & \quad +I(Y_p;W|X_iQ)-\min\{\nu_4,\nu_5\} \\
      & \geq R_s^{D^o}
\end{split}
\end{equation*}

\paragraph{If $\sigma_s^*=I(Y_p;W|X_1X_2Q)\leq I(Y_s;W|X_iQ)$}
from \eqref{eq:R_s^DoG} and \eqref{eq:R_s^*G} we have
\begin{equation}
\label{eq:R_s^Do3}
R_s^{D^o}=I(Y_s;U|WQ)+\min\{\overbrace{I(Y_s;W|Q)}^{o_2},\overbrace{I(Y_p;W|X_1X_2Q)}^{o_3}\},
\end{equation}

\begin{equation}
\label{eq:R_s^*3}
\begin{split}
R_s^*= &2 I(Y_s;U|WX_iQ)-I(Y_s;U|WQ)+I(Y_p;W|X_iQ)\\
       &+I(Y_p;X_j|WX_iQ)+I(Y_p;W|X_1X_2Q)\\
       &+\min\{\overbrace{I(Y_p;X_i|Q)}^{\nu_{10}},\overbrace{I(Y_s;X_i|WQ)}^{\nu_{11}},\\
       &\overbrace{I(Y_s;WX_i|Q)-I(Y_p;W|X_1X_2Q)}^{\nu_{12}}\}-\min\{\nu_4,\nu_5\} \\
       &-I(Y_p;X_1X_2|WQ)
\end{split}
\end{equation}

\begin{itemize}
	\item If $o_2\leq o_3$ in \eqref{eq:R_s^Do3},
\end{itemize}
\begin{equation*}
R_s^{D^o}=I(Y_s;U|WQ)+I(Y_s;W|Q)=I(Y_s;UW|Q).
\end{equation*}
When $\nu_{10}=\min\{\nu_{10},\nu_{11},\nu_{12}\}$ in \eqref{eq:R_s^*3}, then
\begin{equation*}
\begin{split}
R_s^*& =I(Y_s;U|WX_iQ)-I(Y_s;U|WQ)+\nu_4\\
     &  \quad -\min\{\nu_4,\nu_5\}+\overbrace{I(Y_s;U|WX_iQ)+o_3}^{\geq R_s^{D^o}}.
\end{split}
\end{equation*}
Since $o_2\leq o_3$, then $\nu_{11}$ cannot be smaller than $\nu_{12}$.
When $\nu_{12}=\min\{\nu_{10},\nu_{11},\nu_{12}\}$, then
\begin{equation*}
\begin{split}
R_s^*&=I(Y_s;U|WX_iQ)-I(Y_s;U|WQ)+I(Y_p;W|X_iQ)\\
     & \quad -\min\{\nu_4,\nu_5\}+\overbrace{I(Y_s;UW|Q)}^{=R_s^{D^o}}+I(Y_s;X_i|UWQ)\\
     & \quad +I(Y_p;X_j|WX_iQ)-I(Y_p;X_1X_2|WQ)\\
     & \geq R_s^{D^o}.
\end{split}
\end{equation*}

\begin{itemize}
	\item If $o_2\geq o_3$
\end{itemize}

\begin{equation*}
R_s^{D^o}=I(Y_s;U|WQ)+I(Y_p;W|X_1X_2Q).
\end{equation*}

When $\nu_{10}=\min\{\nu_{10},\nu_{11},\nu_{12}\}$ in \eqref{eq:R_s^*3}, then
\begin{equation*}
\begin{split}
R_s^*&=I(Y_s;U|WX_iQ)-I(Y_s;U|WQ)\\
     & \quad +\overbrace{I(Y_s;U|WX_iQ)+I(Y_p;W|X_1X_2Q)}^{\geq R_s^{D^o}}\\
     & \geq R_s^{D^o}.
\end{split}
\end{equation*}

When $\nu_{11}=\min\{\nu_{10},\nu_{11},\nu_{12}\}$ in \eqref{eq:R_s^*3}, then
\begin{equation*}
\begin{split}
R_s^*&=I(Y_s;U|WX_iQ)-I(Y_s;U|WQ)+I(Y_s;X_i|UWQ)\\
     & \quad +I(Y_p;X_j|WX_iQ)-I(Y_p;X_1X_2|WQ)-\nu_4\\
     & \quad +I(Y_p;W|X_iQ)+\overbrace{I(Y_s;U|WQ)+I(Y_p;W|X_1X_2Q)}^{=R_s^{D^o}}\\
     & \geq R_s^{D^o}.
\end{split}
\end{equation*}
Since $o_2 \geq o_3$ then $\nu_{12}$ cannot be smaller than $\nu_{11}$.

\subsection {Necessity Part}

Suppose $\mathcal{R}^o(Z)\subseteq\mathcal{R}_i^r(Z)$ then $R_s^{A^o}$ must be not larger than $R_s^{A^r}$ which necessitates the satisfaction of \eqref{eq:cond}.

This concludes the proof.

\section{Proof of Theorem \ref{th:R^o_g}}
\label{App:R^o_g}
\subsection {Sufficiency Part}
We refer to Fig. \ref{fig:RR_C_RS} to determine the effect of varying $\lambda$ on $\mathcal{R}^o(Z)$ where $Z\in\mathcal{G}(P_1,P_2,P_s)$.
\begin{itemize}
	\item \emph{Point A:}
	
	\begin{equation*}
	R_p^A=\rho_p^o=\tau{\left(\frac{g_1^pP_1+g_2^pP_2}{g_s^p\lambda P_s+N_0}\right)}
	\end{equation*}
	\item \emph{Point D:}

	\begin{equation*}
	R_s^D= \rho_s^o=\tau{\left(\frac{P_sg_s^s}{g_1^sP_1+g_2^sP_2+N_0}\right)}
	\end{equation*}

  \item \emph{$R_s+R_p$:}

  \begin{equation*}
  \rho_{sp}^o= \tau{\left(\frac{g_1^pP_1+g_2^pP_2}{g_s^p\lambda P_s+N_0}\right)}
  +\tau\left(\frac{g_s^sP_s}{P_1g_1^s+P_2g_2^s+N_0}\right).
  \end{equation*}
\end{itemize}
It is obvious that if \eqref{eq:cond2} is satisfied, then $\rho_p^o$, $\rho_s^o$ and $\rho_{sp}^o$ increase as $\lambda$ decreases. Consequently, $\mathcal{R}^o(Z)$ at $\lambda=0$ includes all other $\mathcal{R}^o(Z)$ obtained at $0<\lambda\leq 1$. Hence, $\mathcal{R}^o(Z)$ coincides on $\mathcal{R}^o_g$ at $\lambda=0$.  

\subsection{Necessity Part}
Here we prove that the condition in \eqref{eq:cond2} is necessary for $\mathcal{R}^o(Z)$ to coincide on $\mathcal{R}^o_g$ at $\lambda=0$ and $Z\in\mathcal{G}(P_1,P_2,P_s)$. We do so by showing that, if \eqref{eq:cond2} is not satisfied, then for any two different values of $\lambda$ the corresponding rate regions do not contain one another. Assume that \eqref{eq:cond2} is not satisfied, then by referring to Fig. \ref{fig:RR_C_RS} we have:

\begin{itemize}
	\item \emph{Point A:}
	\begin{equation*}
	R_p^A=\tau\left(\frac{g_1^pP_1+g_2^pP_2}{g_s^p\lambda P_s+N_0}\right)
	\end{equation*}
	i.e., the $R_p^A$ decreases as $\lambda$ increases.
	\item \emph{Point D:}
	\begin{equation*}
	R_s^D=\tau\left({\frac{g_s^s\lambda P_s}{g_1^sP_1+g_2^sP_2+N_0}}\right) +\tau\left(\frac{g_s^p\bar{\lambda}P_s}{g_s^p\lambda P_s+N_0}\right)
	\end{equation*}
	then by substituting with $\bar{\lambda}=1-\lambda$ and differentiating $R_s^D$ with respect to $\lambda$ we get,
	\begin{equation}
	\label{eq:derivative1}	\frac{\partial{R_s^D}}{\partial{\lambda}}=\frac{\frac{1}{2\ln(2)}P_s(g_s^sN_0-g_s^p(P_1g_1^s+P_2g_2^s+N_0))}{(\lambda P_sg_s^p+N_0)(P_1g_1^s+P_2g_2^s+\lambda P_sg_s^s+N_0)}
\end{equation}
and since the condition \eqref{eq:cond2} is not satisfied, the numerator of \eqref{eq:derivative1} is always positive, therefore, $R_s^D$ increases as $\lambda$ increases.
\end{itemize}

Since $R_p^A$ decreases and $R_s^D$ increases as $\lambda$ increases, then for any two different values of $\lambda$ the corresponding rate regions will never contain one another. Hence the overall rate region $\mathcal{R}_g^o$ does not coincide on a certain $\mathcal{R}^o(Z)$ at a certain $\lambda$.
This concludes the proof.

\section{Proof of Theorem \ref{th:R^r_g}}
\label{App:R^r_g}
For the proof, we consider $i=1$, i.e., the secondary user is assumed to be able to decode the signal of primary user $1$.
\subsection{Sufficiency part}
In this part we show that, if inequality \eqref{eq:cond3} is satisfied then $\mathcal{R}_{1g}^r$ coincides on $\mathcal{R}^r_1(Z)$ at $\lambda=0$. We refer to Fig. \ref{fig:RR_CRSp} and determine the effect of varying $\lambda$ on $\mathcal{R}^r_1(Z)$, $Z\in\mathcal{G}(P_1,P_2,P_s)$ as follows. 
\subsubsection{At Point A}
\begin{equation*}
\begin{split}
R_p^{rA} =& \tau \left(\frac{g_2^pP_2}{g_s^p\lambda{P_s}+N_0}\right)+\min\left\{\tau\left(\frac{g_1^sP_1}{g_2^s P_2+N_0}\right) \right. ,\\
          & \left. \tau\left(\frac{g_1^p P_1}{g_s^p\lambda P_s+g_2^p P_2+N_0}\right) \right\}.
\end{split}
\end{equation*}
Therefore, $R_s^{rA}$ increases as $\lambda$ decreases.

\subsubsection{At Point F}
\begin{equation*}
\begin{split}
R_s^{rF}=\tau\left(\frac{g_s^s P_s}{g_2^s P_2+N_0}\right).
\end{split}
\end{equation*}
Hence, $R_s^{rF}$ does not depend on $\lambda$.
\subsubsection{$R_s^r+R_p^r=\rho_{sp}^r$}
\footnotesize
\begin{equation}
\label{eq:rho_sp^rg}
\begin{split}
\rho_{sp}^r=& \tau\left(\frac{g_s^s\lambda{P_s}}{g_2^sP_2+N_0}\right)+\tau\left(\frac{g_2^p P_2}{g_s^p\lambda{P_s}+N_0}\right)+\min \biggl\{  \\
            &  \overbrace{\tau\left(\frac{g_s^p\bar{\lambda}P_s+g_1^pP_1}{g_s^p\lambda P_s+g_2^pP_2+N_0}\right)}^{\mu_1},\overbrace{\tau\left(\frac{g_s^s\bar{\lambda}P_s+g_1^sP_1}{g_s^s\lambda{P_s}+g_2^sP_2+N_0}\right)}^{\mu_2} \\
            & \overbrace{\tau\left(\frac{g_s^p\bar{\lambda}P_s}{g_s^p\lambda{P_s}+g_2^pP_2+N_0}\right)+\tau\left(\frac{g_1^sP_1}{g_s^s\lambda{P_s}g_2^s+P_2+N_0}\right)}^{\mu_3},\\
            & \overbrace{\tau\left(\frac{g_1^pP_1}{g_s^p\lambda{P_s}+g_2^pP_2+N_0}\right)+ \tau\left(\frac{g_s^s\bar{\lambda}P_s}{g_s^s\lambda{P_s}+g_s^sP_2+N_0}\right)}^{\mu_4}\biggr\}
\end{split}
\end{equation}
\normalsize
\paragraph{When $\mu_1=\min\{\mu_1,\mu_2,\mu_3,\mu_4\}$ in \eqref{eq:rho_sp^rg}}
$$\rho_{sp}^r= \tau\left(\frac{g_s^s\lambda{P_s}}{g_2^sP_2+N_0}\right)+ \tau\left(\frac{g_s^p\bar{\lambda}P_s+g_1^pP_1+g_2^pP_2}{g_s^p\lambda{P_s}+N_0}\right).$$
\begin{equation*}
\begin{split}
\frac{\partial \rho_{sp}^r}{\partial \lambda}&=-\frac{0.5P_s(g_s^pg_2^sP_2+g_s^pN_0-g_s^sN_0)}{\ln 2(g_s^p\lambda{P_s}+N_0)(g_s^s\lambda P_s+g_2^sP_2+N_0)}\\
       &\leq 0 \quad \text{from \eqref{eq:cond3}}.
\end{split}
\end{equation*}
Hence, $\rho_{sp}^r$ decreases with $\lambda$. Note that, $\bar{\lambda}=1-\lambda$.
\paragraph{When $\mu_2=\min\{\mu_1,\mu_2,\mu_3,\mu_4\}$ in \eqref{eq:rho_sp^rg}}
$$\tau\left(\frac{g_s^sP_s+g_1^sP_1}{g_2^sP_2+N_0}\right)+\tau\left(\frac{g_2^pP_2}{g_s^p\lambda{P_s}+N_0}\right),$$
i.e., $\rho_{sp}^r$ decreases with $\lambda$.
\paragraph{When $\mu_3=\min\{\mu_1,\mu_2,\mu_3,\mu_4\}$ in \eqref{eq:rho_sp^rg}}
$$\rho_{sp}^r=\tau\left(\frac{g_s^s\lambda P_s+g_1^sP_1}{g_s^sP_2+N_0}\right)+\tau\left(\frac{g_s^p\bar{\lambda}P_s+g_2^pP_2}{g_s^p\lambda P_s+N_0}\right).$$
\begin{equation*}
\begin{split}
\frac{\partial \rho_{sp}^r}{\partial \lambda}&=-\frac{0.5P_s(g_s^pg_2^sP_2+g_s^pg_1^sP_1+g_s^pN_0-g_s^sN_0)}{\ln 2(g_s^p\lambda{P_s}+N_0)(g_s^s\lambda P_s+g_2^sP_2+g_1^sP_1+N_0)}\\
       &\leq 0 \quad \text{from \eqref{eq:cond3}}.
\end{split}
\end{equation*}
Thus, $\rho_{sp}$ decreases with $\lambda$.
\paragraph{When $\mu_4=\min\{\mu_1,\mu_2,\mu_3,\mu_4\}$ in \eqref{eq:rho_sp^rg}}
$$\rho_{sp}^r=\tau\left(\frac{g_s^sP_s}{g_s^sP_2+N_0}\right)+\tau\left(\frac{g_1^pP_1+g_2^pP_2}{g_s^p\lambda P_s+N_0}\right).$$
Therefore, $\rho_{sp}^r$ decreases with $\lambda$.

\subsubsection{$R_s^r+2R_p^r=\rho_{s2}^r$}
\footnotesize
\begin{equation*}
\begin{split}
\rho_{s2}^r=&2\tau\left(\frac{g_2^pP_2}{g_s^p\lambda P_s+N_0}\right)+2\sigma_p^*+\tau\left(\frac{g_s^s\lambda P_s}{g_2^sP_2+N_0}\right)\\
            &-\left[\sigma_p^*-\tau\left(\frac{g_1^sP_1}{g_s^s\lambda P_s+g_2^sP_2+N_0}\right)\right]^++\min\biggl\{\\
            &\tau\left(\frac{g_s^p\bar\lambda P_s}{g_s^p\lambda P_s+g_2^pP_2+N_0}\right), \tau\left(\frac{g_s^p\bar{\lambda}P_s+g_1^pP_1}{g_s^p\lambda{P_s}+g_2^pP_2+N_0}\right)-\sigma_p^*,\\
            &\tau\left(\frac{g_s^s\bar\lambda P_s}{g_s^s\lambda P_s+g_2^s P_2+N_0}\right), \tau\left(\frac{g_s^s\bar\lambda P_s}{g_s^s\lambda P_s+g_1^pP_1+g_2^pP_2+N_0}\right)\\
            &+\left[\tau\left(\frac{g_1^sP_1}{g_s^s\lambda P_s+g_2^sP_2+N_0}\right)-\sigma_p^*\right]^+\biggr\},\\
\sigma_p^*=&\min\left\{\tau\left(\frac{g_1^pP_1}{g_s^p\lambda P_s+g_2^pP_2+N_0}\right),\tau\left(\frac{g_1^sP_1}{g_2^sP_2+N_0}\right)\right\}.            
\end{split}
\end{equation*}
\normalsize
\paragraph{At $\sigma_p^*=\tau\left(\frac{g_1^pP_1}{g_s^p\lambda P_s+g_2^pP_2+N_0}\right)\leq \tau\left(\frac{g_1^sP_1}{g_2^sP_2+N_0}\right)$}
\footnotesize
\begin{equation*}
\begin{split}
\rho_{s2}^r=& 2\tau\left(\frac{g_1^pP_1+g_2^pP_2}{g_s^p\lambda P_s+N_0}\right)+\tau\left(\frac{g_s^s\lambda P_s}{g_2^sP_2+N_0}\right)\\
            &-\left[\tau\left(\frac{g_1^pP_1}{g_s^p\lambda P_s+g_2^pP_2+N_0}\right)-\tau\left(\frac{g_1^sP_1}{g_s^s\lambda P_s+g_2^sP_2+N_0}\right)\right]^+\\
            &+\min\biggl\{\tau \left(\frac{g_s^p\bar{\lambda}P_s}{g_s^p\lambda P_s+g_1^pP_1+g_2^pP_2+N_0}\right),\\
            &\tau\left(\frac{g_s^s\bar{\lambda} P_s}{g_s^s\lambda P_s+g_1^sP_1+g_2^sP_2+N_0}\right)+\biggl[\tau\left(\frac{g_1^sP_1}{g_s^s\lambda P_s+g_2^sP_2+N_0}\right)\\
            &-\tau \left(\frac{g_1^pP_1}{g_s^p\lambda P_s+g_2^pP_2+N_0}\right)\biggr]^+, \tau\left(\frac{g_s^s\bar{\lambda} P_s}{g_s^s\lambda P_s+g_2^sP_2+N_0}\right)\biggr\}.
\end{split}
\end{equation*}
\normalsize
\begin{itemize}
	\item If $\tau\left(\frac{g_1^pP_1}{g_s^p\lambda P_s+g_2^pP_2+N_0}\right)\leq\tau\left(\frac{g_1^sP_1}{g_s^s\lambda P_s+g_2^sP_2+N_0}\right)$
\end{itemize}
\footnotesize
\begin{equation}
\label{eq:rho_s2^rg1}
\begin{split}
\rho_{s2}^r=& 2\tau\left(\frac{g_1^pP_1+g_2^pP_2}{g_s^p\lambda P_s+N_0}\right)+\tau\left(\frac{g_s^s\lambda P_s}{g_2^sP_2+N_0}\right)+\min\biggl\{\\
            & \overbrace{\tau \left(\frac{g_s^p\bar{\lambda}P_s}{g_s^p\lambda P_s+g_1^pP_1+g_2^pP_2+N_0}\right)}^{\mu_5},\overbrace{\tau\left(\frac{g_s^s\bar{\lambda} P_s}{g_s^s\lambda P_s+g_2^sP_2+N_0}\right)}^{\mu_6},\\
            &\overbrace{\tau\left(\frac{g_s^s\bar{\lambda}P_s+g_1^sP_1}{g_s^s\lambda P_s+g_2^sP_2+N_0}\right)-\tau \left(\frac{g_1^pP_1}{g_s^p\lambda P_s+g_2^pP_2+N_0}\right)}^{\mu_7}\biggr\}.
\end{split}
\end{equation}
\normalsize

When $\mu_5=\min\{\mu_5,\mu_6,\mu_7\}$ in \eqref{eq:rho_s2^rg1} we have
\begin{equation}
\label{eq:rho_s2^rg12}
\begin{split}
\rho_{s2}^r=&\tau\left(\frac{g_s^p\bar\lambda P_s+g_1^pP_1+g_2pP_2}{g_s^p\lambda P_s+N_0}\right)+\tau\left(\frac{g_s^s\lambda P_s}{g_2^sP_2+N_0}\right)\\
            & +\tau \left(\frac{g_1^pP_1+g_2^pP_2}{g_s^p\lambda P_s+N_0}\right).
\end{split}
\end{equation}
Note that, the third term in \eqref{eq:rho_s2^rg12} is decreasing with $\lambda$, and the first derivative of the first two terms with respect to $\lambda$ is given by,
\begin{equation*}
\begin{split}
&-\frac{0.5P_s(g_s^pg_2^sP_2+g_s^pN_0-g_s^sN_0)}{\ln 2(g_s^p\lambda P_s+N_0)(g_s^s\lambda P_s+g_2^sP_2+N_0)}\\
&-\frac{0.5g_s^pP_s(g_2^pP_2+g_1^pP_1)}{\ln 2(g_s^p\lambda P_s+N_0)(g_s^p\lambda P_s+g_1^pP_1+g_2^pP_2+N_0)}.
\end{split}
\end{equation*}
Since inequality \eqref{eq:cond3} is satisfied for user $1$, then the derivative is negative and consequently $\rho_{s2}^r$ is decreasing with $\lambda$.

When $\mu_6=\min\{\mu_5,\mu_6,\mu_7\}$ in \eqref{eq:rho_s2^rg1}, we have
$$\rho_{s2}^r=2\tau \left(\frac{g_1^pP_1+g_2^pP_2}{g_s^p\lambda P_s+N_0}\right)+\tau\left(\frac{g_s^sP_s}{g_2^sP_2+N_0}\right),$$
i.e., $\rho_{s2}^r$ is decreasing with $\lambda$.

When $\mu_7=\min\{\mu_5,\mu_6,\mu_7\}$ in \eqref{eq:rho_s2^rg1}, we have
\begin{equation*}
\begin{split}
\rho_{s2}^r=&2\tau \left(\frac{g_1^pP_1+g_2^pP_2}{g_s^p\lambda P_s+N_0}\right)-\tau \left(\frac{g_1^pP_1}{g_s^p\lambda P_s+g_2^pP_2+N_0}\right)\\
            &+\tau\left(\frac{g_s^sP_s+g_1^sP_1}{g_2^sP_2+N_0}\right).
\end{split}
\end{equation*}
Hence, $\rho_{s2}^r$ is decreasing with $\lambda$.

\begin{itemize}
	\item If $\tau\left(\frac{g_1^sP_1}{g_s^s\lambda P_s+g_2^sP_2+N_0}\right)\leq\tau\left(\frac{g_1^pP_1}{g_s^p\lambda P_s+g_2^pP_2+N_0}\right)$
\end{itemize}

\footnotesize
\begin{equation}
\label{eq:rho_s2^rg2}
\begin{split}
\rho_{s2}^r=&2\tau \left(\frac{g_1^pP_1+g_2^pP_2}{g_s^p\lambda P_s+N_0}\right)-\tau \left(\frac{g_1^pP_1}{g_s^p\lambda P_s+g_2^pP_2+N_0}\right)\\
            &+\min\biggl\{\overbrace{\tau \left(\frac{g_s^p\bar{\lambda}P_s}{g_s^p\lambda P_s+g_1^pP_1+g_2^pP_2+N_0}\right)}^{\mu_5},\\
            &\overbrace{\tau\left(\frac{g_s^s\bar{\lambda} P_s}{g_s^s\lambda P_s+g_1^sP_1+g_2^sP_2+N_0}\right)}^{\mu_8}\biggr\}+\tau\left(\frac{g_s^s\lambda P_s+g_1^sP_1}{g_2^sP_2+N_0}\right)
\end{split}
\end{equation}
\normalsize

When $\mu_5=\min\{\mu_5,\mu_8\}$ in \eqref{eq:rho_s2^rg2}, then
\begin{equation}
\label{eq:rho_s2^rg22}
\begin{split}
\rho_{s2}^r=&\tau \left(\frac{g_1^pP_1+g_2^pP_2}{g_s^p\lambda P_s+N_0}\right)-\tau \left(\frac{g_1^pP_1}{g_s^p\lambda P_s+g_2^pP_2+N_0}\right)\\
            &+\tau \left(\frac{g_1^pP_1+g_2^pP_2}{g_s^p\lambda P_s+N_0}\right)+\tau\left(\frac{g_s^s\lambda P_s+g_1^sP_1}{g_2^sP_2+N_0}\right)\\
            &+\tau \left(\frac{g_s^p\bar{\lambda}P_s}{g_s^p\lambda P_s+g_1^pP_1+g_2^pP_2+N_0}\right).
\end{split}
\end{equation}
For all values of $0\leq\lambda\leq 1$, the difference between the first two terms in \eqref{eq:rho_s2^rg22} is always positive and decreasing as $\lambda$ increases. The first derivative of the last three terms in \eqref{eq:rho_s2^rg22} with respect to $\lambda$ is given by,
\begin{equation*}
\begin{split}
&-\frac{0.5P_s(g_s^pg_2^sP_2+g_s^pg_1^sP_1+g_s^pN_0-g_s^sN_0)}{\ln 2(g_s^p\lambda P_s+N_0)(g_s^s\lambda P_s+g_2^s P_2+g_1^s P_1+N_0)}\\
&\leq 0 \quad \text{from \eqref{eq:cond3}}.
\end{split}
\end{equation*}
Therefore, $\rho_{s2}^r$ is decreasing with $\lambda$.

When $\mu_8=\min\{\mu_5,\mu_8\}$ in \eqref{eq:rho_s2^rg2}, then
\begin{equation*}
\begin{split}
\rho_{s2}^r=&2\tau \left(\frac{g_1^pP_1+g_2^pP_2}{g_s^p\lambda P_s+N_0}\right)-\tau \left(\frac{g_1^pP_1}{g_s^p\lambda P_s+g_2^pP_2+N_0}\right)\\
            &+\tau\left(\frac{g_s^sP_s+g_1^sP_1}{g_2^sP_2+N_0}\right).
\end{split}
\end{equation*}
In the above formula, the difference between the first two terms is always positive and decreasing as $\lambda$ increases. The third term does not depend on $\lambda$. Hence, $\rho_{s2}^r$ is decreasing with $\lambda$.

\paragraph{At $\sigma_p^*=\tau\left(\frac{g_1^sP_1}{g_2^sP_2+N_0}\right)\leq\tau \left(\frac{g_1^pP_1}{g_s^p\lambda P_s+g_2^pP_2+N_0}\right)$}

\footnotesize
\begin{equation}
\label{eq:rho_s2^rg3}
\begin{split}
\rho_{s2}^r=&2\tau \left(\frac{g_2^pP_2}{g_s^p\lambda P_s+N_0}\right)+\tau\left(\frac{g_1^sP_1}{g_2^sP_2+N_0}\right)+\tau\left(\frac{g_s^s\lambda P_s+g_1^sP_1}{g_2^sP_2+N_0}\right)\\
            &+\min\biggl\{\overbrace{\tau\left(\frac{g_s^s\bar{\lambda} P_s}{g_s^s\lambda P_s+g_1^sP_1+g_2^sP_2+N_0}\right)}^{\mu_8},\\
            &\overbrace{\tau \left(\frac{g_s^p\bar{\lambda}P_s}{g_s^p\lambda P_s+g_2^p P_2+N_0}\right)}^{\mu_9},\\
            &\overbrace{\tau \left(\frac{g_s^p\bar{\lambda}P_s+g_1^p P_1}{g_s^p\lambda P_s+g_2^p P_2+N_0}\right)-\tau\left(\frac{g_1^sP_1}{g_2^sP_2+N_0}\right)}^{\mu_{10}}\biggr\}.
\end{split}
\end{equation}
\normalsize

When $\mu_8=\min\{\mu_8,\mu_9,\mu_{10}\}$ in \eqref{eq:rho_s2^rg3}, we have
\begin{equation*}
\begin{split}
\rho_{s2}^r=&2\tau \left(\frac{g_2^pP_2}{g_s^p\lambda P_s+N_0}\right)+\tau\left(\frac{g_1^sP_1}{g_2^sP_2+N_0}\right)\\
            &+\tau\left(\frac{g_s^sP_s+g_1^sP_1}{g_2^sP_2+N_0}\right).
\end{split}
\end{equation*}
That is, $\rho_{s2}^r$ is decreasing with $\lambda$.

When $\mu_9=\min\{\mu_8,\mu_9,\mu_{10}\}$ in \eqref{eq:rho_s2^rg3}, we have
\begin{equation}
\label{eq:rho_s2^rg32}
\begin{split}
\rho_{s2}^r=&\tau \left(\frac{g_2^pP_2}{g_s^p\lambda P_s+N_0}\right)+\tau\left(\frac{g_1^sP_1}{g_2^sP_2+N_0}\right)\\
            &+\tau\left(\frac{g_s^s\lambda P_s+g_1^sP_1}{g_2^sP_2+N_0}\right)+\tau \left(\frac{g_s^p\bar{\lambda}P_s+g_2^p P_2}{g_s^p\lambda P_s+N_0}\right).
\end{split}
\end{equation}
The first term in \eqref{eq:rho_s2^rg32} is decreasing with $\lambda$ for all values of $\lambda$. The first derivative of the other terms with respect to $\lambda$ is given by
\begin{equation*}
\begin{split}
&-\frac{0.5P_s(g_s^pg_s^sP_2+g_s^pg_1^sP_1+g_s^pN_0-g_s^sN_0)}{\ln 2(g_s^p\lambda P_s+N_0)(g_s^s\lambda P_s+g_2^sP_2+g_1^sP_1+N_0)}\\
&\leq 0 \quad \text{from \eqref{eq:cond3}}.
\end{split}
\end{equation*}
Hence, $\rho_{s2}^r$ is decreasing with $\lambda$.

When $\mu_{10}=\min\{\mu_8,\mu_9,\mu_{10}\}$ in \eqref{eq:rho_s2^rg3}, we have
\begin{equation}
\label{eq:rho_s2^rg33}
\begin{split}
\rho_{s2}^r=&\tau \left(\frac{g_2^pP_2}{g_s^p\lambda P_s+N_0}\right)+\tau \left(\frac{g_s^p\bar\lambda P_s+g_1^pP_1+g_2^pP_2}{g_s^p\lambda P_s+N_0}\right)\\
            &+\tau\left(\frac{g_s^s\lambda P_s+g_1^sP_1}{g_2^sP_2+N_0}\right).
\end{split}
\end{equation}
The first term in \eqref{eq:rho_s2^rg33} is decreasing with $\lambda$, and the first derivative of the other three terms with respect to $\lambda$ is given by,
\begin{equation*}
\begin{split}
&-\frac{0.5P_s(g_s^pg_2^sP_2+g_s^pg_1^sP_1+g_s^pN_0-g_s^sN_0)}{\ln 2(g_s^p\lambda P_s+N_0)(g_s^s\lambda P_s+g_1^sP_1+g_2^sP_2+N_0)}\\
&\leq 0 \quad \text{from \eqref{eq:cond3}}. 
\end{split}
\end{equation*}
Thus, $\rho_{s2}^r$ is decreasing with $\lambda$.

\subsubsection{$2R_s^r+R_p^r=\rho_{2p}^r$}
From \eqref{eq:cond3}, $$\sigma_s^*=\tau\left(\frac{g_s^s\bar{\lambda} P_s}{g_s^s\lambda P_s+g_2^sP_2+N_0}\right).$$
\footnotesize
\begin{equation*}
\begin{split}
\rho_{2p}^r=&2\tau\left(\frac{g_s^s P_s}{g_2^sP_2+N_0}\right)+\tau \left(\frac{g_2^pP_2}{g_s^p\lambda P_s+N_0}\right)\\
            &-\left[\tau\left(\frac{g_s^s\bar{\lambda} P_s}{g_s^s\lambda P_s+g_2^sP_2+N_0}\right)-\tau \left(\frac{g_s^p\bar{\lambda}P_s}{g_s^p\lambda P_s+g_2^p P_2+N_0}\right)\right]^+\\
            &+\min\biggl\{\tau\left(\frac{g_1^sP_1}{g_s^sP_s+g_2^sP_2+N_0}\right),\tau \left(\frac{g_1^p P_1}{g_s^p P_s+g_2^p P_2+N_0}\right)\\
            &+\left[\tau \left(\frac{g_s^p\bar{\lambda}P_s}{g_s^p\lambda P_s+g_2^p P_2+N_0}\right)-\tau\left(\frac{g_s^s\bar{\lambda} P_s}{g_s^s\lambda P_s+g_2^sP_2+N_0}\right)\right]^+,\\
            &\tau \left(\frac{g_1^pP_1}{g_s^p\lambda P_s+g_2^pP_2+N_0}\right)\biggr\}.
\end{split}
\end{equation*}
\normalsize
\paragraph{If $\tau\left(\frac{g_s^s\bar{\lambda} P_s}{g_s^s\lambda P_s+g_2^sP_2+N_0}\right)\leq \tau \left(\frac{g_s^p\bar{\lambda}P_s}{g_s^p\lambda P_s+g_2^p P_2+N_0}\right)$}
\footnotesize
\begin{equation}
\label{eq:rho_2p^rg1}
\begin{split}
\rho_{2p}^r=&2\tau\left(\frac{g_s^s P_s}{g_2^sP_2+N_0}\right)+\tau \left(\frac{g_2^pP_2}{g_s^p\lambda P_s+N_0}\right)\\
            &+\min\biggl\{\overbrace{\tau\left(\frac{g_1^sP_1}{g_s^s\lambda P_s+g_2^sP_2+N_0}\right)}^{\mu_{11}},\overbrace{\tau \left(\frac{g_1^pP_1}{g_s^p\lambda P_s+g_2^pP_2+N_0}\right)}^{\mu_{12}},\\
            &\overbrace{\tau \left(\frac{g_1^p P_1+g_s^p\bar{\lambda}P_s}{g_s^p\lambda P_s+g_2^p P_2+N_0}\right)-\tau\left(\frac{g_s^s\bar{\lambda} P_s}{g_s^s\lambda P_s+g_2^sP_2+N_0}\right)}^{\mu_{13}}\biggr\}.
\end{split}
\end{equation}
\normalsize
\begin{itemize}
	\item When $\mu_{11}=\min\{\mu_{11},\mu_{12},\mu_{13}\}$ in \eqref{eq:rho_2p^rg1}, then
\end{itemize}
\begin{equation*}
\begin{split}
\rho_{2p}^r=&2\tau\left(\frac{g_s^s P_s}{g_2^sP_2+N_0}\right)+\tau \left(\frac{g_2^pP_2}{g_s^p\lambda P_s+N_0}\right)\\
            &\tau\left(\frac{g_1^sP_1}{g_s^s\lambda P_s+g_2^sP_2+N_0}\right).
\end{split}
\end{equation*}
It is clear that, $\rho_{2p}^r$ is decreasing with $\lambda$.
\begin{itemize}
	\item When $\mu_{12}=\min\{\mu_{11},\mu_{12},\mu_{13}\}$ in \eqref{eq:rho_2p^rg1}, then
\end{itemize}
\begin{equation*}
\begin{split}
\rho_{2p}^r=&2\tau\left(\frac{g_s^s P_s}{g_2^sP_2+N_0}\right)+\tau \left(\frac{g_2^pP_2}{g_s^p\lambda P_s+N_0}\right)\\
            &+\tau \left(\frac{g_1^pP_1}{g_s^p\lambda P_s+g_2^pP_2+N_0}\right).
\end{split}
\end{equation*}
It is also clear that $\rho_{2p}^r$ is decreasing with $\lambda$.
\begin{itemize}
	\item When $\mu_{13}=\min\{\mu_{11},\mu_{12},\mu_{13}\}$ in \eqref{eq:rho_2p^rg1}, then
\end{itemize}
\begin{equation*}
\begin{split}
\rho_{2p}^r=&2\tau\left(\frac{g_s^s P_s}{g_2^sP_2+N_0}\right)+\tau \left(\frac{g_1^p P_1}{g_s^p P_s+g_2^p P_2+N_0}\right)\\
            &+\tau \left(\frac{g_s^p\bar\lambda P_s+g_2^pP_2}{g_s^p\lambda P_s+N_0}\right)-\tau\left(\frac{g_s^s\bar{\lambda} P_s}{g_s^s\lambda P_s+g_2^sP_2+N_0}\right).
\end{split}
\end{equation*}
\begin{equation*}
\begin{split}
\frac{\partial \rho_{2p}^r}{\partial \lambda}=&-\frac{0.5P_s(g_s^pg_2^sP_2+g_s^pN_0-g_s^sN_0)}{\ln 2(g_s^p\lambda P_s+N_0)(g_s^s\lambda P_s+g_2^s P_2+N_0)}\\
                                              &\leq 0 \quad\text{from \eqref{eq:cond3}}.
\end{split}
\end{equation*}
Thus, $\rho_{2p}^r$ is decreasing with $\lambda$.

\paragraph{If $\tau \left(\frac{g_s^p\bar{\lambda}P_s}{g_s^p\lambda P_s+g_2^p P_2+N_0}\right)\leq \tau\left(\frac{g_s^s\bar{\lambda} P_s}{g_s^s\lambda P_s+g_2^sP_2+N_0}\right)$}

\footnotesize
\begin{equation*}
\begin{split}
\rho_{2p}^r=&2\tau\left(\frac{g_s^s P_s}{g_2^sP_2+N_0}\right)+\tau \left(\frac{g_s^p\bar\lambda P_s+g_2^pP_2}{g_s^p\lambda P_s+N_0}\right)\\
            &-\tau\left(\frac{g_s^s\bar{\lambda} P_s}{g_s^s\lambda P_s+g_2^sP_2+N_0}\right)+\min\biggl\{\tau\left(\frac{g_1^sP_1}{g_s^sP_s+g_2^sP_2+N_0}\right),\\
            &\tau \left(\frac{g_1^p P_1}{g_s^p P_s+g_2^p P_2+N_0}\right)\biggr\}.
\end{split}
\end{equation*}
\normalsize
\begin{equation*}
\begin{split}
\frac{\partial \rho_{2p}^r}{\partial \lambda}=&-\frac{0.5P_s(g_s^pg_2^sP_2+g_s^pN_0-g_s^sN_0)}{\ln 2(g_s^p\lambda P_s+N_0)(g_s^s\lambda P_s+g_2^s P_2+N_0)}\\
                                              &\leq 0 \quad\text{from \eqref{eq:cond3}}.
\end{split}
\end{equation*}
Therefore, $\rho_{2p}^r$ is decreasing with $\lambda$.

Thus, since we showed that if \eqref{eq:cond3} is satisfied, assuming that the secondary receiver can decode the signal of primary user $1$, then $\rho_{p}^r$, $\rho_{sp}^r$, $\rho_{s2}^r$ and $\rho_{2p}^r$ decrease with $\lambda$, whereas $\rho_{s}^r$ does not depend on $\lambda$, hence, $\mathcal{R}^r_1(Z)$ at $\lambda=0$ coincides on $\mathcal{R}^r_{1g}$. And for any $\lambda_1$ and $\lambda_2$ such that $\lambda_1>\lambda_2$, $\mathcal{R}^r_1(Z)$ at $\lambda_1$ is a subset of $\mathcal{R}^r_1(Z)$ at $\lambda_2$.

\subsection{Necessity Part}
In this part of the proof we show that, if condition \eqref{eq:cond3} is not satisfied then $\mathcal{R}^r_{1g}$ does not coincide on any $\mathcal{R}^r_1(Z)$ for all values of $\lambda$. So, assume that \eqref{eq:cond3} is not satisfied, i.e., \begin{equation}
\label{eq:cond3_n}
N_0g_s^s> g_s^pg_2^sP_2+g_s^pN_0.
\end{equation}
By referring to Fig. \ref{fig:RR_CRSp}, the effect of $\lambda$ on $\mathcal{R}^r_1(Z)$ at points $A$ and $F$ is determined as follows.
\subsubsection{At Point A}
\begin{equation*}
\begin{split}
R_p^{rA}=&\tau \left(\frac{g_2^pP_2}{g_s^p\lambda P_s+N_0}\right)+\min\biggl\{\tau\left(\frac{g_1^sP_1}{g_2^sP_2+N_0}\right),\\
         &\tau \left(\frac{g_1^pP_1}{g_s^p\lambda P_s+g_2^pP_2+N_0}\right)\biggr\}.
\end{split}
\end{equation*}
It is clear that $R_p^{rA}$ is decreasing with $\lambda$.

\subsubsection{At Point F}

\begin{equation*}
\begin{split}
R_s^{rF}=\tau\left(\frac{g_s^s\lambda P_s}{g_2^sP_2+N_0}\right)+\tau \left(\frac{g_s^p\bar{\lambda}P_s}{g_s^p\lambda P_s+N_0}\right).
\end{split}
\end{equation*}
\begin{equation*}
\begin{split}
\frac{\partial R_s^{rF}}{\partial\lambda}&=\frac{0.5P_s(g_s^sN_0-(g_s^pg_s^sP_2+g_s^pN_0))}{\ln 2(g_s^p\lambda P_s+N_0)(g_s^s\lambda P_s+g_2^s P_2+N_0)}\\
                                         &>0\quad\text{from \eqref{eq:cond3_n}}.
\end{split}
\end{equation*}
Consequently, $R_s^{rF}$ is increasing with $\lambda$.

So, for any two different values of $\lambda$, the corresponding rate regions $\mathcal{R}^r_1(Z)$ do not include one another, thus $\mathcal{R}^r_{1g}$ does not coincide on $\mathcal{R}^r_1(Z)$ at any value of $\lambda$.

\section{Proof of Theorem \ref{th:noextention}}
\label{App:nxt}
From the definition of $\delta'^o(Z)$ and $\delta_1'^r(Z)$ it is clear that $\delta^o(Z)\subseteq\delta'^o(Z)$ and $\delta_1^r(Z)\subseteq\delta_1'^r(Z)$. Consequently, $\mathcal{R}^o(Z)\subseteq\mathcal{R}'^o(Z)$, $\mathcal{R}_1^r\subseteq\mathcal{R}_1'^r(Z)$ and $\mathcal{R}_1(Z)\subseteq\mathcal{R}_1'(Z)$. However, we show that if there exists $Z\in\mathcal{P}^*$ such that a rate tuple $(R_s,R_p)$ belongs to $\mathcal{R}_1'(Z)$ but does not belong to $\mathcal{R}_1(Z)$, then there exists another $Z'\in\mathcal{P}^*$ for which $(R_s,R_p)$ belongs to $\mathcal{R}_1(Z')$.

Following a similar procedure to that used in the proof of Theorem \ref{th:R^o(Z)}, the region $\mathcal{R}'^o(Z)$ is defined by:
\begin{equation}
\label{eq:Rp'}
R_p\leq I(Y_p;X_1X_2|WQ),
\end{equation}
\begin{equation}
\label{eq:R_s}
\begin{split}
R_s\leq & I(Y_s;U|WQ)+\min\{I(Y_s;W|Q),\\
        &I(Y_p;WX_1|X_2Q),I(Y_p;WX_2|X_1Q)\},
\end{split}
\end{equation}
\begin{equation}
\label{eq:Rs'+Rp'}
\begin{split}
R_s+R_p\leq & I(Y_s;U|WQ)+I(Y_p;X_1X_2|WQ)+\\
            &\min\{I(Y_s;W|Q),I(Y_p;W|Q)\}.
\end{split}
\end{equation}

\subsection{For $\mathcal{R}^o(Z)$}
\label{subsec:R^o(Z)}
Suppose that at a certain $Z\in\mathcal{P}^*$, $R_s'>I(Y_s;U|WQ)+I(Y_p;W|X_1X_2Q)$, hence, the rate tuple $(R_s',R_p')\in\mathcal{R}'^o(Z)$ but $(R_s',R_p')\notin\mathcal{R}^o(Z)$. From \eqref{eq:Rp'}-\eqref{eq:Rs'+Rp'}, $(R_s',R_p')$ has to satisfy
\begin{eqnarray}
\label{eq:Rs'e}
R_s\leq I(Y_s;UW|Q)=I(Y_s;X_s|Q),\\
R_p<I(Y_p;X_1X_2|Q).
\end{eqnarray}

Now, assume another $Z'\in\mathcal{P}^*$ such that $W=\phi$, i.e., no rate-splitting. At this $Z'$, $\mathcal{R}^o(Z')$ is given by
\begin{eqnarray}
\label{eq:R_pnrs}
R_s\leq I(Y_s;X_s|Q),\\
R_p\leq I(Y_p;X_1X_2|Q).
\end{eqnarray}

Then it is clear that $(R_s',R_p')\in\mathcal{R}^o(Z')$. Thus, $$\mathcal{R}'^o(Z)\subseteq\mathcal{R}^o(Z)\cup\mathcal{R}^o(Z').$$

\subsection {For $\mathcal{R}_1'^r(Z)$}

First, for a point $(R_s'',R_p'')$ such that $R_s''>I(Y_s;U|WQ)+I(Y_p;W|X_1X_2Q)$ at a specific $Z\in\mathcal{P}^*$, a similar argument as in the above subsection (Subsection \ref{subsec:R^o(Z)}), or in Lemma 2 of \cite{Chong}, can show that there exists $Z''\in\mathcal{P}^*$ such that $(R_s'',R_p'')\in\mathcal{R}_1^r(Z'')$.

Second, for another point $(R_s^{**},R_p^{**})$ such that $R_p^{**}>I(Y_p;X_2|WX_1Q)+I(Y_s;X_1|UWQ)$, or in other words $R_1^{**}>I(Y_s;X_1|UWQ)$, in this case, $\delta_1'^r(Z)\subset\delta'^o(Z)$. And since $\mathcal{R}'^o(Z)$ is the set of $(R_s,R_p)$ corresponding to $\delta'^o(Z)$ for which $R_s=S+T$ and $R_p=R_1+R_2$, then $\mathcal{R}_1'^r(Z)\subset\mathcal{R}'^o(Z)$. Moreover, it has been shown in the above subsection (Subsection \ref{subsec:R^o(Z)}) that $\mathcal{R}'^o(Z)\subseteq\mathcal{R}^o(Z)\cup\mathcal{R}^o(Z')$. Therefore, $$\mathcal{R}_1'^r(Z)\subseteq\mathcal{R}_1^r(Z)\cup\mathcal{R}_1^r(Z'')\cup\mathcal{R}^o(Z)\cup\mathcal{R}^o(Z').$$ Consequently,$$\mathcal{R}_1'=\mathcal{R}_1.$$

\end{document}